\newcommand{\MMS}{\mathsf{MMS}}
\newcommand{\bX}{\mathbf{X}}
\newenvironment{proofof}[1]{{\vspace*{5pt} \noindent\bf Proof of #1:  }}{\hfill\rule{2mm}{2mm}\vspace*{5pt}}
\newtheorem{claim}{Claim}[section]
\newtheorem{theorem}{Theorem}[section]
\newtheorem{corollary}{Corollary}[theorem]
\newtheorem{lemma}[theorem]{Lemma}
\newtheorem{observation}[theorem]{Observation}
\newtheorem{definition}[theorem]{Definition}
\DeclareMathOperator*{\argmax}{argmax}
\DeclareMathOperator*{\argmin}{argmin}
\title{Multi-agent Online Scheduling: MMS Allocations for Indivisible Items}
\author{Shengwei Zhou
	\thanks{IOTSC, University of Macau, China. \{yc17423,yb97439,xiaoweiwu\}@um.edu.mo}
	\and
	Rufan Bai $^*$
	\and
	Xiaowei Wu $^*$
}
\begin{document}
	
	\maketitle
	
	\begin{abstract}
		We consider the problem of fairly allocating a sequence of indivisible items that arrive online in an arbitrary order to a group of $n$ agents with additive normalized valuation functions.
		We consider both the allocation of goods and chores, and propose algorithms for approximating maximin share (MMS) allocations.
		When agents have identical valuation functions the problem coincides with the semi-online machine covering problem (when items are goods) and load balancing problem (when items are chores), for both of which optimal competitive ratios have been achieved.
		In this paper we consider the case when agents have general additive valuation functions.
		For the allocation of goods we show that no competitive algorithm exists even when there are only three agents and propose an optimal $0.5$-competitive algorithm for the case of two agents.
		For the allocation of chores we propose a $(2-1/n)$-competitive algorithm for $n\geq 3$ agents and a $\sqrt{2}\approx 1.414$-competitive algorithm for two agents.
		Additionally, we show that no algorithm can do better than $15/11\approx 1.364$-competitive for two agents.
	\end{abstract}

\section{Introduction} \label{sec:introduction}

Traditional machine learning algorithms usually focus on global objectives such as efficiency or maximizing profit and have no guarantee of fairness between individuals.
As learning decision-making is increasingly involved in our daily life, the problem of algorithm bias has received increasing attention.
Motivated by several real-world problems in which decision processes impact human beings who must be treated fairly and unbiasedly, there is increasing attention on fair learning algorithms~\cite{conf/icml/ChenFLM19,conf/icml/BackursIOSVW19,conf/icml/LiLSWW21}.
We focus on the online scheduling problem where jobs arrive online and each incoming job should be assigned by the algorithm immediately.
Contrast to the classic online scheduling models that focus on allocation to machines and object on optimizing global objectives, e.g., maximizing the minimum load~\cite{journals/tcs/TanW07,conf/nips/MaxMDM22,journals/tcs/Cheng22a}, or minimizing the maximum load~\cite{journals/scheduling/KellererKG15}, we consider the allocation to agents and study the online scheduling problem in a multi-agent perspective.
The problem coincides with the online maximin share (MMS) allocation problem proposed by~\cite{journals/corr/abs-2208-08782}, which falls into the class of fair allocation problems.

In the fair allocation problem, there is a set $M$ of $m$ indivisible items (jobs) and a group $N$ of $n$ (heterogeneous) agents, where each agent $i\in N$ has a valuation function $v_i$ on the items.
For indivisible items, each item $e\in M$ must be allocated to exactly one of the agents in $N$.
Therefore each allocation corresponds to a partitioning of the items into $n$ bundles $(X_1,\ldots,X_n)$, where agent $i\in N$ receives bundle $X_i$.
When agents have positive values on the items, we call the items \emph{goods}, e.g., consider the allocation of gifts to kids; when agents have negative values on the items, we call the items \emph{chores}, e.g., when allocating tasks to workers.
In this paper, we study both the allocation of goods and chores.	
For the case of chores, we assume that agents have positive \emph{costs} on the items and refer to the valuation function of agent $i$ as a cost function $c_i: 2^N \to \mathbb{R}^+$.

Different fairness notions have been proposed to measure how fair an allocation is, e.g., the envy-freeness (EF)~\cite{foley1967resource,conf/sigecom/LiptonMMS04,journals/teco/CaragiannisKMPS19}, proportionality (PROP)~\cite{steihaus1948problem} and maximin fair share (MMS)~\cite{conf/bqgt/Budish10}.
In this paper, we focus on the fairness notion of MMS.
Informally speaking, the MMS value $\MMS_i$ of an agent $i\in N$ is the best she can guarantee if she gets to partition items into $n$ bundles but is the last agent to pick a bundle.
An allocation is called MMS if every agent receives a bundle with objective no worse than her MMS value.
For the case of goods, that means $v_i(X_i) \geq \MMS_i$ for all $i\in N$; for the case of chores, that means $c_i(X_i) \leq \MMS_i$ for all $i\in N$.
%
%
Similar to several works on online scheduling~\cite{conf/waoa/EbenlendrNSW05,conf/escape/WuTY07,angelelli2007semi,journals/tcs/ChengKK05,journals/scheduling/LeeL13}, we consider the setting of semi-online in which algorithms are given partial information before items (jobs) arrive.
We assume the sum of the total value of items (the total processing time of jobs) is known, which can be modeled as all valuation functions (resp. cost functions) are normalized, i.e. $v_i(M) = n$ (resp. $c_i(M) =n$) for all $i\in N$.
Hence for allocating goods, the approximation ratio with respect to MMS is at most $1$ while for chores it is at least $1$.
The partial information can be considered {\em learned information} based on the historical data.
For example, in the food bank problem proposed by Walsh~\cite{conf/ki/Walsh14}, the food bank has to distribute donated food to charities in an online manner.
While the amount of food is unknown on an hourly or daily basis, the total amount in a fixed period (e.g. a week) can often be accurately estimated based on historical data.
To name another example, consider a manufacturing company, e.g. Foxconn, that receives online manufacturing orders and needs to assign orders to different working units in a fair way.
Again, while the daily order volume may fluctuate greatly, the total volume in a month is often stable and predictable.

The problem of online approximation of the MMS allocations for agents with normalized identical valuations coincides with the {\em semi-online machine covering} problems~\cite{conf/waoa/EbenlendrNSW05,conf/escape/WuTY07} when items are goods, and the {\em semi-online load balancing} problem~\cite{angelelli2007semi,journals/tcs/ChengKK05,journals/scheduling/LeeL13} when items are chores, in the research field of online scheduling (where the items are jobs and agents are machines).
Specifically, for allocating goods, the common MMS of agents corresponds to the minimum load of the machines in the optimal scheduling; for chores, it corresponds to the maximum load of the machines, e.g., the makespan, in the optimal scheduling.
For $n=2$ agents, Kellerer et al.~\cite{journals/orl/KellererKST97} present a $2/3$-competitive algorithm for goods and a $4/3$-competitive algorithm for chores and show that these competitive ratios are the best possible.
For $n\geq 3$ agents, Tan and Wu~\cite{journals/tcs/TanW07} propose a $1/(n-1)$-competitive algorithm and show that it is optimal for the allocation of goods; Kellerer et al.~\cite{journals/scheduling/KellererKG15} give a $1.585$-competitive algorithm for the allocation of chores, which is also optimal due to the lower bound by	Albers and Hellwig~\cite{journals/tcs/AlbersH12}.

\subsection{Our Results}

In this paper, we consider both the allocation of goods and chores and present upper and lower bounds for approximating MMS allocations.
As in~\cite{conf/aaai/GkatzelisPT21,journals/mansci/BogomolnaiaMS22,conf/aaai/Barman0M22}, we assume that the valuation functions are normalized\footnote{In Appendix~\ref{sec:justification}, we provide some justifications for this assumption, showing that without this assumption (1) for the case of goods, the competitive ratio is arbitrarily bad; (2) for the case of chores, the problem becomes strictly harder.}, e.g., $v_i(M) = n$ for all $i\in N$ for goods and $c_i(M) = n$ for all $i\in N$ for chores.
We refer to an online algorithm as \emph{$r$-competitive} if, for any online instance, the allocation returned by the algorithm is always $r$-approximate MMS.
We only consider deterministic algorithms in this paper.

Due to the unknown future, the online setting brings many challenges to the fair allocation problem and most of the classic algorithms cease to work.
There are two main difficulties in designing online algorithms: (1) the irrevocable decision-making, and (2) the arbitrary arrival order of items.
For example, the \emph{envy-cycle elimination}~\cite{conf/sigecom/LiptonMMS04} algorithms heavily rely on exchanging items among agents to improve the allocation, which is not allowed in the online setting, as the allocation decisions are irrevocable.
Other classic algorithms for approximating MMS allocations~\cite{aziz2022approximate,conf/aaai/AzizRSW17,conf/sigecom/HuangL21} are built on the reduction to identical ordering instances and allocate items in decreasing order of values/costs.
Unfortunately in the online setting, the algorithm has no control over the arrival order of items and thus all these algorithms fail to work.

To get around these difficulties, we combine the ideas from the fair allocation field with that from the online scheduling field and propose (deterministic) competitive online algorithms that work against adaptive adversaries, for both goods and chores.
For the allocation of goods, we show that no algorithm has a competitive ratio strictly larger than $0$ with respect to MMS, even when there are only three agents.
In contrast, we show that competitive algorithms exist for $n=2$ agents by proposing a $0.5$-competitive algorithm, and show that this is optimal for any online algorithms.
We also present a $0.5$-competitive algorithm for general number of agents under the assumption that the items arrive in the order from the most valuable to the least valuable.
We further consider the small goods instances in which the value of each item (to each agent) is bounded by some $\alpha < 1$ and present a $(1-\alpha)$-competitive algorithm for general number of agents.
Then we turn to the allocation of chores and propose a $(2-1/n)$-competitive algorithm for $n$ agents.
For the case of $n=2$ agents, this gives a $1.5$-competitive algorithm.
We further improve this competitive ratio to $\sqrt{2}\approx 1.414$ by giving another efficient algorithm and provide a hard instance showing that no online algorithm can do better than $15/11\approx 1.364$ competitive for $n=2$.
Moreover, we consider the case when items arrive in the order from the most costly to the least costly, and present an algorithm that is $5/3$-competitive for general number of agents.
Finally, we consider small chores instances in which the cost of each item (to each agent) is bounded by some $\alpha < 1$ and demonstrate the existence of $(1+\alpha)$-competitive algorithm for general number of agents.
For the case of two agents, we improve this competitive ratio to $\sqrt{\alpha^2-4\alpha+5}+\alpha-1$ for small chores instances.
We summarize the upper and lower bounds on the competitive ratios in Table~\ref{table:results}.

\begin{table*}[htb]
\centering
\caption{Summary of results, where Lower and Upper stand for Lower Bound and Upper Bound, respectively.}
\begin{tabular}{|c||cc|cc|} 
    \hline
    & \multicolumn{2}{c|}{Goods}  & \multicolumn{2}{c|}{Chores}  \\ \hline
    & Lower & Upper  & Lower  & Upper \\ \hline \hline
    $n\geq 3$ & {$0$}   & {$0$}  & {$1.585$~\cite{journals/tcs/AlbersH12}} & {$2-1/n$} \\\hline
    $n=2$ & {$0.5$}  & {$0.5$}  & {$15/11$} & {$\sqrt{2}$}  \\ \hline
\end{tabular}
\label{table:results}
\end{table*}

\subsection{Other Related Works} 
In the traditional offline fair allocation problem, it has been shown that MMS allocations are not guaranteed to exist for goods (by Kurokawa et al.~\cite{journals/jacm/KurokawaPW18}) and for chores (by Aziz et al.~\cite{conf/aaai/AzizRSW17}).
Hence, many research focus on the computation of approximately MMS allocations, e.g., for goods ~\cite{journals/jacm/KurokawaPW18,journals/mor/GhodsiHSSY21,journals/ai/GargT21} and chores~\cite{conf/aaai/AzizRSW17,conf/sigecom/BarmanM17,conf/sigecom/HuangL21}.
The state-of-the-art approximation ratio is $(\frac{3}{4}+\min\{\frac{1}{36},\frac{3}{16n-4}\})$ for goods by Akrami et al.~\cite{journals/corr/abs-2303-16788} and $13/11$ for chores by Huang and Segal-Halevi~\cite{journals/corr/abs-2302-04581}.
Recently, Feige et al.~\cite{conf/wine/FeigeST21} show that no algorithm can achieve approximation ratios larger than $39/40$ for goods and smaller than $44/43$ for chores.

Similar to the literature on online scheduling, classic models for online allocation problems often focus on optimizing a global objective~\cite{conf/soda/BanerjeeGGJ22,conf/aaai/Barman0M22,conf/sagt/KawaseS22,journals/scheduling/KellererKG15}.
Recently, inspired by many real-work applications, e.g., the allocation of food to charities in the food bank problem~\cite{conf/ki/Walsh14,conf/aaai/Walsh15,conf/ijcai/AleksandrovAGW15}, and the allocation of tasks to workers in scheduling problems~\cite{journals/orl/KellererKST97,journals/tcs/ChengKK05}, some researchers turn to study the online fair allocation problem.
%
%
In contrast to the rich literature on this problem for divisible items~\cite{journals/jair/KashPS14,conf/ijcai/LiLL18,journals/mansci/BogomolnaiaMS22}, the case of indivisible items is much less well-studied.
To name a few, for allocating indivisible goods, Benade et al.~\cite{conf/sigecom/BenadeKPP18} propose algorithms for minimizing the expected maximum envy among agents;
He et al.~\cite{conf/ijcai/HePPZ19} study the problem of minimizing the number of reallocations to ensure an EF1 allocation;
Zeng and Psomas~\cite{conf/sigecom/ZengP20} study the tradeoff between fairness and efficiency when the values of items are randomly drawn from a distribution.

For a more comprehensive review of other works on fair allocation and online fair allocation problems, please refer to the survey by Amanatidis et al.~\cite{journals/corr/abs-2208-08782} and Aleksandrov and Walsh~\cite{conf/aaai/AleksandrovW20}, respectively.

	\section{Preliminaries} \label{sec:preliminaries}
	
	We consider how to fairly allocate a set of $m$ indivisible goods (or chores) $M$ to a group of $n$ agents $N$, where items $M$ arrive online in an arbitrary order and agents $N$ are offline.
	When items are goods, each agent $i\in N$ has a value $v_i(e)\geq 0$ on each item $e\in M$.
	That is, agent $i\in N$ has an additive valuation function $v_i: 2^M \rightarrow \mathbb{R}^+ \cup \{0\}$ that assigns a positive value $v_i(S) = \sum_{e\in S} v_i(e)$ to any subset of items $S\subseteq M$, and the agents would like to maximize their values.
	When items are chores, we use $c_i(e)\geq 0$ to denote the cost agent $i$ has on item $e$, where $c_i$ is the cost function.
	We have $c_i(S) = \sum_{e\in S} c_i(e)$ for all $S\subseteq E$, and the agents would like to minimize their costs.
	We assume that the valuation and cost functions are normalized, i.e., $v_i(M) = c_i(M) = n$ for all $i\in N$.
	An allocation is represented by an $n$-partition $\bX = (X_1,\cdots,X_n)$ of the items, where $X_i \cap X_j = \emptyset$ for all $i \neq j$ and $\cup_{i\in N} X_i = M$.
	In allocation $\bX$, agent $i\in N$ receives bundle $X_i$.
	Given any set $X\subseteq M$ and $e\in M$, we use $X+e$ and $X-e$ to denote $X\cup\{e\}$ and $X\setminus\{e\}$, respectively.
	
	\begin{definition}[MMS for Goods]
		Let $\Pi(M)$ be the set of all n-partition of $M$.
		For the allocation of goods, for all agent $i\in N$, her maximin share (MMS) is defined as:
		\begin{equation*}
			\MMS_i = \max_{X\in \Pi(M)} \min_{j\in N} \{v_i(X_j)\}.
		\end{equation*}
		For any $\alpha \in [0,1]$, allocation $\bX$ is $\alpha$-approximate maximin share fair ($\alpha$-MMS) if $v_i(X_i)\geq \alpha\cdot\MMS_i$ holds for all $i\in N$.
		When $\alpha = 1$, the allocation $\bX$ is MMS.
	\end{definition}    
	
	\begin{definition}[MMS for Chores]
		Let $\Pi(M)$ be the set of all n-partition of $M$.
		For the allocation of chores, for all agent $i\in N$, her maximin share (MMS) is defined as:
		\begin{equation*}
			\MMS_i = \min_{X\in \Pi(M)} \max_{j\in N} \{c_i(X_j)\}.
		\end{equation*}
		For any $\alpha\geq 1$, allocation $\bX$ is $\alpha$-approximate maximin share fair ($\alpha$-MMS) if $c_i(X_i)\leq \alpha\cdot\MMS_i$ holds for any $i\in N$.
		When $\alpha = 1$, the allocation $\bX$ is MMS.
	\end{definition}

	\paragraph{Online Setting.}
	We use $e_1,e_2,\ldots,e_m$ to index the items $M$ in the order they arrive.
	We consider the adversarial setting in which the adversary designs the instance and decides the arrival order of items.
	Moreover, since our algorithms are deterministic, the adversary can be adaptive, i.e., the adversary can design the value or cost of the online item depending on the previous decisions by the algorithm.
	The algorithm does not know $m$ (the number of items), but knows the number of agents $n$ and that $v_i(M) = n$ (for goods) or $c_i(M) = n$ (for chores).
	The value $v_i(e_j)$ (resp. cost $c_i(e_j)$) of item $e_j \in M$ is revealed for all $i\in N$ upon the arrival of $e_j$.
	Then the online algorithm must make an irrevocable decision on which agent $i\in N$ this item $e_j$ should be assigned to.
	In other words, no reallocations of items are allowed.
	The performance of the algorithm is measured by the competitive ratio, which is the worst approximation guarantee (with respect to MMS) of the final allocation $\bX$ over all online instances.
	
	\section{Allocation of Goods} \label{sec:goods}
	We first consider the allocation of goods and provide upper and lower bounds for the competitive ratio of online algorithms for approximating MMS allocations.
	Recall that for the allocation of goods, the larger the competitive ratio the better, and thus upper bound corresponds to impossibility results while the lower bound corresponds to algorithmic results.
	As introduced, when agents have identical valuation functions, optimal competitive ratios $1/(n-1)$ for $n\geq 3$~\cite{journals/tcs/TanW07} and $2/3$ for $n=2$~\cite{journals/orl/KellererKST97} have been proved.
	In this section, we focus on the case when agents have general additive valuation functions.
	We first show that no online algorithm can guarantee a competitive ratio larger than $0$, even for $n=3$ agents.
    Then we propose our $0.5$-competitive algorithm for the two-agent case.
	
	Note that for the allocation of goods, we have $\MMS_i \leq (1/n)\cdot v_i(M) = 1$.
	Therefore as long as an agent receives a bundle with $v_i(X_i) \geq \gamma$, the allocation must be at least $\gamma$-MMS to her.
	
	\subsection{Upper Bound of Approximation Ratio}
	
	We show in this subsection that when there are at least $3$ agents, the problem of approximating MMS allocations online does not admit any competitive algorithm.
	
	\begin{theorem} \label{theorem:goods-n-agents}
		No online algorithm has a competitive ratio strictly larger than $0$ for approximating MMS allocations for goods, even when $n=3$.
	\end{theorem}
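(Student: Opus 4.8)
The plan is to show that the competitive ratio, being the infimum of $v_i(X_i)/\MMS_i$ over all agents and all instances, is forced down to $0$. Concretely, I would fix an arbitrary deterministic online algorithm and an arbitrary target $\gamma>0$, and let an adaptive adversary (with $n=3$) build an instance on which some agent $i$ ends up with $v_i(X_i)<\gamma$ while $\MMS_i$ stays bounded below by a positive constant (ideally $\MMS_i=1$). Since $\gamma$ is arbitrary this makes the infimum $0$, so no algorithm can be $r$-competitive for any $r>0$. The single structural fact I would lean on is that $\MMS_i$ is computed from agent $i$'s valuation of \emph{all} of $M$, including the items handed to the other two agents; hence an agent may value a ``splittable'' mass of total value $3$ (so that a balanced $3$-partition gives $\MMS_i=1$) and yet receive only a tiny fraction of that mass. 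The adversary's job is therefore to route almost all of some agent's valued mass to the other two agents while keeping that mass splittable into near-equal thirds.

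The engine I would use is the two-agent ``tie-then-large-item'' gadget, which already forces a factor-$2$ loss: when two agents are tied at a low received value and a large item valued by both arrives, the algorithm must commit it to one of them, and the neglected agent can then isolate that large item in her own $\MMS$ partition, so her $\MMS$ is about twice what she has actually received. With three agents and general (heterogeneous) valuations I would iterate this gadget on geometrically decreasing scales: the third agent provides the extra ``room'' to re-create a fresh low-value tie after each commitment, and heterogeneity lets the adversary make the large item of each phase contribute little to the phase's neglected agent as \emph{received} value while still counting toward her $\MMS$ through a later balanced bundle. Carrying this out so that the cumulative loss factor exceeds $1/\gamma$ would drive $v_i(X_i)/\MMS_i$ below $\gamma$; finally I would top up each agent's total to the normalization value $3$ with items worthless to the eventual victim, so that neither her received value nor her $\MMS$ is disturbed.

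The hard part will be establishing that the iteration truly forces the loss against \emph{every} deterministic algorithm --- in particular without assuming anything about how the algorithm breaks ties or whether it ``balances'' --- and that three agents are genuinely enough to push the factor from the two-agent barrier of $2$ all the way to infinity (the ratio to $0$). This is where a careful case analysis over the algorithm's online choices is needed, tracking for each agent the pair (value received so far, value seen so far) and using the normalization constraint that once an agent has been shown items of total value $3$ every later item is worthless to her, which ``locks'' agents and prevents the algorithm from repairing an earlier neglect. The delicate point is to maintain, throughout the iteration, that the running victim's valued mass remains partitionable into three nearly equal bundles (keeping $\MMS_i$ near $1$) even while the adversary succeeds in steering that mass away from her; balancing these two requirements, across all branches of the algorithm's decisions, is the crux of the argument.
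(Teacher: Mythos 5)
Your high-level framing is the right one, and it matches the paper's: fix an arbitrary deterministic algorithm and an arbitrary $\gamma>0$, let an adaptive adversary exploit heterogeneity plus the normalization ``completion'' items, and force some agent to finish with $v_i(X_i)<\gamma\cdot\MMS_i$. But the engine you propose --- iterating the two-agent tie-then-large-item gadget so that a factor-$2$ loss compounds across phases until it exceeds $1/\gamma$ --- has a structural problem that you never resolve, and it is exactly the step you yourself defer as ``the crux.'' The two-agent gadget only \emph{realizes} its loss when the adversary sends the huge completion item that exhausts the normalization budget; at that point the instance is over, so there is no next phase in which to re-create a tie and compound. Any iteration must therefore run purely on the \emph{threat} of completion items, and with three agents the algorithm has counter-moves: it can rotate which agent it starves, so tracking a single ``running victim'' whose loss doubles each phase does not obviously pin down any one agent. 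Since you explicitly leave that case analysis open, the proposal is a plan rather than a proof, and the part left open is the part that constitutes the theorem.

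The paper closes this gap differently and in a bounded number of items (at most six). Instead of compounding a factor $2$ per phase, it uses geometrically \emph{escalating} scales $\epsilon, r\epsilon, r^2\epsilon, r^3\epsilon$ (increasing, not decreasing as you suggest) so that each agent's MMS is dominated by items she did not receive, and it rests on a simple counting fact (made explicit as an observation in the appendix): if at some moment $k$ agents each hold a bundle worth less than $(1/r)\cdot\MMS_i$ to themselves and fewer than $k$ items remain, then some agent stays deprived, giving ratio below $1/r<\gamma$. The construction forces the first tiny items to distinct agents, then branches on where the algorithm places $e_3$ (and, in one branch, $e_4$); in every branch it reaches a state where \emph{all three} agents are simultaneously deprived by a factor $r$ while only two completion items remain. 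Note also that the paper never needs $\MMS_i$ to stay near $1$: the deprived agents' MMS values are themselves tiny (of order $r^2\epsilon$), and only the ratio matters. This removes your delicate requirement of keeping the victim's valued mass ``splittable into near-equal thirds'' while routing it away from her. If you want to complete your own route, the missing piece is precisely such a simultaneous-deprivation-plus-counting argument; the single-victim compounding picture will not survive an algorithm that alternates whom it neglects.
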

	\begin{proof}
		We first consider the case of $n=3$ agents and provide a collection of instances showing that no online algorithm can guarantee a competitive ratio strictly larger than $0$ on these instances.
		The case when $n\geq 4$ can be proved in a very similar way, and we defer the proof to \cref{sec:hardness-goods->=4}.

		For the sake of contradiction, suppose there exists a $\gamma$-competitive algorithm for approximating MMS allocation for $n=3$ agents, where $\gamma \in (0,1]$.
		Let $r > 1/\gamma$ be a sufficiently large integer and $\epsilon > 0$ be sufficiently small such that $r^3\epsilon < \gamma$.
		We construct a collection of instances and show that in the allocation returned by the algorithm for at least one of these instances, at least one agent $i\in N$ is allocated a bundle $X_i$ with $v_i(X_i) < (1/r)\cdot \MMS_i$, which is a contradiction.
		Note that since the allocation is deterministic, we can construct an instance gradually depending on how the previous items are allocated.
		
		To begin with, let the first item be $e_1$ with $v_1(e_1) = v_2(e_1) = v_3(e_1) = \epsilon$ and assume w.l.o.g. that agent $1$ receives it.
		Then let the second item be $e_2$ with $v_1(e_2) = r^2\epsilon,v_2(e_2) = v_3(e_2) = \epsilon$.
		Since the online algorithm has competitive ratio $\gamma > 1/r$, item $e_2$ can not be assigned to agent $1$.
		This is because otherwise for the instance with only three items, where $v_1(e_3) = 3-\epsilon-r^2\epsilon$ and $v_2(e_3) = v_3(e_3) = 3-2\epsilon$, there must exists an agent (in $\{2,3\}$) that receives no item, which leads to a $0$-MMS allocation.
		Since $v_2(e_1) = v_2(e_2) = v_3(e_1) = v_3(e_2)$, we can assume w.l.o.g. that agent $2$ receives item $e_2$.
		Let $e_3$ be such that $v_1(e_3) = r\epsilon, v_2(e_3) =r^2\epsilon, v_3(e_3) = \epsilon$.

		We first show that $e_3$ can not be assigned to agent $2$.
		Assume otherwise, i.e., $e_3\in X_2$.
		Then for the instance with four items, where the last item $e_4$ has 
		\begin{align*}
			v_1(e_4) = 3-(\epsilon+r\epsilon+r^2\epsilon),\quad v_2(e_4) = 3-(2\epsilon+r^2\epsilon), \quad v_3(e_4) = 3-3\epsilon,
		\end{align*}
		either $X_1 = \{e_1\}$ or $X_3 = \emptyset$ in the final allocation.
		Since $\MMS_1 = \epsilon+r\epsilon$ and $\MMS_3 >0$, in both cases the competitive ratio is strictly smaller than $\gamma$.
		Hence every $\gamma$-competitive algorithm must allocate item $e_3$ to either agent $1$ or $3$.
		Depending on which agent receives item $e_3$, we construct two different instances.
		
		For the case when agent $1$ receives item $e_3$, we construct the instance with five items as in Table~\ref{tab:Case1}.
		
		\begin{table}[htb]
            \centering
			\begin{tabular}{ c|c|c|c|c|c } 
				$\qquad$ & $e_1$ & $e_2$ & $e_3$ & $e_4$ & $e_5$ \\
				\hline
				$\mathbf{1}$ & $\boxed{\epsilon}$ & $r^2\epsilon$ & $\boxed{r\epsilon}$ & $r^3\epsilon$ & $3-(r^3\epsilon+r^2\epsilon+r\epsilon+\epsilon)$\\ 
				\hline
				$\mathbf{2}$ & $\epsilon$ & $\boxed{\epsilon}$ & $r^2\epsilon$ & $r\epsilon$ & $3-(r^2\epsilon+r\epsilon+2\epsilon)$ \\
				\hline
				$\mathbf{3}$ & $\epsilon$ & $\epsilon$ & $\epsilon$ & $\epsilon$ & $3-4\epsilon$
			\end{tabular}
        	\caption{Instance with five items when item $e_3$ is assigned to agent $1$.}\label{tab:Case1}
		\end{table}
		
		Let $X_i'$ be the bundle agent $i$ holds at the moment, for each $i\in N$. 
		We have
		\begin{align*}
			\MMS_1 = r^2\epsilon+r\epsilon+\epsilon, \quad \MMS_2 = r\epsilon+2\epsilon, \quad \MMS_3 = 2\epsilon \quad \text{and} \\
			v_1(X_1') = v_1(e_1+e_3) = r\epsilon+\epsilon, \quad v_2(X_2') = v_2(e_2) = \epsilon, \quad v_3(X_3') = 0.
		\end{align*}
		Note that for all $i\in N$ we have $v_i(X_i') < 1/r\cdot \MMS_i$.
		Observe that there must exist at least one agent $i \in N$ that does not receive any item in $\{e_4,e_5\}$ in the final allocation.
		In other words, we have $X_i = X'_i$, which leads to a contradiction that the algorithm computes $\gamma$-MMS allocations.
		
		Next we consider the case when agent $3$ receives item $e_3$.
		Let the next item $e_4$ be such that $v_1(e_4) = \epsilon, v_2(e_4) = r\epsilon, v_3(e_4) = r^2\epsilon$.
		We argue that the algorithm can not allocate item $e_4$ to agent $3$.
		Consider the following instance (as shown in Table~\ref{tab:Case2a}).
		
		\begin{table}[htb]
		    \centering
			\begin{tabular}{ c|c|c|c|c|c } 
				& $e_1$ & $e_2$ & $e_3$ & $e_4$ & $e_5$ \\
				\hline
				$\mathbf{1}$ & $\boxed{\epsilon}$ & $r^2\epsilon$ & $r\epsilon$ & $\epsilon$ &  $3-(r^2\epsilon+r\epsilon+2\epsilon)$ \\ 
				\hline
				$\mathbf{2}$ & $\epsilon$ & $\boxed{\epsilon}$ & $r^2\epsilon$ & $r\epsilon$ &  $3-(r^2\epsilon+r\epsilon+2\epsilon)$\\
				\hline
				$\mathbf{3}$ & $\epsilon$ & $\epsilon$ & $\boxed{\epsilon}$ & $\boxed{r^2\epsilon}$ & $3-(r^2\epsilon+3\epsilon)$ 
			\end{tabular}
		    \caption{Instance showing why item $e_4$ can not be assigned to agent $3$.}\label{tab:Case2a}
		\end{table}
		
		Note that for this instance we have $\MMS_1 = \MMS_2 = r\epsilon+2\epsilon$ while at the moment we have $v_1(X_1') = v_2(X_2') = \epsilon$.
		Therefore the agent $i\in \{1,2\}$ that does not receive item $e_5$ in the final allocation is not $\gamma$-MMS.
		Therefore we know that the algorithm must allocate item $e_4$ to agent $1$ or $2$, for which case we construct the following instance with six items (as shown in Table~\ref{tab:Case2b}).
		
		\begin{table}[htb]
			\centering
			\begin{tabular}{ c|c|c|c|c|c|c } 
				& $e_1$ & $e_2$ & $e_3$ & $e_4$ & $e_5$ & $e_6$ \\
				\hline
				$\mathbf{1}$ & $\boxed{\epsilon}$ & $r^2\epsilon$ & $r\epsilon$ & $\epsilon$ & $r^2\epsilon$ & $3-(2r^2\epsilon+r\epsilon+2\epsilon)$ \\ 
				\hline
				$\mathbf{2}$ & $\epsilon$ & $\boxed{\epsilon}$ & $r^2\epsilon$ & $r\epsilon$ & $r^3\epsilon$ & $3-(r^3\epsilon+r^2\epsilon+r\epsilon+2\epsilon)$\\
				\hline
				$\mathbf{3}$ & $\epsilon$ & $\epsilon$ & $\boxed{\epsilon}$ & $r^2\epsilon$ & $r^2\epsilon$ & $3-(2r^2\epsilon+3\epsilon)$ 
			\end{tabular}
			\caption{Suppose that item $e_4$ is allocated to one of the agents in $\{1,2\}$.}\label{tab:Case2b}
		\end{table}
		
		Note that for this instance we have
		\begin{equation*}
		    \MMS_1 = r^2\epsilon+2\epsilon, \quad 
		    \MMS_2 = r^2\epsilon+r\epsilon+2\epsilon, \quad
		    \MMS_3 = r^2\epsilon+\epsilon.
		\end{equation*}
		
		On the other hand, (no matter which agent receives item $e_4$) we have
		\begin{equation*}
		    v_1(X'_1) \leq 2\epsilon, \quad 
		    v_2(X'_2) \leq r\epsilon + \epsilon, \quad
		    v_3(X'_3) = \epsilon.
		\end{equation*}
		Since only items $\{e_5, e_6\}$ are not allocated, for the agent $i\in N$ that does not receive any item in $\{e_5, e_6\}$ in the final allocation, the allocation is not $\gamma$-MMS to her.

		In summary, no online algorithm is $\gamma$-competitive for three agents.
		Extending this result to $n\geq 4$ agents is almost straightforward, and we defer the proof to the appendix.
	\end{proof}
	
	\subsection{Two Agents} \label{ssec:good-2-agents}
	
	In this section, we consider the case with two agents and show that $0.5$-competitive algorithm exists, and is indeed optimal.
	A natural algorithmic idea for two agents is to allocate each item to the agent who values the items more (in a greedy manner) and stop allocating any further item to an agent once its total value exceeds $0.5$.
    Unfortunately, via the following instance (see Table~\ref{tab:Greedy-Hardness}) we show that the allocation returned by this greedy algorithm can be arbitrarily bad.
    
    \begin{table}[htb]
		\centering
			\begin{tabular}{ c|c|c|c } 
				$\qquad$ & $e_1$ & $e_2$ & $e_3$  \\
				\hline
				$\mathbf{1}$ & $0.5 - 2\epsilon$ & $1.5 - \epsilon$ & $\boxed{3\epsilon}$ \\ 
				\hline
				$\mathbf{2}$ & $\boxed{0.5 - \epsilon}$ & $\boxed{1.5}$ & $\epsilon$
			\end{tabular}
			\caption{Hard instance for the greedy algorithm, where $\epsilon >0$ is arbitrarily small.}\label{tab:Greedy-Hardness}
	\end{table}
	
	In this instance, both items $e_1, e_2$ are allocated to agent $2$ since she values them more, and her value does not exceed $0.5$ at the moment of allocation.
    However, the allocation is far from being MMS fair to agent $1$, since we have $\MMS_1 = 0.5+\epsilon$ while $v_1(X_1) = 3\epsilon$.
    The algorithm fails because when there exists an item with very large value, e.g., item $e_2$ in the instance, the greedy algorithm may allocate this item to the agent with value close to $0.5$, which results in a significantly unfair allocation.
    To fix this issue, we propose the following algorithm that equips the greedy algorithm with a special handling of large items.
    
	\paragraph{Algorithm for Two Agents.}
	We call an item $e$ \emph{large} to agent $i$ if $v_i(e) \geq 0.5$.
	For each online item $e$, if it is large to both agents, we assign $e$ to the agent with smaller $v_i(X_i)$ at the moment and allocate all future items to the other agent.
	Otherwise we allocate $e$ to the agent $i$ with larger $v_i(e)$.
	Once we have $v_i(X_i)\geq 1/2$ for some agent $i\in N$, we allocate all future items to the other agent $j\neq i$ (refer to Algorithm~\ref{alg:general-2-agents-goods}).
	Throughout the algorithm (and all later algorithms), we break ties arbitrarily but consistently, e.g., using the id of agents.
	
	\begin{algorithm}[htb]
		\SetKwInOut{KwIn}{Input}\SetKwInOut{KwOut}{Output}
		\SetKw{Break}{Break}
		\caption{Algorithm-for-2-Agents-for-Goods}\label{alg:general-2-agents-goods} 
		Initialize: $X_1, X_2 \gets \emptyset$ and let $A\gets \{1,2\}$ be the active agents\;
		\For{each online item $e \in M$}{
		    \uIf{$|A| = 1$}
		    {
		        $X_i \gets X_i + e$, where $i\in A$\;
		    }
		    \uElseIf{$v_1(e) \geq 1/2$ and $v_2(e) \geq 1/2$}
		   {
			    $i \gets \argmin_{j \in N}\{v_j(X_j)\}$\;
			    $X_{i} \gets X_{i} + e$\;
			    turn agent $i$ into inactive: $A\gets A\setminus\{i\}$\;
		    }
		    \Else
		    {
			    $i \gets \argmax_{j \in N}\{v_j(e)\}$\;
			    $X_{i} \gets X_{i} + e$\;
			    \If{$v_i(X_{i}) \geq 1/2$}{
			        turn agent $i$ into inactive: $A\gets A\setminus\{i\}$\;
			    }
		    }
		}
		\KwOut{$\bX = (X_1, X_2)$}
	\end{algorithm}
	
	\begin{theorem} \label{theorem:goods-2-agents}
		For $n=2$ agents, Algorithm~\ref{alg:general-2-agents-goods} computes an $0.5$-MMS allocation in $O(m)$ time.
	\end{theorem}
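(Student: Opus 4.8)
The plan is to reduce the fairness guarantee to a pointwise lower bound on each agent's received value, and then to run a short case analysis on how the algorithm freezes one agent. Throughout I will use the observation already recorded in the text that $\MMS_i \le \tfrac12 v_i(M) = 1$ for each agent $i$; consequently, to prove the allocation is $0.5$-MMS it suffices to show $v_i(X_i) \ge 0.5 \cdot \MMS_i$ for both agents, and in most branches the stronger $v_i(X_i)\ge 0.5$ will already do the job since then $v_i(X_i)\ge 0.5\ge 0.5\cdot\MMS_i$.

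First I would establish the structural fact that the algorithm turns \emph{exactly one} agent inactive. Once $|A|=1$ the remaining active agent only accumulates items and is never deactivated, so at most one deactivation ever occurs. To see that at least one occurs, suppose the run ends with both agents active: then every item was assigned by the $\argmax$ rule, so $v_1(X_1)\ge v_2(X_1)$ and $v_2(X_2)\ge v_1(X_2)$, whence $v_1(X_1)+v_2(X_2)\ge v_2(X_1)+v_1(X_2)$; since these two sums add to $v_1(M)+v_2(M)=4$, the former is at least $2$, forcing some $v_i(X_i)\ge 1>0.5$ and contradicting that the agent stayed active. Hence some agent $i$ is deactivated first, its bundle $X_i$ is frozen at that instant, and the other agent $j$ collects all later items, so the final values satisfy $v_j(X_j)=2-v_j(X_i)$.

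The frozen agent $i$ is always fine: under either deactivation rule $v_i(X_i)\ge 0.5\ge 0.5\cdot\MMS_i$ at the freezing instant. The work is to bound $v_j(X_j)=2-v_j(X_i)$ from below, and I split on the rule that deactivated $i$. If $i$ crossed the $0.5$ threshold on an $\argmax$ item $e^\ast$, then every item of $X_i$ was taken by $\argmax$, so $v_j(X_i-e^\ast)\le v_i(X_i-e^\ast)<0.5$; moreover $e^\ast$ is not large to both agents and $v_i(e^\ast)\ge v_j(e^\ast)$, which forces $v_j(e^\ast)<0.5$, giving $v_j(X_i)<1$ and hence $v_j(X_j)>1\ge\MMS_j$. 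If instead $i$ received a large-to-both item $e$ (chosen because $i$ held the smaller value), then in the subcase $v_j(e)<1$ the crude estimate $v_j(X_i)=v_j(X_i-e)+v_j(e)<0.5+1$ already yields $v_j(X_j)>0.5\ge 0.5\cdot\MMS_j$.

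The main obstacle is the one remaining subcase: a large-to-both item $e$ with $v_j(e)\ge 1$, where the crude bound breaks. Here I would first compute $\MMS_j=2-v_j(e)$ exactly, using that a single item of value at least $1$ must lie alone in the $j$-optimal partition. Then I chain the two facts that make the rule correct: the items of $X_i-e$ were all taken by $\argmax$, so $v_j(X_i-e)\le v_i(X_i-e)$; and $i$ had the smaller value at the event, so $v_i(X_i-e)\le v_j(X_j^{\mathrm{pre}})$, where $X_j^{\mathrm{pre}}$ denotes agent $j$'s bundle just before $e$ arrived. Combining these with disjointness of $X_i-e$, $X_j^{\mathrm{pre}}$ and $\{e\}$ gives $2\,v_j(X_i-e)+v_j(e)\le v_j(X_i-e)+v_j(X_j^{\mathrm{pre}})+v_j(e)\le v_j(M)=2$, hence $v_j(X_i-e)\le \tfrac12(2-v_j(e))=\tfrac12\MMS_j$. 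Since $v_j(X_j)=2-v_j(X_i)=\MMS_j-v_j(X_i-e)$, this yields $v_j(X_j)\ge\tfrac12\MMS_j$ and closes the last case. Finally, each item is processed with a constant number of value comparisons and additions (and no MMS computation is needed), so the algorithm runs in $O(m)$ time.
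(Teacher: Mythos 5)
Your proof is correct, and it relies on the same core inequalities as the paper's proof: greedy assignment gives $v_j(S)\le v_i(S)$ for any set $S$ of items assigned greedily to agent $i$; the threshold and large-item rules guarantee the frozen agent value at least $0.5\ge 0.5\cdot\MMS_i$; and the large-to-both case is closed by combining $\MMS_j\le v_j(M-e)$ with the argmin choice. What differs is the case decomposition. The paper splits on whether the \emph{instance} contains an item large to both agents, and in that case analyzes the first such item $e$ under the assumption that it triggered the argmin branch; you split on which \emph{rule} caused the unique deactivation. Your decomposition is in fact tighter: the paper's second case tacitly assumes both agents are still active when $e$ arrives, but if the threshold rule had already deactivated an agent, then $e$ would be routed through the $|A|=1$ branch, the claims that all items before $e$ were allocated greedily and that all items after $e$ go to the other agent $j$ would both fail, and the paper's computation would not apply; the conclusion survives in that scenario only via the argument you give in your case (a). Conditioning on the deactivation event, as you do, covers this situation automatically. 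Two small simplifications are available to you: the split of case (b) into $v_j(e)<1$ and $v_j(e)\ge 1$ is unnecessary, since the chain $2v_j(X_i-e)+v_j(e)\le 2$ together with $\MMS_j\le 2-v_j(e)$ holds with no assumption on $v_j(e)$ and already yields $v_j(X_j)\ge \tfrac12\MMS_j$ throughout case (b); relatedly, you never need the exact equality $\MMS_j=2-v_j(e)$, only the upper bound.
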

	\begin{proof}		
	    Let $\bX = (X_1,X_2)$ be the final allocation.
		We first show that if no item $e \in M$ is large to both agents, then $\bX$ is $1/2$-MMS.
		Since each item $e$ is allocated to the agent with larger $v_i(e)$, we have $v_i(X_i) \geq v_j(X_i)$ for all $i\in N$ and $j \neq i$.
		Therefore in the final allocation, we have
		\begin{equation*}
		    v_1(X_1) + v_2(X_2) \geq v_1(M) = 2.
		\end{equation*}
		Hence at least one of the agents, say agent $i$, will be turned inactive by the algorithm.
		Let $e$ be the last item agent $i$ receives.
		Then we have $v_i(e) \geq v_j(e)$ for $j\neq i$ and $v_i(X_i - e) < 0.5$ by the design of the algorithm.
		Hence we have
		\begin{equation*}
			v_j(X_i) = v_j(X_i - e) + v_j(e) < 1/2 + 1/2 = 1,
		\end{equation*}
		where the inequality holds because $v_j(X_i - e) \leq v_i(X_i - e)$ (since each item is allocated to the agent that values it more), and that item $e$ is not large to agent $j$.
		Hence we have $v_j(X_j) = v_j(M)-v_j(X_i) > 1$.
		Recall that $v_i(X_i) \geq 1/2$.
		Hence the allocation $\bX$ is a $1/2$-MMS.
		
		Next, we argue that if there exists an item $e\in M$ that is large to both agents, then $\bX$ is also $1/2$-MMS.
		Let item $e$ be the first item that is large to both agents, and suppose that it is allocated to agent $i$.
		Then we have $v_i(X_i)\geq v_i(e) \geq 1/2 \geq 1/2 \cdot \MMS_i$. 
		Next we show that $v_j(X_j) \geq 1/2 \cdot \MMS_j$ for the other agent $j\neq i$.
		Let $X'_1$ (resp. $X'_2$) be the bundle agent $1$ (resp. $2$) holds before item $e$ is allocated.
		By the design of the algorithm we have $v_i(X'_i) \leq v_j(X'_j)$.
		Since all items that arrive before $e$ are allocated greedily, we have $v_j(X'_j) \geq v_i(X'_i) \geq v_j(X'_i)$.
		Since all items that arrive after $e$ are allocated to agent $j$, we have
		\begin{equation*}
		    v_j(X_j) \geq \frac{1}{2}\cdot (v_j(X_j) + v_j(X'_i)) = \frac{1}{2} \cdot v_j(M - e) \geq\frac{1}{2} \cdot \MMS_j,
		\end{equation*}
		where last inequality holds because in any allocation the bundle not containing item $e$ has value at most $v_j(M - e)$, which implies $\MMS_j \leq v_j(M - e)$.
		Since the allocation of every item $e\in M$ takes $O(1)$ time, the algorithm runs in $O(m)$ time.
	\end{proof}
	
	Next, we show that no online algorithm can achieve a competitive ratio strictly better than $0.5$.
	Therefore, our $0.5$-competitive algorithm is optimal.	

	\begin{theorem} \label{thm:0.5for-2agents}
		For $n=2$ agents, no online algorithm has a competitive ratio strictly larger than $0.5$.
	\end{theorem}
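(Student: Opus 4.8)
The plan is to argue by contradiction via an adaptive adversary. Suppose some deterministic online algorithm were $\gamma$-competitive for $n=2$ with $\gamma > 1/2$, and fix a small $\delta>0$ with $\gamma > 1/2 + \delta$. Since the algorithm is deterministic, the adversary may reveal items one at a time and choose the valuations of each new item (and when to stop) as a function of the algorithm's past assignments, effectively building a decision tree whose every leaf is a complete normalized instance with $v_i(M)=2$. It then suffices to exhibit such a tree in which, along every root-to-leaf path, some agent $j$ ends with $v_j(X_j) < \gamma\cdot \MMS_j$, contradicting $\gamma$-competitiveness.

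I would lean on two structural facts specific to goods with two normalized agents: (i) $\MMS_i \le v_i(M)/2 = 1$ for each $i$, so once an agent's bundle reaches value $1$ she is automatically satisfied and the algorithm is free to pour all remaining items onto the other agent; and (ii) $v_j(X_1)+v_j(X_2)=2$, so depriving agent $j$ is equivalent to forcing $j$'s high-value items onto the other agent. The opening move is a fully symmetric item $e_1$ with $v_1(e_1)=v_2(e_1)=1/2$; by relabeling, I may assume it is assigned to agent $1$, which breaks the symmetry and fixes the initial state $v_1(X_1)=1/2$, $v_2(X_2)=0$.

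From this state I would present a short sequence of contested-but-heterogeneous items and branch on each placement. The design goal in every branch is to reach a configuration in which the agent I intend to trap, say agent $2$, has had her high-value items committed to agent $1$ (so that $v_2(X_1)\to 3/2$ and hence the value still available to agent $2$ drops below $1$), while the items she does receive keep her total near $1/2$; simultaneously the instance is completed so that agent $2$'s own valuation splits into two bundles of value $1$, driving $\MMS_2\to 1$ and the ratio $v_2(X_2)/\MMS_2$ down to $1/2$. The complementary placements are handled by the mirror-image construction that traps agent $1$ instead. Throughout I must track all four cross-valuations $v_i(X_j)$ and check that every completed leaf respects the normalization $v_i(M)=2$.

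The main obstacle, and where the real work lies, is defeating the algorithm's \emph{recovery} escape: whenever the soon-to-be-trapped agent is still below value $1$, a naive construction simply lets the algorithm feed her the next item so she catches up, and the moment she reaches value $1$ she is satisfied outright. To block both outlets, the items agent $2$ values must be forced onto agent $1$ at moments when the algorithm has a rational reason to comply (because agent $1$ is the one currently lagging and also values those items), after which the supply still available to agent $2$ is provably less than her rising $\MMS$. Arranging a single gadget so that \emph{both} placement branches are simultaneously bad is precisely the difficulty; I expect to need valuations scaled across rounds (analogous to the geometric values $\epsilon, r\epsilon, r^2\epsilon,\dots$ used in the proof of \cref{theorem:goods-n-agents}), so that an agent holding only ``early'' items has value negligible against an $\MMS$ dominated by a later item she was denied. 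Finally, since \cref{theorem:goods-2-agents} already exhibits a $1/2$-competitive algorithm, no construction can push the ratio below $1/2$; the argument therefore targets ratios approaching $1/2$ from above, which simultaneously certifies that the $0.5$-competitive algorithm is optimal.
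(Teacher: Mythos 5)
Your overall framing (deterministic algorithm, adaptive adversary building a decision tree, contradiction with a $(1/2+\delta)$-competitive algorithm) matches the paper's, and your guess that geometrically scaled values are needed is correct; but the proposal stops exactly where the proof begins. You explicitly defer ``arranging a single gadget so that both placement branches are simultaneously bad,'' and that gadget is the entire content of the theorem. The paper resolves it with two concrete ingredients your plan lacks. First, an invariant (Claim~\ref{claim:property-2-agents-goods}): as long as $v_1(X_1+X_2)\le 1$ and $v_2(X_1+X_2)\le 1$, any $(1/2+\delta)$-competitive algorithm must keep some agent $i$ with a \emph{strict} multiplicative advantage $v_i(X_i)>(1+\delta)\,v_i(X_j)$; otherwise the adversary closes with a single item worth $2-v_i(X_1+X_2)\ge 1$ to both agents, and whoever misses it ends with at most $\frac{1+\delta}{2+\delta}\cdot\MMS_i<(1/2+\delta)\cdot\MMS_i$. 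Second, a sequence calibrated \emph{against} that invariant: item $e_i$ is chosen so that $\sum_{j<i}v_1(e_j)=(1+\delta)\,v_1(e_i)$ while $\sum_{j<i}v_2(e_j)=v_2(e_i)/(1+\delta)$, so that if the algorithm ever hands $e_i$ to agent $2$, both agents' bundles become balanced to within the factor $(1+\delta)$ exactly, violating the invariant. Hence every item must go to agent $1$, and the closing item (worth more than $1.9$ to agent $1$ but less than $0.1$ to agent $2$) leaves agent $2$ with value below $0.1$ against $\MMS_2=1$.

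Two further points in your plan would fail as stated. (i) Your forcing mechanism is not valid: you propose to place agent $2$'s valuable items with agent $1$ ``at moments when the algorithm has a rational reason to comply (because agent $1$ is the one currently lagging and also values those items).'' An adversarial lower bound cannot assume the algorithm behaves rationally or greedily; the only way to force a placement is to prove the alternative placement already dooms the competitive ratio, which is precisely what the invariant lemma provides. Indeed, in the paper's construction agent $1$ is never lagging---she holds \emph{every} item---and the algorithm complies only because the other branch is punishable. (ii) You do not address how the escalation terminates under normalization: agent $2$'s geometrically growing values would exceed her budget $v_2(M)=2$, which is why the paper caps them at $0.1$ from some index $l$ onward and needs a separate closing-item argument (inside Claim~\ref{claim:all-first-k-to-1}) to force those tail items onto agent $1$ as well. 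Without these pieces the proposal is a correct strategic outline, but the proof itself is missing.
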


	\begin{proof}
	    Assume the contrary that there exists an online algorithm with competitive ratio $0.5+\delta$ for some constant $\delta > 0$.
	    We first prove that this algorithm must maintain the following property throughout the execution of the whole algorithm, which (roughly speaking) enforces the algorithm to maintain an ``uneven'' allocation at all times.
	    
    	\begin{claim}\label{claim:property-2-agents-goods}
    		Throughout the execution of a $(0.5+\delta)$-competitive algorithm, if $v_1(X_1+X_2) \leq 1$ and $v_2(X_1+X_2) \leq 1$, then there must exists $i\in N$ such that $v_i(X_i) > (1+\delta)\cdot v_i(X_j)$.
    	\end{claim}
    	\begin{proof}
    		Suppose otherwise, e.g., $v_i(X_i) \leq (1+\delta)\cdot v_i(X_j)$ for both $i\in N$ at some moment. 
    		Then consider the instance with items $M = X_1\cup X_2 \cup \{e\}$, where the last item $e$ has $v_i(e) = 2 - v_i(X_1+X_2)$ for both $i\in N$.
    		Since $v_1(X_1+X_2) \leq 1$ and $v_2(X_1+X_2) \leq 1$, we have $v_i(e) \geq 1$ for both $i\in N$.
    		Therefore, the agent who does not receive item $e$ in the final allocation has value $v_i(X_i)$, while her MMS is $v_i(X_i+X_j)$.
    		Note that
    		\begin{equation*}
    		    v_i(X_i) \leq \cfrac{1+\delta}{2+\delta} \cdot v_i(X_i+X_j) \leq \cfrac{1+\delta}{2+\delta}\cdot \MMS_i < (0.5+\delta) \cdot \MMS_i,
            \end{equation*}
            which contradicts the fact that the algorithm computes $(0.5+\delta)$-MMS allocations.
        \end{proof}
    
		Following Claim~\ref{claim:property-2-agents-goods} we construct an instance (see Table~\ref{tab:Two-Goods}), and show that the algorithm must allocate all items $e_1,\ldots,e_k$ to agent $1$ (we can assume w.l.o.g. that agent $1$ receives the first item).
		
    	\begin{table}[htbp]
    		\small
    		\centering
    		\begin{tabular}{c|c|c|c|c|c|c|c}
    			 & $e_1$ & $e_2$  & $\cdots$ & $e_l$ & $\cdots$ & $e_k$ & $e_{k+1}$ \\ \hline
    			\textbf{1} & $\epsilon$ & $\cfrac{\epsilon}{1 + \delta}$  & $\cdots$ & 
    			$\cfrac{(2+\delta)^{l-2}\cdot \epsilon}{(1+\delta)^{l-1}}$ & $\cdots$ &
    			$\cfrac{(2+\delta)^{k-2}\cdot \epsilon}{(1+\delta)^{k-1}}$ & $ > 1.9$
    			\\ \hline
    			\textbf{2} & $\epsilon$ & $(1+\delta)\! \cdot\! \epsilon$ & $\cdots$ & $(2+\delta)^{l-2}\!\cdot\!(1+\delta)\!\cdot\! \epsilon$ & $\cdots$ &
    			$(2+\delta)^{l-2}\!\cdot\!(1+\delta)\!\cdot\! \epsilon$ & $ < 0.1$
    		\end{tabular}
        	\caption{Hard instance for allocation of goods for two agents, where $\epsilon>0$ is arbitrarily small.}\label{tab:Two-Goods}
    	\end{table}

		Specifically, the first item $e_1$ has $v_1(e_1) = v_2(e_1) = \epsilon$ and we can assume w.l.o.g that agent $1$ receives this item.
		Then we construct items $e_2, \cdots, e_k$ such that for all $1< i\leq k$, we have
		\begin{align*}
		    v_1(e_i) = \frac{(2+\delta)^{i-2}\cdot \epsilon}{(1+\delta)^{i-1}} \quad \text{and} \quad
		    v_2(e_i) =
		    \begin{cases}
		    (2+\delta)^{i-2}\cdot(1+\delta)\cdot \epsilon, \quad \text{ if } i\leq l\\
            (2+\delta)^{l-2}\cdot(1+\delta)\cdot \epsilon, \quad \text{ if } i > l.
		    \end{cases}
		\end{align*}
		Observe that for all $1<i\leq k$, we have
		\begin{align}
		    \sum_{j=1}^{i-1}v_1(e_j) & = \epsilon + \frac{\epsilon}{1+\delta}\cdot \sum_{j=2}^{i-1} \left(\frac{2+\delta}{1+\delta}\right)^{j-2} 
		    = \epsilon + \frac{\epsilon}{1+\delta} \cdot (1+\delta)\cdot \left(\left(\frac{2+\delta}{1+\delta}\right)^{i-2} - 1 \right) \nonumber \\
		    & = \epsilon\cdot \left(\frac{2+\delta}{1+\delta}\right)^{i-2} = (1+\delta)\cdot v_1(e_i). \label{eqn:sum-of-v1(e)}
		\end{align}
		For all $1<i\leq l$, we have
		\begin{align}
		    \sum_{j=1}^{i-1}v_2(e_j) & = \epsilon + \epsilon\cdot (1+\delta)\cdot \sum_{j=2}^{i-1} (2+\delta)^{j-2} 
		    = \epsilon + \epsilon\cdot (1+\delta)\cdot \frac{1}{1+\delta}\cdot \left( (2+\delta)^{i-2} - 1 \right) \nonumber \\
		    & = \epsilon\cdot (2+\delta)^{i-2} = \frac{v_2(e_i)}{1+\delta}.\label{eqn:sum-of-v2(e)}
		\end{align}
		In other words, item $e_i$ has value roughly the same as all items before $i$ combined, for both agents.
		However, the value grows slightly faster in $v_2$ than in $v_1$.
		This is also reflected by the definition of $v_1(e_i)$ and $v_2(e_i)$: observe that for all $1<i\leq l$ we have $\frac{v_2(e_i)}{v_1(e_i)} = (1+\delta)^i$, i.e., the difference in value under the two valuation functions grows exponentially in $i$.
		Note that all items $e_{l}, e_{l+1},\ldots, e_k$ have the same value $(2+\delta)^{l-2} \cdot (1+\delta) \cdot \epsilon$ to agent $2$.
		By picking $\epsilon = \frac{0.1}{(1+\delta)(2+\delta)^{l-2}}$ we can ensure that 
		\begin{equation*}
		    v_2(e_l) = v_2(e_{l+1}) = \cdots = v_2(e_k) = (2+\delta)^{l-2} \cdot (1+\delta) \cdot \epsilon = 0.1.
		\end{equation*}
		Finally, let $e_{k+1}$ be the last item with $v_1(X_1+X_2+e_{k+1}) = v_2(X_1+X_2+e_{k+1}) = 2$.
        By setting $k = l + 18$, we have (where the third equality follows from~\eqref{eqn:sum-of-v2(e)} and $v_2(e_l)=0.1$)
        \begin{equation*}
            v_2(e_{k+1}) = 2 - \sum_{i=1}^k v_2(e_i) = 2 - \sum_{i=1}^{l-1} v_2(e_i) - (k-l+1)\cdot v_2(e_l) = 2 - \frac{0.1}{1+\delta} - 1.9 = \frac{0.1\cdot \delta}{1+\delta}.
        \end{equation*}
		By setting $l$ to be sufficiently large (which also defines $\epsilon$ and $k$), we can ensure that
        \begin{equation*}
            v_1(e_{k+1}) = 2 - \sum_{i=1}^k v_1(e_i) = 2 - \epsilon\cdot \left(\frac{2+\delta}{1+\delta}\right)^{k-1} = 2 - \frac{0.1\cdot (2+\delta)^{19}}{(1+\delta)^{l+18}} > 1.9.
        \end{equation*}
		Next, we argue that the algorithm must allocate the first $k$ items to agent $1$.
		\begin{claim}\label{claim:all-first-k-to-1}
			The algorithm with competitive ratio $0.5+\delta$ must assign all items $e_2, \cdots, e_k$ to agent $1$. 
		\end{claim}
    	\begin{proof}
    	    We first prove that all items $\{ e_1,\ldots, e_l \}$ must be allocated to agent $1$ by showing that for all $i\in \{2,\ldots,l\}$, if agent $1$ receives all the items $e_1,\cdots,e_{i-1}$, then $e_i$ must also be allocated to agent $1$, using Lemma~\ref{claim:property-2-agents-goods}.
    		Recall from~\eqref{eqn:sum-of-v1(e)} and~\eqref{eqn:sum-of-v2(e)} that
    		\begin{equation*}
    		    \sum_{j=1}^{i-1} v_1(e_j) = (1+\delta)\cdot v_1(e_i) \quad \text{ and } \quad \sum_{j=1}^{i-1} v_2(e_j) = \frac{v_2(e_i)}{1+\delta}.
    		\end{equation*}
    		Suppose item $e_i$ is allocated to agent $2$, then after the allocation we have $X_1 = \{ e_1,\ldots,e_{i-1} \}$ and $X_2 = \{e_i\}$, which gives
    		\begin{equation*}
    		    v_1(X_1) = (1+\delta)\cdot v_1(X_2) \quad \text{and} \quad v_2(X_2) = \frac{v_2(X_1)}{1+\delta}.
    		\end{equation*}
    		
    		Moreover, we have $v_i(X_1\cup X_2) \leq 1$ for both $i\in \{1,2\}$.
    		By Lemma~\ref{claim:property-2-agents-goods}, we have a contradiction.
    		
    		Next we show that for all $i\in \{l+1,\ldots,k\}$, if agent $1$ receives all the items $e_1,\cdots, e_{i-1}$, then item $e_i$ should also be assigned to her.
    		Assume otherwise and let $e_i$, where $i\in \{l+1,\ldots,k\}$ be the first item agent $2$ receives.
    		Recall that $v_2(e_l) = \cdots = v_2(e_k) = 0.1$.
    		Hence after allocating item $e_i$ we have $v_2(X_2) = 0.1$ and $v_1(X_1) = (1+\delta) \cdot v_1(X_2)$.
    		Then consider the instance with $i+1$ items, where the last item $e_{i+1}$ has values $v_1(e_{i+1}) = 2 - v_1(X_1\cup X_2)$ and $v_2(e_{i+1}) = 2 - v_2(X_1\cup X_2)$.
    		Note that for this instance we have
    		\begin{equation*}
    		    \MMS_1 = v_1(X_1\cup X_2)\quad \text{ and } \quad \MMS_2 \geq \sum_{j=1}^{l+1} v_2(e_j) = \frac{3+2\delta}{1+\delta}\cdot 0.1
    		\end{equation*}
    		Therefore, the agent who does not receive $e_{i+1}$ in the final allocation will have value strictly less than $(0.5+\delta)$ times her MMS, which is also a contradiction.
    	\end{proof}
	
		Given Claim~\ref{claim:all-first-k-to-1}, we know that when item $e_{k+1}$ arrives, agent $2$ is not allocated any item, which leads to $v_2(X_2) < 0.1$ in the final allocation. 
		Since $\MMS_2 = 1$, the allocation is obviously not $(0.5+\delta)$-MMS to agent $2$, which is also a contradiction.
	\end{proof}
	
\subsection{Monotone Instances}\label{ssec:mono-goods}
	
	In this section, we consider monotone instances for the allocation of goods, for which we propose a $0.5$-competitive algorithm.
	Suppose the set of online items are indexed by $\{e_1,e_2,\ldots,e_m\}$ following their arrival order.
	
	\begin{definition}[Monotone Instances for Goods]
		We call an online instance \emph{monotone} if for all agent $i\in N$, we have
		\begin{equation*}
			v_i(e_1)\geq v_i(e_2)\geq \cdots \geq v_i(e_m).
		\end{equation*}
	\end{definition}
	
	As discussed in Section~\ref{sec:introduction}, the online fair allocation problem is difficult due to (1) the irrevocable decision making and (2) the arbitrary arrival order of items.
	The monotone instances somehow remove the second difficulty in this problem.
	By providing a $0.5$-competitive algorithm for this case we show that the problem becomes much easier with the second difficulty removed.
	
	\begin{theorem}{\label{theorem:monotone-for-goods}}
		There exists a polynomial time $0.5$-competitive algorithm for monotone instances.
	\end{theorem}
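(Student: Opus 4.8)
The plan is to process the items in their decreasing arrival order and run a \emph{least-filled} rule with a deactivation threshold of $1/2$. I maintain a set $A\subseteq N$ of active agents, initially $A=N$; when item $e$ arrives I give it to $i=\argmin_{j\in A} v_j(X_j)$ (ties broken by a fixed agent order), and I remove $i$ from $A$ as soon as $v_i(X_i)\geq 1/2$; if $A=\emptyset$ I place $e$ on any already-satisfied agent. Since every valuation is normalized we have $\MMS_i\leq 1$, so it suffices to guarantee $v_i(X_i)\geq \tfrac12\MMS_i$ for every $i$. For any agent ever removed from $A$ this is immediate, as she ends with $v_i(X_i)\geq 1/2\geq\tfrac12\MMS_i$ and her bundle never shrinks. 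The entire difficulty therefore lies in the agents that are still active at termination; fixing such an agent $i$ with final value $s:=v_i(X_i)<1/2$, the goal reduces to proving $\MMS_i\leq 2s$.

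The first step is a structural lemma driven by monotonicity. Because all agents start at value $0$, the least-filled rule gives each of the first $n$ items to a distinct agent, so agent $i$ receives some $e_q$ with $q\leq n$. As $e_q\in X_i$ this gives $v_i(e_q)\leq v_i(X_i)=s$, and monotonicity $v_i(e_q)\geq v_i(e_{q+1})\geq\cdots$ forces $v_i(e_p)\leq s$ for all $p\geq q$. Hence at most $b\leq q-1\leq n-1$ items --- a prefix $e_1,\dots,e_b$ --- have value larger than $s$ to agent $i$, and none of them lies in $X_i$. In \emph{any} $n$-partition these $b$ ``large'' items occupy at most $b$ of the $n$ bundles, so at least $n-b$ bundles consist solely of items of value at most $s$. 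Writing $S$ for the set of items of value at most $s$ to agent $i$, these $n-b$ disjoint all-small bundles have total value at most $v_i(S)$, so the minimum bundle of agent $i$'s maximin partition satisfies $\MMS_i\leq v_i(S)/(n-b)$.

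The second step must show that agent $i$ captures a fair share of the small value, namely $v_i(X_i)\geq v_i(S)/\bigl(2(n-b)\bigr)$; combined with the previous inequality this yields $\MMS_i\leq v_i(S)/(n-b)\leq 2s$ as desired. The plan here is to note that all small items form a \emph{suffix} of the arrival sequence, that agent $i$ is active while they are handed out, and that the least-filled rule keeps the active bundles balanced: whenever $i$ is passed over the recipient was no further ahead than $i$, and since each small item adds at most $s$ the active bundles can never diverge by more than one small item. Monotonicity is what lets me transfer this balance into agent $i$'s own valuation --- a late item is among the smaller items of \emph{every} agent --- while normalization ($v_j(M)=n$) bounds the number of rivals still competing for the suffix, so that the small value is split among at most $n-b$ active agents up to a bounded additive error.

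The main obstacle is exactly this second step, where the interaction between the algorithm's own-valuation decisions and the per-agent maximin bound is most delicate: the rule compares agents through $v_j(X_j)$, whereas the target inequality must be expressed entirely in $v_i$. I expect the technical heart to be ruling out a rival whose valuation decreases so slowly that it absorbs many items that are small for $i$ while remaining below the $1/2$ threshold; monotonicity together with the normalization constraint (which prevents items from simultaneously being both numerous and individually large) is the lever that caps such absorption. Degenerate cases --- items of value zero to their recipient and exact ties --- are absorbed into the fixed tie-breaking order and do not affect the argument.
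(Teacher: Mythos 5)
Your overall framing (deactivate an agent at value $1/2$, use $\MMS_i \le 1$ to dispose of deactivated agents, and reduce everything to the agents still active at termination) matches the paper's, but the allocation rule you chose breaks the theorem: the \emph{least-filled} rule, which compares bundle values $v_j(X_j)$ across different agents' valuations, is not $0.5$-competitive on monotone instances, and the unproven ``second step'' of your argument is exactly where it fails. Concretely, take $n=2$ with forty items: $e_1,\dots,e_{10}$ with $v_1=0.2$, $v_2=0.05$, and $e_{11},\dots,e_{40}$ with $v_1=0$, $v_2=0.05$. This instance is monotone and normalized ($v_1(M)=v_2(M)=2$), and $\MMS_1=\MMS_2=1$. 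Under your rule (ties to agent $1$), agent $1$ takes $e_1$, agent $2$ takes $e_2,\dots,e_5$ while she is least-filled, agent $1$ takes $e_6$, agent $2$ takes $e_7,\dots,e_{10}$; from then on both bundles are worth exactly $0.4$ to their owners, so every remaining item goes to agent $1$ by tie-breaking --- but those items are worth $0$ to her. The final allocation gives each agent value $0.4 < 0.5\cdot\MMS_i$, so the algorithm is only $0.4$-competitive here, and perturbing the zeros to a tiny $\delta>0$ shows that no tie-breaking rule rescues it. In your own notation, agent $1$ ends active with $s=0.4$, $b=0$, $S=M$, and your Step-2 claim $v_i(X_i)\ge v_i(S)/\bigl(2(n-b)\bigr)=0.5$ is simply false; the ``balancedness'' you hoped to exploit is precisely what the adversary uses against you.

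The root cause is that least-filled measures progress in each agent's own units: an agent can be held least-filled forever by feeding her items she values at (nearly) zero, while those same items carry essentially all of a rival's value --- a situation monotonicity and normalization fully permit, since a rival's value may be spread evenly over many late items. The paper's algorithm never selects recipients by bundle value. It first runs a phase in which any item that is large to some active agent (value at least $v_i(M\setminus L)/(2|A|)$) is given to such an agent, who is immediately retired; this is justified by the lemma that $\MMS_i$ does not decrease when one agent and one item are removed. Once items are small to all remaining agents (which, by monotonicity, stays true forever), each item is given to the active agent maximizing $v_j(e)/\beta_j$, i.e., essentially to whoever values it most. That greedy-by-item-value rule yields the per-item inequality $v_i(e')/\beta_i \ge v_j(e')/\beta_j$ for $e'\in X_i$, which is what allows the cross-valuation bound $v_i(X_j) < 2\beta_i$ for all active $j$ and hence the contradiction $v_i(M\setminus L) = \sum_{j\in A^*} v_i(X_j) < |A^*|\cdot 2\beta_i = v_i(M\setminus L)$ for any active agent left below $\beta_i$. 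To repair your proof you would have to replace the least-filled rule by such an item-value-based rule; the second step cannot be fixed as stated.
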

	
	The following useful property regarding MMS for goods are commonly observed by existing works on the approximation of MMS allocations~\cite{journals/jacm/KurokawaPW18,conf/soda/GargMT19,journals/ai/GargT21}.
	For completeness, we also present a short proof.
	
	\begin{lemma}{\label{lemma:MMS'-greaterthan-MMS}}
		For all $N'\subseteq N$ and $M'\subseteq M$, let $\MMS_i(N', M')$ be the maximin share of agent $i\in N'$ (under valuation function $v_i$) on the instance with agents $N'$ and items $M'$. If $|N\setminus N'| = |M\setminus M'|$, then we have $\MMS_i(N', M') \geq \MMS_i(N,M)$.
	\end{lemma}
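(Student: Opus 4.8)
The plan is to prove the statement by induction on $k := |N \setminus N'| = |M \setminus M'|$, reducing everything to a single elementary step: deleting one item together with one agent (i.e.\ merging two bundles). The first observation I would record is that $\MMS_i(N',M')$ depends only on the valuation $v_i$, the item set $M'$, and the \emph{number} of bundles $|N'|$ into which $M'$ is split; the identity of the removed agents is irrelevant, so only their count matters. Hence it suffices to show that discarding an arbitrary single item and fusing two bundles cannot decrease the maximin value, and then iterate.

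For the base case $k=0$ the two instances coincide and the inequality holds with equality. For the inductive step I would fix an arbitrary item $e \in M \setminus M'$ and an arbitrary agent $a \in N \setminus N'$ and first establish the single-removal claim $\MMS_i(N \setminus \{a\}, M - e) \geq \MMS_i(N,M)$. To do this, take a partition $X = (X_1,\ldots,X_n)$ of $M$ into $n = |N|$ bundles attaining $\MMS_i(N,M)$, so that $v_i(X_j) \geq \MMS_i(N,M)$ for every $j$. The item $e$ lies in some bundle $X_p$; pick any other bundle $X_q$ with $q \neq p$ (which exists because the current instance has at least two bundles whenever an item still remains to be deleted) and replace $X_p, X_q$ by the single merged bundle $(X_p - e) \cup X_q$. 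Because $v_i$ is nonnegative, this merged bundle has value $v_i(X_p) - v_i(e) + v_i(X_q) \geq v_i(X_q) \geq \MMS_i(N,M)$, while every untouched bundle still has value at least $\MMS_i(N,M)$. This exhibits an $(n-1)$-partition of $M - e$ in which every bundle is worth at least $\MMS_i(N,M)$, so by the definition of the maximin share $\MMS_i(N \setminus \{a\}, M - e) \geq \MMS_i(N,M)$.

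To close the induction, note that since $e \notin M'$ and $a \notin N'$ we have $M' \subseteq M - e$ and $N' \subseteq N \setminus \{a\}$ with $|(N\setminus\{a\})\setminus N'| = |(M-e)\setminus M'| = k-1$. Applying the induction hypothesis to the smaller instance $(N\setminus\{a\},\, M-e)$ gives $\MMS_i(N',M') \geq \MMS_i(N \setminus \{a\}, M - e)$, and chaining this with the single-removal bound yields $\MMS_i(N',M') \geq \MMS_i(N,M)$, as required.

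I do not anticipate a serious obstacle. The essential mechanism—that throwing away goods and fusing bundles can only help an agent who is already guaranteed the value of \emph{every} bundle—rests squarely on the nonnegativity of $v_i$, which is exactly the goods assumption. The only point requiring care is the bookkeeping that at each deletion step the live instance still has at least two bundles, so that a partner bundle $X_q$ is available for the merge; this holds because the induction never deletes past the target size $|N'| \geq 1$, so whenever an item remains to be removed the current number of bundles is at least $|N'| + 1 \geq 2$.
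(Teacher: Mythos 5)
Your proof is correct, but it follows a genuinely different route from the paper's. The paper argues in one shot: starting from an optimal $n$-partition $\bX$ of $(N,M)$, it picks at most $k$ bundles that jointly contain all deleted items $M\setminus M'$, discards those bundles \emph{entirely}, and observes that the surviving $n-k$ bundles partition a set $M^-\subseteq M'$ with every bundle worth at least $\MMS_i(N,M)$; it then finishes via monotonicity in the item set, $\MMS_i(N',M')\geq \MMS_i(N',M^-)$, since padding bundles with extra goods cannot hurt. You instead induct on $k$ with a single-removal step: delete one item and merge its reduced bundle into a partner bundle. The trade-offs are as follows. The paper's argument is shorter and avoids induction, but because the discarded bundles may contain items of $M'$, it must pass through the intermediate set $M^-$ and invoke the item-monotonicity inequality, which it asserts without proof (it, too, rests on nonnegativity of $v_i$). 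Your merge never throws away a kept item, so the induction lands exactly on $(N',M')$ with no auxiliary monotonicity claim; the price is the bookkeeping that a partner bundle for the merge always exists, which you handle correctly by noting the current number of bundles stays at least $|N'|+1\geq 2$ while deletions remain. Both proofs hinge on the same essential mechanism — nonnegativity of $v_i$ guarantees the surgery on an optimal partition never decreases any bundle's value — so each is a valid substitute for the other.
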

    \begin{proof}
        Let $\bX = \{X_1,\cdots,X_n\}$ is the partition of items $M$ under the valuation function $v_i$ that achieves the maximin value $\MMS_i$. 
        Then we have $v_i(X_j) \geq \MMS_i$ for any $j\in N$.
        Let $k = |N\setminus N'| = |M\setminus M'|$, we can find at most $k$ bundles in $\bX$ (say $X_1, X_2, \cdots X_k$) such that any item in $M\setminus M'$ belongs to some bundle in $\{X_1, \cdots, X_k\}$, i.e. $M\setminus M' \subseteq X_1\cup X_2\cup\cdots\cup X_k$.
        Let $M^- = M\setminus \{X_1\cup X_2\cup\cdots\cup X_k\}$.
        Then we know that $\{X_{k+1},\cdots, X_n\}$ is a $(n-k)$-partition of items $M^-$ while $M^-$ is a subset of $M'$.
        Hence we have $\MMS_i(N', M') \geq \MMS_i(N', M^-)\geq \min_{j > k}v_i(X_j) \geq \MMS_i$.
    \end{proof}

	\paragraph{Algorithm for Monotone Instances.}
	We maintain a set of active agents $A\subseteq N$, and we only allocate online items to active agents.
	Initially, all agents are active, and we inactivate an agent $i$ once we can ensure that she receives a bundle of value at least $0.5\cdot \MMS_i$.
	In general, we follow the principle of distinguishing large and small items, as in Section~\ref{ssec:good-2-agents}.
	However, we dynamically update the threshold for defining large items throughout the execution of the algorithm, with the guarantee that once an agent receives an item that is large to her, the allocation will be $0.5\cdot \MMS_i$ to her, no matter what items would arrive in the future.
	Our algorithm has two phases.
	In phase-$1$ we allocate the large items, each of which to a unique agent that will be inactivated immediately after the allocation.
	When the online item is small to all agents, our algorithm enters phase-$2$, during which the items are allocated greedily to the remaining agents (see Algorithm~\ref{alg:monotone-goods}).
	
	\begin{algorithm}[htb]
		\SetKwInOut{KwIn}{Input}\SetKwInOut{KwOut}{Output}
		\SetKw{Break}{Break}
		\caption{Algorithm-for-Monotone-Instances-for-Goods}\label{alg:monotone-goods} 
		Initialize: $\text{inPhase1}\gets \textbf{true}$, $A\gets N$, $L\gets \emptyset$ and for any $i\in N$, $X_i\gets \emptyset$ \;
		\For{each online item $e\in M$}{
			\uIf{$\text{inPhase1} = \textbf{true}$}{
				\uIf{there exists $i\in A$ with $v_i(e) \geq \frac{v_i(M\setminus L)}{2|A|}$}{
					$X_{i}\gets X_{i} + e$, $A\gets A\setminus\{ i \}$\;
					update the set of arrived large items: $L \gets L + e$ \;
				}
				\Else{
					$\text{inPhase1}\gets \textbf{false}$ \;
					define $\beta_i \gets  \frac{v_i(M\setminus L)}{2|A|}$ for all $i\in N$ \;
				}
			}
			\uElseIf{$|A| \geq 2$}{
				$i\gets \argmax_{j\in A}\{\frac{v_j(e)}{\beta_j}\}$\;
				$X_{i}\gets X_{i}+e$\;
				\If{ $v_i(X_i) \geq \beta_i$ }{
					$A\gets A\setminus \{ i \}$ \;
				}
			}
			\Else{
				assign $e$ to the only active agent \;
			}
			
		}
		\KwOut{$\mathbf{X} = (X_1, X_2,\cdots, X_n)$}
	\end{algorithm}

	\begin{proofof}{Theorem~\ref{theorem:monotone-for-goods}}
		We show that Algorithm~\ref{alg:monotone-goods} returns a $0.5$-MMS allocation.
		Observe that in phase-$1$ we inactivate an agent once it receives an item (which is large).
		Therefore we always have $|A| = n - |L|$.
		Then by Lemma~\ref{lemma:MMS'-greaterthan-MMS}, we have $\MMS_i(A, M\setminus L) \geq \MMS_i$ for all agent $i\in N$.
		If agent $i$ is inactivated in phase-$1$, then we have
		\begin{equation*}
		v_i(X_i) \geq \frac{v_i(M\setminus L)}{2|A|}\geq \frac{1}{2}\cdot \MMS_i(A, M\setminus L) \geq \frac{1}{2}\cdot \MMS_i.
		\end{equation*}
		
		Now we consider the moment when the algorithm enters phase-$2$ and let $A^*$ be the agents that are still active.
		We define $\beta_i= \frac{v_i(M\setminus L)}{2|A^*|}$ to be half the average value when dividing items in $M\setminus L$ to agents in $A^*$ under $v_i$, and we have $\beta_i \geq \frac{1}{2}\cdot \MMS_i$.
		Note that the algorithm enters phase-$2$ when the online item $e$ has value $v_i(e) \leq \beta_i$ to all agents $i\in A^*$.
		Since the instance is monotone, we know that in phase-$2$, for all $i\in A^*$, every online item $e$ has value $v_i(e) \leq \beta_i$.
		Recall that our algorithm inactivates an agent in phase-$2$ if $v_i(X_i) \geq \beta_i \geq \frac{1}{2}\cdot \MMS_i$, unless she is the only active agent.
		Therefore the allocation is $0.5$-MMS to these inactivated agents.
		Moreover, we show that at any point in time during the execution of the algorithm, if $j\in A^*$ is active and $i\in A^*$ is inactivated, then we have $v_j(X_i) < 2\beta_j$.
		Let $e$ be the last item allocated to $X_i$.
		Since before allocating item $e$ to $X_i$, agent $i$ is active, we have $v_i(X_i - e) < \beta_i$, which gives $v_i(X_i) < 2\beta_i$ because $v_i(e)\leq \beta_i$.
		By the greedy allocation, for each item $e'\in X_i$ we have $\frac{v_i(e')}{\beta_i} \geq \frac{v_j(e')}{\beta_j}$, which implies
		\begin{equation*}
		v_j(X_i) = \sum_{e' \in X_i} v_j(e') \leq \sum_{e' \in X_i} (\frac{\beta_j}{\beta_i}\cdot v_i(e')) \leq \frac{\beta_j}{\beta_i} \cdot v_i(X_i) < 2\beta_j.
		\end{equation*}
		Following a similar argument we can show that for any two active agents $j$ and $j'$ we have $v_j(X_{j'}) < \beta_j$.
		Therefore, if there exists an agent $i\in A^*$ with $v_i(X_i) < 0.5\cdot \MMS_i \leq \beta_i$ in the final allocation, then for all agent $j\in A^*$ we have $v_i(X_j) < 2\beta_i$, which leads to a contradiction that
		\begin{equation*}
		v_i(M\setminus L) = \sum_{j\in A^*} v_i(X_j) < |A^*|\cdot 2\beta_i = v_i(M\setminus L).
		\end{equation*}
		Hence at the end of the algorithm, we can guarantee that all agent $i\in N$ has $v_i(X_i) \geq 0.5\cdot \MMS_i$.
		Finally, note that the allocation of each item takes $O(n)$ time and the $0.5$-MMS allocation can be computed in $O(n m)$ time.
	\end{proofof}

    \subsection{Instances with Small Items}
    
    In this section, we consider small-item instances for the allocation of goods.
    \begin{definition}[Small-good Instances]
    We call an online instance \emph{small-good} instance if there exists some $\alpha < 1$ such that
        \begin{equation*}
            \forall i\in N, e\in M, v_i(e) \leq \alpha.
        \end{equation*}
    \end{definition}

    Similar to the monotone instances, small items instances somehow remove the difficulty from the arbitrary arrival of items, but with less restriction.
    We show that the problem is much easier to solve under such a condition, by demonstrating the existence of a $(1-\alpha)$-competitive algorithm.
    
    \paragraph{Algorithm for Small-good Instances.}
    Let all agents be active at the beginning and let $A$ be the set of active agents.
    For each online item $e$, we greedily allocate it to the active agent that has maximum value on $e$.
    Once an agent receives a collection of items of value at least $1-\alpha$, we inactivate this agent.
    If at some moment only one agent is active, then all future items will be allocated to this agent (see Algorithm~\ref{alg:small-goods}).
    
	\begin{algorithm}[htb]
		\SetKwInOut{KwIn}{Input}\SetKwInOut{KwOut}{Output}
		\SetKw{Break}{Break}
		\caption{Algorithm-for-Small-Good-Instances}\label{alg:small-goods} 
		Initialize: $A\gets N$ and for any $i\in N$, $X_i\gets \emptyset$\;
		\For{each online item $e\in M$}{
			\uIf{$|A|=1$}{
				$X_i\gets X_i + e$, where $i\in A$\;
				}
            \Else{
                $i\gets \argmax_{j\in A}\{v_j(e)\}$;
                $X_{i}\gets X_{i}+e$\;
				\If{$v_i(X_i) \geq 1 - \alpha$}{
					turn agent $i$ into inactive: $A\gets A\setminus \{i\}$\;
                    }
			}
            }
		\KwOut{$\mathbf{X} = (X_1, X_2,\cdots, X_n)$}
	\end{algorithm}

    \begin{theorem}\label{theorem:small-for-goods}
        For small-good instances, Algorithm~\ref{alg:small-goods} computes a $(1-\alpha)$-MMS allocation in $O(m n)$ time.
    \end{theorem}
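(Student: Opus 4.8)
The plan is to prove the stronger claim that in the final allocation \emph{every} agent receives value at least $1-\alpha$ in her own valuation. Since for goods $\MMS_i \leq (1/n)\cdot v_i(M) = 1$, this gives $v_i(X_i) \geq 1-\alpha = (1-\alpha)\cdot 1 \geq (1-\alpha)\cdot\MMS_i$ and hence a $(1-\alpha)$-MMS allocation. The running time is then routine: allocating each item requires an $\argmax$ over the active agents, costing $O(n)$, for a total of $O(mn)$. So the entire content is the value guarantee $v_i(X_i)\geq 1-\alpha$, and I would dispose of the easy agents first: any agent turned inactive during the run satisfies $v_i(X_i)\geq 1-\alpha$ at the moment of inactivation by the inactivation rule, and receives no further items afterwards, so this bound persists to the end.

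The difficulty is thus concentrated in the agents still active when the items run out; call this set $A^*$. The observation driving everything is that an agent $i\in A^*$ was active throughout the whole execution (inactivation is permanent), so every item ever assigned to any other agent $j\neq i$ was assigned while $i$ was active. Because such an item $e$ is handed to the $\argmax$ over active agents, we have $v_j(e)\geq v_i(e)$, and summing over $e\in X_j$ yields the key inequality $v_i(X_j)\le v_j(X_j)$ for all $j\neq i$. I also record two per-agent upper bounds: each inactivated agent $j$ has $v_j(X_j)<1$, since the final item that inactivated her raised her value from below $1-\alpha$ by at most $\alpha$ (using that items are small); and, crucially, if $|A^*|\ge 2$ then $|A|\ge 2$ held at every step (the count is non-increasing), so the inactivation test was always in force and every $j\in A^*$ therefore has $v_j(X_j)<1-\alpha$.

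The main obstacle, and the heart of the proof, is ruling out $|A^*|\ge 2$. Fixing $i\in A^*$, writing $a=|A^*|$, and combining $v_i(X_j)\le v_j(X_j)$ with the two bounds above, I would estimate
\[
\sum_{j\neq i} v_i(X_j) < (n-a)\cdot 1 + (a-1)(1-\alpha),
\]
so that $v_i(X_i)=n-\sum_{j\ne i}v_i(X_j) > a-(a-1)(1-\alpha)=1+(a-1)\alpha$. For $a\ge 2$ this forces $v_i(X_i)>1$, contradicting $v_i(X_i)<1-\alpha$; hence $|A^*|=1$. For the unique survivor the same computation with $a=1$ (now every $j\neq i$ is inactivated, so $v_i(X_j)\le v_j(X_j)<1$) gives $v_i(X_i)>1>1-\alpha$, completing the guarantee for every agent. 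I expect the only place requiring care is the bookkeeping of \emph{when} the threshold check is active: the sole survivor can legitimately exceed $1-\alpha$ while remaining ``active'' precisely because the algorithm skips the test once $|A|=1$, which is exactly what the $a=1$ bound exploits. As a consistency check (not needed for the contradiction), normalization $v_i(M)=n$ together with $v_i(e)\le\alpha$ forces $m\ge n/\alpha$, so there are always enough items to drive $|A|$ down to one.
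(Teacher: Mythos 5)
Your proof is correct and takes essentially the same route as the paper's: both rest on the greedy property $v_i(X_j)\le v_j(X_j)$, the bounds $v_j(X_j)<1$ for inactivated agents and $v_j(X_j)<1-\alpha$ for active ones, and a counting argument against $v_i(M)=n$ to force a single surviving active agent who then gets value exceeding $1$. The only difference is cosmetic: the paper derives the contradiction as $v_i(M)<n$ at the arrival of the last item, while you rearrange it as a lower bound $v_i(X_i)>1+(a-1)\alpha$ clashing with $v_i(X_i)<1-\alpha$, and you spell out the $|A|=1$ edge case slightly more explicitly.
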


    \begin{proof}
        Consider the moment when the last item $e$ arrives.
        Let $A$ be the set of active agents, and let $i\in A$ be an arbitrary agent in $A$.
        For all agent $j\in N\setminus A$ we have
        \begin{equation*}
            v_i(X_j) \leq v_j(X_j) < 1 - \alpha + \alpha = 1,
        \end{equation*}
        where the first inequality is due to the greedy allocation of items and the second inequality follows from the fact that agent $j$ was active before she receives her last item and all items are small to all agents.
        In addition, for all $i'\in A\setminus\{i\}$ (if any), we have
        \begin{equation*}
            v_i(X_{i'})\leq v_{i'}(X_{i'}) < 1-\alpha,
        \end{equation*}
        since both $i$ and $i'$ are active and items are allocated greedily.
        
        Therefore we conclude that $A = \{i\}$, because otherwise we have the contradiction that
        \begin{equation*}
            v_i(M) = v_i(X_i) + \sum_{j\in N\setminus A} v_i(X_j) + \sum_{i'\in A\setminus \{i\}} v_i(X_{i'}) + v_i(e) < n.
        \end{equation*}
        
        Note that for any inactive agent, the allocation is $(1-\alpha)$-MMS.
        Hence it suffices to show that $v_i(X_i+e)\geq 1-\alpha$ for the last active agent $i\in A$, which holds because
        \begin{equation*}
            v_i(X_i+e) =  v_i(M) - \sum_{j\in N\setminus A} v_i(X_j) > n - (n-1) = 1.
        \end{equation*}
        
        Since the allocation of each item takes $O(n)$ time, the algorithm returns a $(1-\alpha)$-MMS allocation in $O(m n)$ time.
    \end{proof}

	\section{Allocation of Chores} \label{sec:chores}
	
	In this section, we consider the allocation of chores.
	Recall that for the allocation of chores, each agent $i\in N$ has a cost $c_i(e)\geq 0$ on item $e\in M$, and the competitive ratios are at least $1$ (thus the smaller the better).
	Also recall that when agents have identical cost functions, optimal competitive ratios $1.585$~\cite{journals/scheduling/KellererKG15} and $4/3$~\cite{journals/orl/KellererKST97} have been proved for $n\geq 3$ and $n=2$, respectively.
	We focus on the case when agents have general additive cost functions.
	In contrast to the allocation of goods, we show that MMS allocation of chores admits constant competitive algorithms even for general number of agents.
	For $n\geq 3$ agents, we propose a $(2-1/n)$-competitive algorithm using a similar idea of greedy allocation as in~\cite{conf/www/LiLW22}; for $n=2$ agents, we improve the competitive ratio to $\sqrt{2}$ and show that the competitive ratio is at least $15/11$ for any online algorithm.
	Note that for the allocation of chores, we have $\MMS_i \geq (1/n)\cdot c_i(M) = 1$.
	Therefore as long as an agent receives a bundle with cost $c_i(X_i) \leq \alpha$, the allocation must be $\alpha$-MMS to her.
	Moreover, we have $\MMS_i \geq \max_{e\in M} \{c_i(e)\}$, because in any allocation the bundle with maximum cost should have cost at least as large as that of the most costly item.
	
	\subsection{Online Approximation Algorithm}
	
	We first consider the general case with $n$ agents and present our algorithm that computes a $(2-1/n)$-MMS allocations in $O(mn)$ time.
	The algorithm follows a similar idea as we have used in the previous section  (refer to Algorithm~\ref{alg:general-n-agents-chores}): (1) each online item is greedily allocated to the active agent that has minimum cost on the item; (2) once an agent receives a collection of items of large cost ($\geq 1-1/n$ in our algorithm), we inactivate this agent.
	Initially, all agents are active, and if at some moment only one agent is active, then all future items will be allocated to this agent.
	
	\begin{algorithm}[htb]
		\SetKwInOut{KwIn}{Input}\SetKwInOut{KwOut}{Output}
		\SetKw{Break}{Break}
		\caption{Algorithm-for-$n$-Agents-for-Chores}\label{alg:general-n-agents-chores} 
		Initialize: $A\gets N$ and for any $i\in N$, $X_i\gets \emptyset$ \;
		\For{each online item $e\in M$}{
		    \uIf{$|A|=1$}{
		        $X_i\gets X_i + e$, where $i\in A$ \;
		    }
			\Else
			{
				$i\gets \argmin_{j\in A}\{c_j(e)\}$\;
				$X_{i}\gets X_{i}+e$\;
				\If{$c_i(X_i) \geq 1 - 1/n$}{
				    turn agent $i$ into inactive: $A\gets A\setminus \{i\}$ \;
				}
			}
		}
		\KwOut{$\mathbf{X} = (X_1, X_2,\cdots, X_n)$}
	\end{algorithm}
	
	\begin{theorem}
		For $n\geq 2$ agents, Algorithm~\ref{alg:general-n-agents-chores} computes a $(2- \frac{1}{n})$-MMS allocation in $O(mn)$ time.
	\end{theorem}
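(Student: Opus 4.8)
The plan is to show that in the final allocation every agent $i\in N$ satisfies $c_i(X_i)\leq (2-\tfrac{1}{n})\MMS_i$, organizing the argument around how the algorithm disposes of each agent. Throughout I would lean on the two facts recorded just before the statement: $\MMS_i\geq 1$ and $\MMS_i\geq\max_{e\in M}c_i(e)$. Since $\MMS_i\geq 1$, it suffices in each case to bound the absolute cost $c_i(X_i)$ and then multiply through by $\MMS_i$.

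First I would treat an agent $i$ that the algorithm \emph{inactivates} upon crossing the threshold. Letting $e$ be the last item added to $X_i$, the inactivation rule guarantees $c_i(X_i-e)<1-\tfrac{1}{n}$, and since $c_i(e)\leq\MMS_i$ we get $c_i(X_i)<(1-\tfrac{1}{n})+\MMS_i$. Because $\MMS_i\geq 1$ implies $1-\tfrac{1}{n}\leq(1-\tfrac{1}{n})\MMS_i$, this rearranges to $c_i(X_i)<(2-\tfrac{1}{n})\MMS_i$, as required. An agent that survives to termination in the regime where at least two agents remain active is even easier: the immediate post-assignment check forces $c_i(X_i)<1-\tfrac{1}{n}\leq\MMS_i$.

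The main case, and the step I expect to be the crux, is the \emph{last surviving active agent} $i^*$, i.e.\ the agent that eventually becomes the unique active agent and thereafter receives every remaining item. The key observation is that $i^*$ is never inactivated, hence it is active at every moment some other agent $j\neq i^*$ receives an item; since the greedy rule assigns each such item to the active agent of minimum cost, we have $c_j(e)\leq c_{i^*}(e)$ for each $e\in X_j$, and therefore $c_{i^*}(X_j)\geq c_j(X_j)$. As $i^*$ is the unique survivor, each of the other $n-1$ agents was inactivated, so $c_j(X_j)\geq 1-\tfrac{1}{n}$ and thus $c_{i^*}(X_j)\geq 1-\tfrac{1}{n}$. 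Summing over the $n-1$ other agents and using $c_{i^*}(M)=n$ yields
\begin{equation*}
c_{i^*}(X_{i^*}) = n - \sum_{j\neq i^*} c_{i^*}(X_j) \leq n - (n-1)\Bigl(1-\tfrac{1}{n}\Bigr) = 2 - \tfrac{1}{n} \leq \Bigl(2-\tfrac{1}{n}\Bigr)\MMS_{i^*},
\end{equation*}
where the last inequality again invokes $\MMS_{i^*}\geq 1$.

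The main obstacle is justifying $c_{i^*}(X_j)\geq c_j(X_j)$: one must verify that \emph{no} item of $X_j$ was handed to $j$ after $i^*$ had become inactive, which is exactly why it matters that $i^*$ is the last agent to remain active rather than an arbitrary survivor. Once this is in place the three cases together cover all agents, so the returned allocation is $(2-\tfrac{1}{n})$-MMS. For the running time, each of the $m$ items is processed by computing an argmin over at most $n$ active agents and performing an $O(1)$ threshold test, giving $O(mn)$ overall and completing the proof.
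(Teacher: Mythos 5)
Your proposal is correct and follows essentially the same route as the paper's own proof: inactivated agents are bounded by the threshold plus one item whose cost is at most $\MMS_i$, and the last surviving active agent is bounded by the greedy property $c_{i^*}(X_j)\geq c_j(X_j)\geq 1-\tfrac{1}{n}$ together with $c_{i^*}(M)=n$. The point you flag as the crux is handled in the paper by exactly the observation you make—the unique survivor is active throughout, so both agents are active whenever $j$ receives an item—so there is no gap.
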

	\begin{proof}
		Note that throughout the execution of the algorithm, if $|A|\geq 2$, then all active agents $i\in A$ has $c_i(X_i) < 1-1/n$.
		Therefore, for any agent $i$ that is inactivated, let $e_i$ be the last item agent $i$ receives, then we have
		\begin{equation*}
			c_i(X_i) < 1-1/n + c_i(e_i) \leq (2-1/n) \cdot \MMS_i,
		\end{equation*}
		where the equality follows from the fact that $\MMS_i\geq \max_{e\in M}\{c_i(e)\}$ and $\MMS_i\geq 1$ for the allocation of chores.
		Hence the final allocation is $(2-1/n)$-MMS to all inactive agents, and it remains to consider the case when $|A| = 1$ at the end of the algorithm.
		
		Let $i$ be the only active agent to which the last item is allocated.
		By the design of the algorithm, for each $j\neq i$ (that is already inactivated), we have $c_j(X_j) \geq 1-1/n$.
		Moreover, by the greedy allocation of the algorithm, for all $e\in X_j$ we have $c_j(e)\leq c_i(e)$ (because both agents $i$ and $j$ are active when $e$ arrives).
		Hence we have $c_i(X_j) \geq c_j(X_j)$, which implies
		\begin{equation*}
			c_i(X_i) = c_i(M) - \sum_{j\neq i} c_i(X_j) 
			\leq n - (n-1)\cdot (1-1/n) = 2 - 1/n \leq (2-1/n)\cdot \MMS_i.
		\end{equation*}
		Therefore, the allocation is also $(2-1/n)$-MMS to agent $i$.
		Since the allocation of each item takes $O(n)$ time, the algorithm returns the $(2-1/n)$-MMS allocation in $O(n m)$ time.
	\end{proof}
	
	\subsection{Two Agents} \label{ssec:chores-2-agents}
	
	The above result gives a $1.5$-competitive algorithm for $n=2$ agents.
	In this section we improve the ratio to $\sqrt{2}\approx 1.414$.
	We complement our algorithmic result with hard instances showing that no online algorithm can do better than $15/11\approx 1.364$-competitive.
	The key observation towards this improvement is to mimic the bin-packing algorithms~\cite{conf/aaai/AzizRSW17,conf/sigecom/HuangL21} for approximating MMS for chores.
    In particular, as long as the costs of the two agents do not differ by a factor larger than $\sqrt{2}$, we deviate from the greedy allocation and treat the item as equally costly to the two agents, and allocate the item to a designated agent (agent $1$ in our algorithm).
    In addition, to ensure a bounded competitive ratio, we dynamically update a lower bound $\alpha_i$ for $\MMS_i$, for both $i\in \{1,2\}$.
    Our algorithm makes sure that each allocation of item does not result in $c_i(X_i) \geq \sqrt{2}\cdot \alpha_i$ for both $i\in \{1,2\}$.
	\paragraph{A $\sqrt{2}$-Competitive Algorithm for Two Agents.}
	Throughout the execution of the algorithm, we maintain that $\alpha_i = \max\{1, \max_{e: \text{arrived}} \{c_i(e)\} \}$.
	As argued before, we have $\MMS_i \geq \alpha_i$.
    For each online item $e$, we first identify the agents $A = \{i\in N: c_i(X_i + e) \leq \sqrt{2} \cdot \alpha_i\}$ that can receive the item $e$ without violating the competitive ratio $\sqrt{2}$.
    If $|A| = 1$ then we allocate $e$ to the only agent in $A$; otherwise if $c_1(e) \leq \sqrt{2}\cdot c_2(e)$, we allocate item $e$ to agent $1$; otherwise we allocate item $e$ to agent $2$ (refer to Algorithm~\ref{alg:general-2-agents}).
	\begin{algorithm}[htb]
		\SetKwInOut{KwIn}{Input}\SetKwInOut{KwOut}{Output}
		\SetKw{Break}{Break}
		\caption{Algorithm-for-Two-Agents-for-Chores}\label{alg:general-2-agents} 
		Initialize: $X_1 \gets \emptyset$, $X_2 \gets \emptyset$, $\alpha_1 \gets 1$ and $\alpha_2 \gets 1$ \;
		\For{each online item $e\in M$}{
			update $\alpha_1 \gets \max\{ \alpha_1, c_1(e) \}$ and $\alpha_2 \gets \max\{ \alpha_2, c_2(e) \}$ \;
			$A \gets \{i\in N: c_i(X_i + e) \leq \sqrt{2} \cdot \alpha_i\}$ \;
			\uIf{$|A| = 1$}{
				$X_i \gets X_i + e$, where $i\in A$ \;
			}
			\uElseIf{$c_1(e) \leq \sqrt{2}\cdot c_2(e)$}{
				$X_1 \gets X_1 + e$\;
			}
			\Else{
				$X_2 \gets X_2 + e$\;
			}
		}
		\KwOut{$\mathbf{X} = (X_1, X_2)$}
	\end{algorithm}
	\begin{theorem}
		For $n = 2$ agents, Algorithm~\ref{alg:general-2-agents} computes a $\sqrt{2}$-MMS allocation in $O(m)$ time.
	\end{theorem}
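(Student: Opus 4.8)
The plan is to reduce the statement to a single invariant about the dynamically maintained thresholds $\alpha_i$. Since the excerpt already records $\MMS_i \geq 1$ and $\MMS_i \geq \max_{e\in M} c_i(e)$, the value $\alpha_i = \max\{1,\max_{e:\,\text{arrived}} c_i(e)\}$ satisfies $\alpha_i \leq \MMS_i$ at every step. Hence it suffices to prove that the final allocation obeys $c_i(X_i) \leq \sqrt{2}\,\alpha_i$ for both $i\in\{1,2\}$, which then gives $\sqrt 2$-MMS. The first observation is that the algorithm only ever augments $X_i$ when $i\in A$ (a \emph{feasible} assignment, occurring in the $|A|=1$ and $|A|=2$ branches), the sole exception being the $A=\emptyset$ situation, where it is forced to assign $e$ by the ratio rule to an infeasible agent. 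So I would argue per agent: looking at the last item that agent $i$ receives, if it was added while $i$ was feasible then $c_i(X_i)\leq\sqrt 2\,\alpha_i$ holds immediately, and the only genuinely dangerous event is an \emph{overflow}, i.e.\ a step at which $A=\emptyset$.

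The core of the argument is to rule out (or control) overflow. I would consider the first step at which $A=\emptyset$, on item $e$, so that the invariant $c_i(X_i)\leq\sqrt2\,\alpha_i$ still held just before; write $Y_1,Y_2$ for the two bundles at that moment and assume w.l.o.g.\ that the ratio rule points $e$ to agent $1$, i.e.\ $c_1(e)\leq\sqrt2\,c_2(e)$. Infeasibility of agent $1$ gives $c_1(Y_1)+c_1(e) > \sqrt2\,\alpha_1 \geq \sqrt2$, so the normalization $c_1(M)=2$ forces the cross-cost $c_1(Y_2) < 2-\sqrt2$. Now I would invoke the balanced allocation rule: every item placed in $X_2$ by the ratio rule satisfies $c_1(f) > \sqrt2\,c_2(f)$, whence $c_2(Y_2) < c_1(Y_2)/\sqrt2 < (2-\sqrt2)/\sqrt2 = \sqrt2 - 1$. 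Combining this with infeasibility of agent $2$, $c_2(Y_2)+c_2(e) > \sqrt2\,\alpha_2 \geq \sqrt2\max\{1,c_2(e)\}$, and splitting on $c_2(e)\leq 1$ versus $c_2(e)>1$, one checks in both cases that $c_2(Y_2)+c_2(e) \leq \sqrt2\,\alpha_2$, contradicting the infeasibility of agent $2$. Thus, as long as every item of $X_2$ was placed there by the ratio rule, $A=\emptyset$ cannot occur at all, and $\sqrt2$-MMS follows.

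The main obstacle is exactly the items assigned in the $|A|=1$ branch (call them \emph{forced} items), for which the inequality $c_1(f)>\sqrt2\,c_2(f)$ need not hold, so the clean bound $c_2(Y_2) < \sqrt2-1$ above can fail and an overflow can genuinely happen. To handle these I would bound an overflowed agent $i$ in two regimes. Since $c_i(X_i)\leq c_i(M)=2$ always, the case $\MMS_i\geq\sqrt2$ is trivial, because then $\sqrt2\,\MMS_i\geq 2\geq c_i(X_i)$. It therefore remains to treat $\MMS_i<\sqrt2$, where the MMS-optimal partition must place both parts in $[2-\sqrt2,\sqrt2)$, which sharply limits how ``chunky'' the items can be; in this regime I would account for each forced item by charging its cost against the increase it induces in the corresponding $\alpha$-value, and isolate the last moment at which the other agent became permanently infeasible, after which all remaining items pile onto a single agent and the budget $c_i(M)=2$ again controls the total. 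I expect this forced-item bookkeeping in the $\MMS_i<\sqrt2$ regime to be the delicate, case-heavy part of the proof; the $O(m)$ running time is immediate, as each arriving item is processed with a constant number of comparisons and updates for two agents.
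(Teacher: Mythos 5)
Your reduction (show that $A\neq\emptyset$ at every arrival, using $\alpha_i\leq\MMS_i$) and your treatment of the easy case coincide with the paper's outer argument: if every item of $X_2$ obeys $c_1(f)>\sqrt{2}\,c_2(f)$, then agent $1$'s infeasibility forces $c_1(X_2)<2-\sqrt{2}$, hence $c_2(X_2)<\sqrt{2}-1$, and the split on $c_2(e^*)\leq 1$ versus $c_2(e^*)>1$ contradicts agent $2$'s infeasibility. But the case you defer --- items placed into $X_2$ through the $|A|=1$ branch --- is exactly where the paper's proof does its real work (its Claim: \emph{every} $e'\in X_2$ satisfies $c_1(e')>\sqrt{2}\,c_2(e')$, proved by contradiction), and your sketched plan for that case is not a proof and, as described, would not go through.

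Concretely, the paper rules this case out rather than tolerating it. Let $e_1$ be the \emph{first} item in $X_2$ with $c_1(e_1)\leq\sqrt{2}\,c_2(e_1)$; it can only have entered $X_2$ while agent $1$ was infeasible, so $c_1(X_1'+e_1)>\sqrt{2}$, and since $c_1(e_1)\leq c_1(X_2)<2-\sqrt{2}$ this gives $c_1(X_1')>2\sqrt{2}-2$. Then either $c_2(X_1')>2-\sqrt{2}$, which caps $c_2(X_2+e^*)\leq c_2(M-X_1')<\sqrt{2}$ and contradicts the overflow at $e^*$, or some $e\in X_1'$ has $c_1(e)>\sqrt{2}\,c_2(e)$, i.e.\ was itself forced into $X_1$ while agent $2$ was infeasible; but at that earlier time all of agent $2$'s items satisfy the ratio property (by minimality of $e_1$), so $c_2(X_2'')\geq\sqrt{2}-c_2(e)\geq 2\sqrt{2}-2$ yields $c_1(X_2'')>\sqrt{2}\cdot(2\sqrt{2}-2)=4-2\sqrt{2}$, which forces $c_1(X_1+e^*)\leq 2-c_1(X_2'')<2\sqrt{2}-2<\sqrt{2}$, contradicting agent $1$'s infeasibility at $e^*$. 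Your alternative plan rests on notions that do not exist for this algorithm: a forced item need not ``induce an increase in the $\alpha$-value'' (it is merely an item assigned while the other agent is infeasible, and can be arbitrarily cheap), and there is no ``last moment at which the other agent became permanently infeasible'' --- feasibility is recomputed at every arrival and an infeasible agent can become feasible again, both because $\alpha_i$ grows and because the test $c_i(X_i+e)\leq\sqrt{2}\,\alpha_i$ depends on the incoming item. So the $\MMS_i<\sqrt{2}$ regime is left genuinely open in your write-up; a first-violation, backward-tracing argument like the one above (or something equivalent) is needed to close it.
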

	\begin{proof}
		Since we always have $\MMS_i \geq \alpha_i$, to show that the returned allocation is $\sqrt{2}$-MMS, it suffices to show that when each item $e$ arrives, the set $A$ is not empty.
		Because when $A\neq \emptyset$, our algorithm will allocate $e$ to some agent $i\in A$, which ensures that $c_i(X_i + e)\leq \sqrt{2}\cdot \alpha_i \leq \sqrt{2}\cdot \MMS_i$.
		For the sake of contradiction, we assume that when some online item $e^*$ arrives we have $A = \emptyset$.
		That is, we have $c_i(X_i + e^*) > \sqrt{2}\cdot \alpha_i$ for both $i\in \{1,2\}$, where $X_i$ is the bundle agent $i$ holds when $e$ arrives.
		We claim that under this situation, we have $c_1(e') > \sqrt{2}\cdot c_2(e')$ for all item $e'\in X_2$.
 		\begin{claim}\label{claim:morethansqrt2}
 			For all $e'\in X_2$, $c_1(e') > \sqrt{2}\cdot c_2(e')$.
 		\end{claim}
     	\begin{proof}
		Suppose otherwise and let $e_1$ be the first item assigned to $X_2$ with $c_1(e_1) \leq \sqrt{2}\cdot c_2(e_1)$.
		Let $X'_1$ be the bundle agent $1$ holds when $e_1$ arrives.
		By the design of the algorithm we must have $c_1(X'_1 + e_1) > \sqrt{2}\cdot \alpha_1 \geq \sqrt{2}$, because otherwise $e_1$ will be allocated to agent $1$.
		Recall that we assumed $c_1(X_1 + e^*) > \sqrt{2}\cdot \alpha \geq \sqrt{2}$ for some item $e^*$ that is not allocated in the final allocation, which implies
		\begin{equation*}
		    c_1(X_2) \leq c_1(M) - c_1(X_1 + e^*) < 2 - \sqrt{2}.
		\end{equation*}
		
		Therefore we have $c_1(e_1) \leq c_1(X_2) < 2-\sqrt{2}$ (because $e_1 \in X_2$), which gives
		\begin{equation*}
		    c_1(X'_1) = c_1(X'_1 + e_1) - c_1(e_1) > \sqrt{2} - (2-\sqrt{2}) = 2\sqrt{2} - 2.
		\end{equation*}
        On the other hand, we show that $c_2(X'_1) > 2-\sqrt{2}$.
        Suppose otherwise, then there must exist some $e\in X'_1$ such that $c_1(e) > \sqrt{2}\cdot c_2(e)$, which means that when item $e$ arrives we have $A = \{1\}$.
        Let $X''_1, X''_2$ be the bundles agent $1$ and $2$ hold right before $e$ arrives respectively.
        We have $c_2(X''_2+e) > \sqrt{2}$ since $A = \{1\}$ when item $e$ arrives.
        Since $e \in X'_1$, we have
        \begin{equation*}
            c_2(X''_2) = c_2(X''_2+e) - c_2(e) \geq \sqrt{2}-c_2(X'_1) \geq 2\sqrt{2}-2.
        \end{equation*}
        Recall that $e_1$ is the first item assigned to $X_2$ with $c_1(e_1) \leq \sqrt{2}\cdot c_2(e_1)$, we have $c_1(X''_2) > \sqrt{2} \cdot c_2(X''_2) \geq 4 - 2\sqrt{2}$.
        Then we have $c_1(X_1 + e^*) \leq 2 - c_1(X''_2) < 2\sqrt{2}-2$, which is a contradiction.
        Hence we have $c_2(X'_1) \geq \frac{c_1(X'_1)}{\sqrt{2}} > 2-\sqrt{2}$, which implies 
		$c_2(X_2 + e^*) \leq c_2(M - X'_1) < \sqrt{2}$, and it is a contradiction with our previous assumption that $c_2(X_2 + e^*) > \sqrt{2}\cdot \alpha_2$ for some item $e^*$ (that arrives after $e_1$).
 	    \end{proof}
		Note that $c_1(X_2) \leq c_1(M) - c_1(X_1 + e^*) < 2-\sqrt{2}\cdot \alpha_1 \leq 2-\sqrt{2}$.
		By Claim~\ref{claim:morethansqrt2},
		\begin{equation*}
			c_2(X_2) < \frac{c_1(X_2)}{\sqrt{2}} < \frac{2-\sqrt{2}}{\sqrt{2}} = \sqrt{2}-1.
		\end{equation*}
		Recall that $\alpha_2 \geq \max\{1,c_2(e^*)\}$, we have $c_2(X_2 + e^*)  \leq  \sqrt{2}-1 + c_2(e^*) \leq \sqrt{2}\cdot \alpha_2$,
		which is a contradiction.
		Finally, since the allocation of each item takes $O(1)$ time, the algorithm returns a $\sqrt{2}$-MMS allocation in $O(m)$ time.
	\end{proof}
	
	Next, we present a collection of instances and show that no online algorithm can have a competitive ratio smaller than $15/11$.
	
	\begin{theorem}
		For $n=2$ agents, no online algorithm has a competitive ratio smaller than $15/11$.
	\end{theorem}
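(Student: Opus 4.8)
The plan is to construct an adaptive adversary that, against any purported $(15/11-\delta)$-competitive online algorithm, forces a contradiction. Following the template of the upper bound proofs (e.g.\ \cref{theorem:goods-n-agents} and \cref{thm:0.5for-2agents}), the adversary releases items one at a time, choosing the cost vector of each new item based on how the algorithm has allocated the previous ones. The recurring device will be a \emph{finishing move}: at any moment, the adversary can release one final item whose cost to each agent fills that agent's bundle exactly up to total cost $2$, so that $c_i(X_1+X_2+e_{\text{last}}) = 2$ for both agents. Whichever agent receives this item ends up with a bundle whose cost we can bound from below, while that agent's MMS can be made small (often close to $1$, or exactly the max single-item cost), forcing the realized ratio above the threshold for at least one agent.

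The key quantitative step is to design the early items so that after a few rounds the algorithm is trapped: no matter which agent it assigned the items to, at least one agent already carries cost bounded below by roughly $\tfrac{15}{11}$ times the smallest MMS value the adversary can still realize. Concretely, I would release a small batch of items with carefully chosen asymmetric costs (large to one agent, moderate to the other, mirroring the $\epsilon, \sqrt2$-type constructions used in \cref{claim:morethansqrt2}), track the two possible allocation states after each item, and in each branch either (a) immediately apply the finishing move if one agent's accumulated cost already exceeds $\tfrac{15}{11}\cdot\MMS_i$ for the MMS value achievable on the completed instance, or (b) continue with the next adversarial item. Because $\MMS_i \ge \max_{e}\{c_i(e)\}$ and $\MMS_i \ge 1$ (as noted before the algorithm), I can keep both agents' MMS values pinned near their lower bounds while letting one bundle's cost grow, so that the ratio $c_i(X_i)/\MMS_i$ crosses $15/11$. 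The fraction $15/11$ should emerge as the solution of the balancing equation between the two branches, i.e.\ the value at which the adversary is indifferent between punishing agent $1$ and agent $2$.

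The main obstacle, and the part requiring the most care, is the bookkeeping of the branching: since the algorithm is deterministic but adaptive adversaries must anticipate every allocation choice, I must exhibit a finite decision tree of instances such that \emph{every} root-to-leaf path ends in a state where some agent is worse than $(15/11-\delta)$-MMS. Verifying the MMS value on each completed leaf instance is itself nontrivial, because computing $\MMS_i = \min_{X}\max_j c_i(X_j)$ for two agents amounts to the optimal two-way partition of the realized costs; I would argue these by exhibiting an explicit balanced partition and using $\MMS_i \ge \max_e c_i(e)$ to certify optimality. The tightness of the bound then forces the specific numerical choices of the item costs, so the delicate work is solving the small system of inequalities (one per leaf) simultaneously and confirming they are all satisfiable by a single instance family as $\delta \to 0$.
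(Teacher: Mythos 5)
There is a genuine gap: what you have written is a plan for a proof, not a proof. For a lower-bound statement of this kind, the entire mathematical content is the explicit adversarial construction---the concrete item costs, the finite case analysis over the algorithm's possible responses, and the verification of the MMS values at every leaf---and that is precisely what you defer (``the delicate work is solving the small system of inequalities \ldots''). In particular, the constant $15/11$ is never derived: you assert it ``should emerge as the solution of the balancing equation between the two branches,'' but no such equation is written down, no item costs are exhibited, no branch is ever closed, and nothing certifies that the optimum of the unspecified system is $15/11$ rather than some other constant. As it stands, the argument establishes nothing.

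For comparison, the paper's proof is exactly the decision tree you gesture at, carried out with explicit numbers. The adversary opens with an item of cost $(4/11,\,4/11)$, taken w.l.o.g.\ by agent $1$, then an item of cost $(4/11,\,3/11)$. If the algorithm gives this to agent $1$ as well, two further items of costs $(7/11,\,7/11)$ and $(7/11,\,8/11)$ complete an instance with $\MMS_1=\MMS_2=1$ in which some agent necessarily ends with cost at least $15/11$. Hence the second item goes to agent $2$. The adversary then releases up to four items, each of cost $(3/11,\,1/11)$. If the algorithm ever gives one of them to agent $2$ (bringing her cost to $4/11$), the adversary answers with a single punishing item of agent-$2$ cost $1$ and agent-$1$ cost chosen from $\{1,\,8/11,\,5/11,\,2/11\}$ according to the round; one checks that this makes $\MMS_1=\MMS_2=1$ while whichever agent receives the punishing item lands at cost exactly $15/11$. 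Otherwise all four items go to agent $1$, whose cost reaches $4/11+4\cdot 3/11=16/11>15/11$ against $\MMS_1=1$. Note also that your proposed ``finishing move'' (a last item filling both agents exactly to total cost $2$) is more rigid than what is needed or used: in the branch where the algorithm concentrates everything on agent $1$, no finishing item is needed at all, and in the punishing branches the MMS values are pinned at $1$ by the partition structure of the realized costs, not by symmetric filling to $2$. These concrete choices are the theorem; without them there is no proof to evaluate.
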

	\begin{proof}
		Assume the contrary and suppose there exists an online algorithm with a competitive ratio smaller than $15/11$.
		In the following, we construct some instances and show that the algorithm returns an allocation with an approximation ratio (w.r.t. to MMS) of at least $15/11$ on at least one of the instances, which is a contradiction.
		
		Let the first item be $e_1$ with $c_1(e_1) = c_2(e_1) = 4/11$, and we assume w.l.o.g. that it is allocated to agent $1$.
		Then we construct the second item $e_2$ that values $4/11$ to agent $1$ and $3/11$ to agent $2$.
		We argue that the algorithm with a competitive ratio smaller than $15/11$ can not assign $e_2$ to agent $1$.
		
		\begin{table}[htbp]
		    \centering
		    \begin{tabular}{c|c|c|c|c}
		    	& $e_1$ & $e_2$ & $e_3$ & $e_4$ \\
		    	\hline
		    	\textbf{1} & $\boxed{4/11}$ & $\boxed{4/11}$ & $7/11$ & $7/11$ \\ \hline
		    	\textbf{2} & $4/11$  & $3/11$ & $7/11$   & $8/11$
	    	\end{tabular}
	   	    \caption{Assume item $e_2$ is assigned to agent $1$, we construct the following instance.}\label{tab:Two-Chores-a}
    	\end{table}	
    	
		Assume otherwise, i.e., item $e_2$ is also allocated to agent $1$, and we consider the instance as shown in Table~\ref{tab:Two-Chores-a}.
		Note that for this instance we have $\MMS_1 = \MMS_2 = 1$.
		However, no matter how $e_3$ and $e_4$ are allocated, there must be an agent with a total cost of at least $15/11$, which is a contradiction.
        Therefore the algorithm must allocate $e_2$ to agent $2$.
        Then we construct the following instance with $7$ items (see Table~\ref{tab:Two-Chores}), and show that the algorithm must allocate items $e_3,e_4,e_5$ and $e_6$ to agent $1$.
        
    	\begin{table}[htbp]
    		\centering
	    	\begin{tabular}{c|c|c|c|c|c|c|c}
			    & $e_1$ & $e_2$ & $e_3$ & $e_4$ & $e_5$ & $e_6$ & $e_7$ \\
			    \hline
			    \textbf{1} & $\boxed{4/11}$ & $4/11$         & $\boxed{3/11}$ & $\boxed{3/11}$ & $\boxed{3/11}$ & $\boxed{3/11}$          & $2/11$ \\ \hline
			    \textbf{2} & $4/11$         & $\boxed{3/11}$ & $1/11$       & $1/11$         & $1/11$         & ${1/11}$ & 1 
		    \end{tabular}
            \caption{Assume item $e_2$ is assigned to agent $2$, we construct the following instance.}\label{tab:Two-Chores}
	    \end{table}
	    
	    Assume otherwise and let $e_i$ be the first item in $\{e_3,e_4,e_5,e_6\}$ allocated to agent $2$.
	    Note that right after the allocation we have $c_2(X_2) = 4/11$.
	    Then we consider another instance in which the next item $e_{i+1}$ has
	    \begin{equation*}
	        c_1(e_{i+1}) = 
	        \begin{cases}
	        1, \qquad\ \text{ if } i=3 \\
	        8/11, \quad \text{if } i=4 \\
	        5/11, \quad \text{if } i=5 \\
	        2/11, \quad \text{if } i=6
	        \end{cases}
	        \text{ and }
	        \quad
	        c_2(e_{i+1}) = 1.
	    \end{equation*}
	   
	    It can be verified that in all cases, $\MMS_1 = \MMS_2 = 1$ but whoever receives item $e_{i+1}$ would have cost at least $15/11$, which is a contradiction.
		Hence we have $\{e_3, e_4, e_5,e_6\} \subseteq X_1$, which is also a contradiction because $c_1(X_1) \geq 16/11 > 15/11$ (see Table~\ref{tab:Two-Chores}).
	\end{proof}
	
\subsection{Monotone Instances}\label{ssec:mono-chores}

	In this section, we consider monotone instances for the allocation of chores.
	Suppose the online items are indexed by $\{ e_1, \ldots, e_m \}$ following their arrival order.
	
	\begin{definition}[Monotone Instances for Chores]
		We call an online instance \emph{monotone} if for all agent $i\in N$, we have
		\begin{equation*}
			c_i(e_1)\geq c_i(e_2)\geq \cdots \geq c_i(e_m).
		\end{equation*}
	\end{definition}
	Since the agents agree on the same ordering of items, we can apply the algorithms for approximating MMS for identical ordering (IDO) instances that do not require reallocation of items.
    In particular, we use the following ordinal algorithm called Sesqui-Round Robin proposed by Aziz et al.~\cite{aziz2022approximate}.
    Let $p = n+\lceil \frac{n}{2} \rceil$.
    The algorithm organizes every consecutive $p$ items as a group, and for each group, the $i$-th item is allocated to agent $j = \frac{2n+1}{2} - \left| i-\frac{2n+1}{2} \right|$.
    For example, for the first group (which contains items $\{ e_1,e_2,\ldots,e_p \}$), item $e_i \in \{e_1, e_2,\ldots, e_n\}$ is allocated to agent $i$; item $e_i\in \{ e_{n+1},\ldots,e_p\}$ is allocated to agent $2n+1-i$.

    Formally speaking, the algorithm allocates items solely depending on the index of the item: each item $e_j\in M$ will be allocated to agent $f(j)\in N$, where $f(j)  = \frac{2n+1}{2} - \left|(j-1 \mod p) - \frac{2n - 1}{2}\right|$.

	\begin{algorithm}[htb]
		\SetKwInOut{KwIn}{Input}\SetKwInOut{KwOut}{Output}
		\SetKw{Break}{Break}
		\caption{Sesqui-Round Robin Algorithm}\label{alg:monotone-chores} 
		Input: Monotone instance with $c_i(e_1) \geq c_i(e_2) \geq \cdots \geq c_i(e_m)$ for all $i\in N$\;
		Initialize: $X_i \gets \emptyset$ for all $i\in N$\;
		
		Let $p \gets n+\lceil \frac{n}{2} \rceil$ and define function $f(j)  = \frac{2n+1}{2} - \left|(j-1 \mod p) - \frac{2n - 1}{2}\right|$ \;
		\For{each online item $e_j\in M$}{
		    $i \gets f(j)$\;
		    $X_{i} \gets X_{i} + e_j$\;
		    }
		\KwOut{$\mathbf{X} = \{X_1,\cdots, X_n\}$}
	\end{algorithm}
    
	\begin{theorem}[Theorem 3.3 of \cite{aziz2022approximate}]
	    For any instance in which $c_i(e_1)\geq \cdots \geq c_i(e_m)$ for all agent $i\in N$, the algorithm Sesqui-Round Robin returns a $5/3$-MMS allocation.
	\end{theorem}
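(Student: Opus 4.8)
The plan is to reduce the statement to a per-agent estimate and then charge each agent's bundle against her own maximin share using only a handful of lower bounds on $\MMS_i$. The crucial observation is that Algorithm~\ref{alg:monotone-chores} is \emph{ordinal}: the recipient $f(j)$ of item $e_j$ depends on the index $j$ alone. Since in a monotone (IDO) instance all agents agree on the order $c_i(e_1)\ge\cdots\ge c_i(e_m)$, I may fix an arbitrary agent $i$, abbreviate $c(\cdot)=c_i(\cdot)$ and $\mu=\MMS_i$, and prove $c(X_i)\le\frac53\mu$ in isolation. The only facts about $\mu$ I would invoke are the two already recorded in the text, namely $\mu\ge\frac1n c(M)=1$ and $\mu\ge\max_e c(e)=c(e_1)$, together with one second-order pigeonhole bound: among the top $n+1$ items $e_1,\dots,e_{n+1}$ some bundle of the optimal partition must contain two of them, say $e_a,e_b$ with $a<b\le n+1$, whence $\mu\ge c(e_a)+c(e_b)\ge c(e_n)+c(e_{n+1})$.

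Next I would read off, directly from the definition of $f$, which items each agent collects. Writing $p=n+\lceil n/2\rceil$, inside every block of $p$ consecutive items the first $n$ go to agents $1,\dots,n$ (the round-robin sweep) and the following $\lceil n/2\rceil$ go back to agents $n,n-1,\dots$ (the half sweep). Thus the agents split into two classes: the \emph{low} agents $i\le n-\lceil n/2\rceil$ receive exactly one item per block, at positions $i,\,p+i,\,2p+i,\dots$, whereas the \emph{high} agents receive two, the round-robin item at position $(t-1)p+i$ and the half-sweep item at position $(t-1)p+2n+1-i$ in block $t$. The design point, which the analysis must exploit, is that an agent fed an expensive round-robin item (small $i$) is given no second item, while an agent fed a cheap round-robin item (large $i$) is paired with an equally cheap half-sweep item.

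For a low agent the bound is immediate and already reveals where $\frac53$ comes from. The leading item satisfies $c(e_i)\le c(e_1)\le\mu$. Every later item $e_{tp+i}$ ($t\ge1$) has index exceeding all $p$ items of block $t$, hence $c(e_{tp+i})\le\frac1p\sum_{e\in\text{block }t}c(e)$; summing over the disjoint blocks gives $\sum_{t\ge1}c(e_{tp+i})\le\frac1p c(M)=\frac np\le\frac23$, since $p\ge\frac32 n$. Using $c(e_1)\le\mu$ and (via $\mu\ge1$) $\frac23\le\frac23\mu$, I conclude
\[
c(X_i)=c(e_i)+\sum_{t\ge1}c(e_{tp+i})\le c(e_1)+\tfrac{n}{p}\le\mu+\tfrac23\mu=\tfrac53\mu.
\]

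The high agents are the crux, and I expect this to be the main obstacle. A naive repetition of the averaging argument charges two items per block against the block mass, inflating the tail to $\frac{2n}{p}=\frac43$ and, with the leading items bounded by $\mu+\frac12\mu$, yielding only a constant around $\frac{17}{6}$. Recovering $\frac53$ requires genuinely exploiting the structure: the half-sweep item sits at position $2n+1-i\ge n+1$, so it is cheap and controllable through bounds such as $\mu\ge c(e_n)+c(e_{n+1})\ge 2c(e_{n+1})$, while within each block agent $i$ skips the $2(n-i)$ items whose cost lies between her round-robin and her half-sweep item, a reservoir that can absorb part of her pair's cost without reusing the block mass already spent on the round-robin tail. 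Finding the matching of received items to optimal bundles and to these skipped items that drives the constant down to exactly $\frac53$—rather than the $2-\frac1n$ of plain round robin or the crude $\frac{17}{6}$ above—is the delicate combinatorial step, and a uniform ``joint charge of the two leading items to a single $\mu$'' does \emph{not} work for all $i$, so the reservoir must genuinely be used. Since the statement is quoted verbatim as Theorem~3.3 of~\cite{aziz2022approximate}, an acceptable shortcut is simply to invoke that result; the argument above is the route I would take to reprove it.
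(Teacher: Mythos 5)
This statement is not proved in the paper at all: it is imported verbatim as Theorem~3.3 of Aziz et al.~\cite{aziz2022approximate}, and the paper's only contribution surrounding it is the observation that Sesqui-Round Robin is ordinal (the recipient of $e_j$ depends only on $j$), hence needs no reallocations and applies unchanged in the online setting, yielding the $5/3$-competitive corollary. Your closing fallback---simply invoking the cited result---therefore coincides exactly with what the paper does, and as a reading of the paper your proposal is on target, including the observation about ordinality that makes the citation usable online.

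Your attempted self-contained proof, however, contains a genuine gap, which you yourself flag. The low-agent analysis is sound: for $i\le n-\lceil n/2\rceil$ the leading item is charged to $\mu\ge c(e_1)$ and the tail is bounded by $\frac{1}{p}c(M)=\frac{n}{p}\le\frac23\le\frac23\mu$, giving $\frac53\mu$. But for the high agents---those receiving both a round-robin item and a half-sweep item in every block---you only explain why the naive per-block averaging fails (it yields roughly $\frac{17}{6}$) and gesture at the ingredients (the pigeonhole bound $\mu\ge c(e_n)+c(e_{n+1})$, the reservoir of skipped items between positions $i$ and $2n+1-i$) without producing the charging argument that actually closes the case at $\frac53$. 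That charging argument is the entire substance of Aziz et al.'s proof; without it your direct proof establishes the bound only for roughly half the agents. So as a freestanding proof the proposal is incomplete, and it is honest about being so; it is acceptable only in its citation form, which is also the paper's own treatment.
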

	
	Since the above algorithm does not require reallocation of any item, it applies in the online setting. Therefore, we immediately have the following. Moreover, since the allocation decision depends only on the function $f$, the algorithm runs in $O(m)$ time.
	
	\begin{corollary} \label{theorem:monotone-for-chores}
		For the allocation of chores, there exists a polynomial time $5/3$-competitive algorithm for monotone instances.
	\end{corollary}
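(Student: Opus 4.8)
The plan is to observe that the Sesqui-Round Robin procedure is \emph{already} an online algorithm on monotone instances, so the corollary follows essentially immediately from the cited theorem. First I would point out that Algorithm~\ref{alg:monotone-chores} assigns each incoming item $e_j$ to agent $f(j)$, where $f$ depends \emph{only} on the index $j$ (the position of the item in the arrival sequence) and neither on the costs of $e_j$ nor on any previous allocation decisions. Consequently the rule never reallocates an item and makes an irrevocable decision the instant $e_j$ arrives, which is exactly what the online model demands; no knowledge of the future items or of $m$ is required to compute $f(j)$.

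Next I would verify that the hypothesis of the cited theorem is satisfied by the online arrival order. By the definition of a monotone instance for chores, the items arrive in non-increasing cost order simultaneously for every agent, i.e. $c_i(e_1)\geq c_i(e_2)\geq\cdots\geq c_i(e_m)$ for all $i\in N$. This is precisely the precondition under which Theorem~3.3 of Aziz et al.~\cite{aziz2022approximate} guarantees that Sesqui-Round Robin returns a $5/3$-MMS allocation. Since running Algorithm~\ref{alg:monotone-chores} online (determining $f(j)$ upon the arrival of $e_j$) yields exactly the same allocation as running it offline on the same already-sorted sequence, the $5/3$-MMS guarantee transfers verbatim to the online setting. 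For the running time, computing $f(j)$ takes $O(1)$ per item, so the whole allocation is produced in $O(m)$ time, establishing polynomiality.

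The only point that needs to be checked — and it is the crux of the argument rather than a genuine obstacle — is that the indexing assumed inside the theorem coincides with the online arrival order. In a general instance, applying Sesqui-Round Robin requires first sorting the items into decreasing-cost order, a step that is impossible online; what rescues us in the monotone case is that the arrival order \emph{already equals} this decreasing-cost order, so no sorting is needed and the ordinal rule becomes directly online-implementable. I expect no further difficulty: the entire content of the corollary lies in recognizing that an index-based ordinal allocation rule is automatically online-compatible whenever the arrival order agrees with the order the rule expects, which is guaranteed by monotonicity.
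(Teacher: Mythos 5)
Your proposal is correct and follows exactly the paper's own argument: the paper likewise observes that Sesqui-Round Robin requires no reallocation, that its allocation decision depends only on the index via the function $f$ (hence is online-implementable on monotone instances, whose arrival order matches the sorted order required by Theorem 3.3 of Aziz et al.), and that this yields the $5/3$ guarantee in $O(m)$ time.
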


    \subsection{Instances with Small Items}
    
    In this section, we consider small-item instances for the allocation of chores.
    
    \begin{definition}[Small-chore Instances]
    We call an online instance \emph{small-chore} instance if there exists some $\alpha < 1$ such that
    \begin{equation*}
        \forall i\in N, e\in M, c_i(e) \leq \alpha.
    \end{equation*}
    \end{definition}

    We show that there exists an algorithm that computes $(1+\alpha)$-MMS allocations for general number of agents.
    We further improve the result to $\sqrt{\alpha^2-4\alpha+5}+\alpha-1$ for the cases of two agents.
    Particularly, when $\alpha = 1$, the original ratio $\sqrt{2}$ is recovered; when $\alpha = 1/2$, the ratio is $\frac{\sqrt{13}-1}{2}\approx 1.30$.

    We first consider the case with $n$ agents and design an algorithm that allocates items greedily to active agents (see Algorithm~\ref{alg:small-chores}).
    Once an agent receives bundles of large cost we inactive this agent and when there is only one active agent we assign all future items to her.

	\begin{algorithm}[htb]
		\SetKwInOut{KwIn}{Input}\SetKwInOut{KwOut}{Output}
		\SetKw{Break}{Break}
		\caption{Algorithm-for-Small-Chore-Instances}\label{alg:small-chores} 
		Initialize: $A\gets N$ and for any $i\in N$, $X_i\gets \emptyset$\;
		\For{each online item $e\in M$}{
			\uIf{$|A|=1$}{
				$X_i\gets X_i + e$, where $i\in A$\;
				}
            \Else{
                $i\gets \argmin_{j\in A}\{c_j(e)\}$;
                $X_{i}\gets X_{i}+e$\;
				\If{$c_i(X_i) \geq 1$}{
					turn agent $i$ into inactive: $A\gets A\setminus \{i\}$\;
                    }
			}
            }
		\KwOut{$\mathbf{X} = (X_1, X_2,\cdots, X_n)$}
	\end{algorithm}

    \begin{theorem}\label{theorem:small-for-chores}
        For small-chore instances, Algorithm~\ref{alg:small-chores} computes a $(1+\alpha)$-MMS allocation in $O(m n)$ time.
    \end{theorem}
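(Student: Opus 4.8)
The plan is to follow the same two-part accounting used for the general $n$-agent chores algorithm (Algorithm~\ref{alg:general-n-agents-chores}), but to exploit the small-item bound $c_i(e)\le\alpha$ together with the larger deactivation threshold $1$ (instead of $1-1/n$). Recall that for chores we always have $\MMS_i\ge 1$, so it suffices to bound every agent's final cost $c_i(X_i)$ by $1+\alpha$. I first observe that $|A|$ is nonincreasing and never drops below $1$, since deactivation happens only in the branch with $|A|\ge 2$ and once $|A|=1$ no further deactivation occurs. Hence at termination the agents split cleanly into those deactivated during the run and those still active in $A$, and I bound these two groups separately.

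For a deactivated agent $j$, let $e_j$ be the last item it receives. Just before $e_j$ is assigned, $j$ is still active, so $c_j(X_j-e_j)<1$, and since $c_j(e_j)\le\alpha$ we get $c_j(X_j)<1+\alpha\le(1+\alpha)\cdot\MMS_j$. As a deactivated agent receives no further items, this is its final bundle, so the allocation is $(1+\alpha)$-MMS to $j$.

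For the agents active at termination I would split into two subcases. If $|A|\ge 2$, then the $|A|=1$ branch was never executed and every surviving agent $i$ has $c_i(X_i)<1$ (it never reached the threshold), which is trivially within $(1+\alpha)\cdot\MMS_i$. The remaining, and main, case is $|A|=1$, where the single surviving agent $i$ absorbs all overflow items and could a priori accumulate large cost. Here I would use that $i$ is active throughout, so by the greedy minimum-cost rule every item placed into an inactive bundle $X_j$ satisfies $c_j(e)\le c_i(e)$, whence $c_i(X_j)\ge c_j(X_j)\ge 1$, the last inequality because $j$ was deactivated at threshold $1$. Summing over the $n-1$ inactive agents and using $c_i(M)=n$ gives $c_i(X_i)=n-\sum_{j\ne i}c_i(X_j)\le n-(n-1)=1\le(1+\alpha)\cdot\MMS_i$.

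The step I expect to need the most care is this $|A|=1$ case, the only place where one bundle can grow without an explicit per-bundle cap; the bound must instead come from the global budget $c_i(M)=n$ and the fact that the other $n-1$ agents each consume cost at least $1$ as measured by $c_i$. The two ingredients that make it work are that agent $i$ is active whenever any item is handed out (which licenses the comparison $c_i(X_j)\ge c_j(X_j)$) and that the threshold is exactly $1$, which is what sharpens the surviving-agent bound from $2-1/n$ down to $1$. Finally, processing each item requires an $\argmin$ over at most $n$ active agents, so the algorithm runs in $O(mn)$ time.
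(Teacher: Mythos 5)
Your proof is correct and follows essentially the same argument as the paper: deactivated agents are bounded by $1+\alpha$ using the small-item bound and the threshold $1$, while the last active agent is bounded by $1$ via the greedy comparison $c_i(X_j)\ge c_j(X_j)\ge 1$ and the budget $c_i(M)=n$. Your explicit handling of the case where several agents remain active at termination is a minor (and harmless) elaboration of the paper's opening observation that active agents always have cost below $1$.
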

    
    \begin{proof}
    Note that throughout the execution of the algorithm, if $|A|\geq 2$, then all active agents $i\in A$ has $c_i(X_i) < 1$.
    For any agent $i$ that is inactivated, let $e_i$ be the last item agent $i$ receives, then we have
    \begin{equation*}
        c_i(X_i) < 1 + c_i(e_i) \leq 1 + \alpha \leq (1+\alpha) \cdot \MMS_i,
    \end{equation*}
    where the inequality follows from the fact that $\MMS_i\geq 1$ for the allocation of chores.
    Hence the final allocation is $(1+\alpha)$-MMS to all inactive agents.
    It remains to consider the case when $|A| = 1$ at the end of the algorithm.
    Let $i$ be the only active agent to which the last item is allocated.
    By the greedy allocation of the algorithm, for all $j\neq i$ we have $c_i(X_j) \geq c_j(X_j) \geq 1$, which implies
    \begin{align*}
        c_i(X_i) &= c_i(M) - \sum_{j\neq i} c_i(X_j) \leq n - (n-1) = 1 \leq \MMS_i.
    \end{align*}
    Hence, the allocation is also $(1+\alpha)$-MMS to agent $i$.
    Since the allocation of each item takes $O(n)$ time, the algorithm returns a $(1+\alpha)$-MMS allocation in $O(m n)$ time.
    \end{proof}

    In the following, we consider the case of two agents and improve the ratio to $\sqrt{\alpha^2-4\alpha+5} +\alpha -1$.
    The algorithm follows a similar idea from Algorithm~\ref{alg:general-2-agents} but with a different threshold that utilizes the additional information from the small-chore setting.
    
	\begin{algorithm}[htb]
		\SetKwInOut{KwIn}{Input}\SetKwInOut{KwOut}{Output}
		\SetKw{Break}{Break}
		\caption{Algorithm-for-Two-Agents-for-Small-Chore-Instances}\label{alg:small-2-agents} 
		Initialize: $X_1 \gets \emptyset$, $X_2 \gets \emptyset$\;
		\For{each online item $e\in M$}{
			$A \gets \{i\in N: c_i(X_i + e) \leq \gamma \}$ \;
			\uIf{$|A| = 1$}{
				$X_i \gets X_i + e$, where $i\in A$ \;
			}
			\uElseIf{$c_1(e) \leq \frac{2\gamma-2}{2-\gamma}\cdot c_2(e)$}{
				$X_1 \gets X_1 + e$\;
			}
			\Else{
				$X_2 \gets X_2 + e$\;
			}
		}
		\KwOut{$\mathbf{X} = (X_1, X_2)$}
	\end{algorithm}

    \begin{theorem}\label{theorem:small-2-agents}
        For small-chore instances with $n=2$ agents, Algorithm~\ref{alg:small-2-agents} computes a $\gamma$-MMS allocation in $O(m)$ time, where $\gamma = \sqrt{\alpha^2-4\alpha+5} +\alpha -1$.
    \end{theorem}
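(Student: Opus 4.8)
The plan is to reuse the skeleton of the analysis of Algorithm~\ref{alg:general-2-agents}, replacing the constant $\sqrt 2$ by $\gamma$ and the tie-breaking ratio by $t:=\frac{2\gamma-2}{2-\gamma}$, and then to exploit the small-chore bound $c_i(e)\le\alpha$ to sharpen exactly one inequality, which is what buys the improved ratio. First I would record the algebraic identity defining $\gamma$: the equation $2(\gamma-1)(\gamma-\alpha)=(2-\gamma)^2$ is equivalent to $\gamma^2+(2-2\alpha)\gamma+(2\alpha-4)=0$, whose positive root is exactly $\gamma=\sqrt{\alpha^2-4\alpha+5}+\alpha-1$. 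I would also note the two consequences $t(2-\gamma)=2\gamma-2$ and $t(\gamma-\alpha)=2-\gamma$, both immediate from this identity. Since $\MMS_i\ge 1$ for chores, it suffices to guarantee $c_i(X_i)\le\gamma$ for each agent; because the algorithm always places an item on an agent in $A$ whenever $A\neq\emptyset$ (when $|A|=2$ both agents lie in $A$), the whole theorem reduces to showing that $A$ is never empty.

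So I would argue by contradiction, taking $e^*$ to be the first item at which $A=\emptyset$, i.e.\ $c_1(X_1+e^*)>\gamma$ and $c_2(X_2+e^*)>\gamma$ for the bundles $X_1,X_2$ held at that moment. Disjointness together with $c_i(M)=2$ immediately yields $c_1(X_2)<2-\gamma$ and $c_2(X_1)<2-\gamma$. The heart of the argument is the analogue of Claim~\ref{claim:morethansqrt2}: every $e'\in X_2$ satisfies $c_1(e')>t\cdot c_2(e')$. Granting this claim, $c_2(X_2)<c_1(X_2)/t<(2-\gamma)/t=\frac{(2-\gamma)^2}{2(\gamma-1)}=\gamma-\alpha$, where the last equality is the defining identity; but $c_2(X_2+e^*)>\gamma$ with $c_2(e^*)\le\alpha$ forces $c_2(X_2)>\gamma-\alpha$, a contradiction.

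To prove the claim I would take the first item $e_1\in X_2$ violating it, so $c_1(e_1)\le t\cdot c_2(e_1)$; since the tie-break rule would otherwise route $e_1$ to agent $1$ and $A\neq\emptyset$ before $e^*$, it must have entered $X_2$ with $A=\{2\}$, giving $c_1(X_1'+e_1)>\gamma$ for agent $1$'s current bundle $X_1'$, hence $c_1(X_1')>\gamma-c_1(e_1)>2\gamma-2$ using $c_1(e_1)\le c_1(X_2)<2-\gamma$. A nested contradiction then shows $c_2(X_1')>2-\gamma$: if not, then $c_1(X_1')>2\gamma-2=t(2-\gamma)\ge t\cdot c_2(X_1')$ produces an item $e\in X_1'$ with $c_1(e)>t\cdot c_2(e)$, which (again since $A\neq\emptyset$ and the rule would otherwise route $e$ to agent $2$) must have been forced onto agent $1$ with $A=\{1\}$, so $c_2(X_2''+e)>\gamma$ for agent $2$'s bundle $X_2''$ at that time. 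Here the small-chore bound enters: $c_2(X_2'')>\gamma-c_2(e)\ge\gamma-\alpha$, and since $e$ precedes $e_1$ all of $X_2''$ lies in $X_2$ before $e_1$ and so satisfies $c_1>t\cdot c_2$, giving $c_1(X_2'')>t\cdot c_2(X_2'')>t(\gamma-\alpha)=2-\gamma$; this contradicts $c_1(X_1+e^*)\le 2-c_1(X_2'')<\gamma$. With $c_2(X_1')>2-\gamma$ established, $c_2(X_2+e^*)\le c_2(M)-c_2(X_1')<\gamma$ contradicts the standing assumption, finishing the claim.

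The main obstacle is this doubly-nested contradiction inside the claim, where one must keep straight three snapshots of the two bundles ($X_1'$ at $e_1$ and $X_2''$ at $e$) and verify that the ordering $e\prec e_1$ lets the minimality of $e_1$ control the sign of $c_1-t\,c_2$ on $X_2''$. The only genuinely new ingredient relative to Algorithm~\ref{alg:general-2-agents} is replacing the crude bound $c_2(X_2'')>2\gamma-2$ by $c_2(X_2'')>\gamma-\alpha$ (and likewise $c_2(X_2)>\gamma-\alpha$ at the end), which is exactly what the small-item assumption supplies and what makes the defining quadratic for $\gamma$ tight; checking $t(\gamma-\alpha)=2-\gamma$ at this root is the calculation that closes both contradictions. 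Finally, the $O(m)$ running time is immediate, as each item is processed in constant time.
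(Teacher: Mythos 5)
Your proposal is correct and takes essentially the same route as the paper's own proof: the same reduction to showing $A\neq\emptyset$ at every arrival, the same key claim that every $e'\in X_2$ satisfies $c_1(e')>\frac{2\gamma-2}{2-\gamma}\cdot c_2(e')$ proved via the first violating item $e_1$ and a nested contradiction involving an item $e\in X_1'$ and the snapshot $X_2''$, and the same two places where the small-chore bound $c_i(e)\le\alpha$ and the defining identity $(2-\gamma)^2=(2\gamma-2)(\gamma-\alpha)$ are invoked. The only differences are presentational: you spell out the quadratic whose positive root is $\gamma$ and the pigeonhole step producing $e$, both of which the paper leaves implicit.
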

    
    \begin{proof}
    To show that the returned allocation is $\gamma$-MMS, it suffices to show that when each item $e$ arrives, the set $A$ is not empty.
    For the sake of contradiction, we assume that when some online item $e^*$ arrives we have $A = \emptyset$.
    That is, we have $c_i(X_i + e^*) > \gamma$ for both $i\in \{1,2\}$, where $X_i$ is the bundle agent $i$ holds right before $e^*$ arrives.
    We claim that under this situation, we have $c_1(e') > \frac{2\gamma-2}{2-\gamma} \cdot c_2(e')$ for all item $e'\in X_2$.
    \begin{claim}\label{claim:morethan}
        For all $e'\in X_2$, $c_1(e') > \frac{2\gamma-2}{2-\gamma} \cdot c_2(e')$.
    \end{claim}
    \begin{proof}
    Suppose otherwise and let $e_1$ be the first item assigned to $X_2$ with $c_1(e_1) \leq \frac{2\gamma-2}{2-\gamma} \cdot c_2(e_1)$.
    Let $X'_1$ be the bundle agent $1$ holds right before $e_1$ arrives.
    By the design of the algorithm, we must have $c_1(X'_1 + e_1) > \gamma$, because otherwise $e_1$ will be allocated to agent $1$.
    Recall that we assumed $c_1(X_1 + e^*) > \gamma$ for some item $e^*$ that is not allocated in the final allocation, which implies
    \begin{equation*}
        c_1(X_2) \leq c_1(M) - c_1(X_1 + e^*) < 2 - \gamma.
    \end{equation*}
    Therefore we have $c_1(e_1) \leq c_1(X_2) < 2-\gamma$ (because $e_1 \in X_2$), which gives
    \begin{equation*}
        c_1(X'_1) = c_1(X'_1 + e_1) - c_1(e_1) >  2\gamma - 2.
    \end{equation*}
    
    On the other hand, we show that $c_2(X'_1) > 2-\gamma$.
    Suppose otherwise, then there must exist some $e\in X'_1$ such that $c_1(e) > \frac{2\gamma - 2}{2-\gamma}\cdot c_2(e)$, which means that when item $e$ arrives we have $A = \{1\}$.
    Let $X''_1$ and $X''_2$ be bundles agent $1$ and $2$ hold right before $e$ arrives, respectively.
    We have $c_2(X''_2+e) > \gamma$ since $A = \{1\}$ when item $e$ arrives.
    Since $c_2(e) \leq \alpha$, we have $c_2(X''_2) \geq \gamma - \alpha$.
    Recall that $e_1$ is the first item assigned to $X_2$ with $c_1(e_1) \leq \frac{2\gamma-2}{2-\gamma} \cdot c_2(e_1)$, we have
    \begin{equation*}
        c_1(X''_2) > \frac{2\gamma-2}{2-\gamma} \cdot c_2(X''_2) \geq \frac{2\gamma-2}{2-\gamma} \cdot (\gamma-\alpha).
    \end{equation*}
    Note that $\gamma$ is the root of the equation $\frac{(2-x)^2}{2x-2}+\alpha = x$, we have
    \begin{equation*}
        c_1(X_1 + e^*) \leq 2 - c_1(X''_2) \leq 2 - \frac{2\gamma-2}{2-\gamma} \cdot (\gamma-\alpha) = \gamma,
    \end{equation*}
    which is a contradiction.
    Hence we have $c_2(X'_1) \geq c_1(X'_1)/(\frac{2\gamma-2}{2-\gamma}) > 2-\gamma$, which implies 
    $c_2(X_2 + e^*) \leq c_2(M - X'_1) < \gamma$, and it is a contradiction with our previous assumption that $c_2(X_2 + e^*) > \gamma$ for some item $e^*$ (that arrives after $e_1$).
    \end{proof}
    Recall that $c_1(X_2) \leq c_1(M) - c_1(X_1 + e^*) < 2- \gamma$.
    By Claim~\ref{claim:morethan}, we have
    \begin{equation*}
        c_2(X_2) < c_1(X_2)/(\frac{2\gamma-2}{2-\gamma}) < \frac{(2-\gamma)^2}{2\gamma-2}.
    \end{equation*}
    Note that $\gamma$ is the root of the equation $\frac{(2-x)^2}{2x-2}+\alpha = x$, we have $c_2(X_2 + e^*) \leq \frac{(2-\gamma)^2}{2\gamma-2} + \alpha = \gamma$,
    which is a contradiction.
    Finally, since the allocation of each item takes $O(1)$ time, the algorithm returns a $\gamma$-MMS allocation in $O(m)$ time.
    \end{proof}

    \section{Experimental Evaluation}
	
	In this section, we perform the experimental evaluation of our algorithms on several generated datasets.
	For a more meaningful demonstration, we only consider the case with $n\geq 3$ agents.
    Therefore, we consider our $0.5$-competitive algorithm for monotone instances for the allocation of goods and the $(2-1/n)$-competitive algorithm for the allocation of chores.
	We compare the approximation ratios of the allocations returned by our algorithms and the greedy algorithm, which greedily allocates each item $e$ to the agent with maximum value (resp. minimum cost) on $e$.
	We generate random datasets with different sizes.
	The small datasets contain $10$ agents and $100$ items; the large datasets contain $50$ agents and $500$ items.
	For each size, we randomly generate 1000 instances.
	Recall that for any $i\in N$, we normalize $v_i(M) = n$ for goods and $c_i(M) = n$ for chores.
	To ensure that $\MMS_i = 1$ for any $i\in N$, we adopt the following generating method.
	Imagine that each agent has one item of value/cost $n$.
	We break this item into $m$ items by first dividing it into $n$ items, each with value/cost $1$, and then randomly break each of these $n$ items into $m/n$ items.
%
	We sort the generated $m$ items in descending order of values/costs.
	Hence, for all agent $i\in N$, we have $v_i(e_1) \geq v_i(e_2) \geq \cdots \geq v_i(e_m)$ and $\MMS_i = 1 $ for goods, $c_i(e_1) \geq c_i(e_2) \geq \cdots \geq c_i(e_m)$ and $\MMS_i = 1$ for chores.
    In experiments for goods, the items arrive in the order of $\{e_1,\ldots,e_m\}$, i.e., as in the monotone instances.
	For the allocation of chores, the arrival order is random.
	
	For the allocation of goods, we compare the performances of the greedy algorithm and our algorithm in Table~\ref{tab:mono_goods}.
	We report the average and minimum approximation ratios returned by the greedy algorithm and our algorithm over the 1000 instances as well as the total running time for both algorithms.
	It can be seen that the performance of the greedy algorithm is significantly worse than our algorithm in both small and large datasets.
	The average approximation ratios returned by the greedy algorithm in small and large datasets are $0.113$ and $0$, respectively, while those returned by our algorithm are $0.532$ and $0.514$, respectively.
	Indeed, the greedy algorithm returns a $0$-MMS allocation for more than 40\% instances, for small dataset. In large dataset, the greedy algorithm returns a $0$-MMS allocation for more than 99\% instances.
	In contrast, the approximation ratios of the allocations returned by our algorithm under both datasets are at least $0.5$, which matches our theoretical guarantee for monotone instances (\cref{theorem:monotone-for-goods}). 
	Furthermore, our algorithm has a slight advantage in running time due to inactivation operation, which reduces the inactive agents for receiving items.
%
%
	This set of experiments demonstrate the effectiveness of our algorithm, which computes an approximately MMS allocation with an approximation ratio at least that from the theoretical guarantee ($0.5$) in each dataset.
	
	\begin{table}[htbp]
	    \centering
	    \begin{tabular}{c||c|c|c|c|c|c}
	    \hline \hline
	         \multirow{2}{*}{Algorithm} &  \multicolumn{3}{c|}{Small} & \multicolumn{3}{c}{Large} \\ \cline{2-7}
	         & Avg. & Min. & Time (s) & Avg. & Min. & Time (s)\\
	         \hline \hline
	         Greedy & 0.113 & 0 & 0.13 & 0 & 0 & 2.05 \\ \hline
	         Ours & 0.532 & 0.5 & 0.09 & 0.514 & 0.5 & 0.93 \\ \hline \hline
	    \end{tabular}
	    \caption{Approximation ratios of the allocations returned by different algorithms for goods in different cases. (Min. means the minimum ratio and Avg. means the average ratio, over the 1000 instances). We also report the total running time (in seconds) for computing the allocations by each algorithm.}
	    \label{tab:mono_goods}
	\end{table}

    For the allocation of chores, the results are shown in Table~\ref{tab:chores}.
    We report the average and maximum approximation ratios returned by the greedy algorithm and our algorithm over the 1000 instances as well as the total running time for both algorithms.
    Recall from our theoretical analysis that the competitive ratio of our algorithm is at most $1.9$ for small dataset and $1.99$ for large dataset.
    From Table~\ref{tab:chores}, we observe that the approximation ratios of the allocations returned by our algorithm under both datasets are around $1.2$, and the maximum approximation ratios ($1.28$ for the small dataset and $1.39$ for the large dataset) are evidently better than the theoretical guarantees.
    In comparison, the average approximation ratio of the greedy algorithm is $2.33$ in the small dataset and grows to $7.96$ in the large dataset.
    The difference between our algorithm and the greedy algorithm is that our algorithm only allocates items to the active agents, i.e., those who receive items with cost less than the threshold ($c_i(X_i)\leq 1-1/n$).
    The experiment shows that this is crucial in achieving good approximation ratios in practice.
    
  	\begin{table}[htbp]
	    \centering
	    \begin{tabular}{c||c|c|c|c|c|c}
	    \hline \hline
	         \multirow{2}{*}{Algorithm} &  \multicolumn{3}{c|}{Small} & \multicolumn{3}{c}{Large} \\ \cline{2-7}
	         & Avg. & Max. & Time (s) & Avg. & Max. & Time (s) \\
	         \hline \hline
	         Greedy & 2.33 & 4.42 & 0.13 & 7.96 & 17.99 & 0.26 \\ \hline
	         Ours & 1.09 & 1.28 & 2.2 & 1.26 & 1.39 & 4.8 \\ \hline \hline
	    \end{tabular}
	    \caption{Approximation ratios of the allocations returned by different algorithms for chores in different cases. (Max. means the maximum ratio and Avg. means the average ratio in 1000 examples)}
	    \label{tab:chores}
	\end{table}
	
    To present the performance difference between the greedy algorithm and our algorithm in more detail, we choose the small dataset for chores (for which the greedy algorithm performs the best).
    We plot the cumulative distribution function of approximation ratios returned by the two algorithms in Figure~\ref{fig:chores_cdf}.
    It can be seen that our algorithm guarantees an approximation ratio smaller than $2$ for all instances, while only around $20\%$ instances returned by the greedy algorithm have an approximation ratio no more than $2$.
    For about half of the instances, the greedy algorithm computes an allocation with an approximation ratio exceeding $2.5$ with respect to MMS.

  	\begin{figure}[htbp]
	    \centering
        \includegraphics[width=0.45\textwidth]{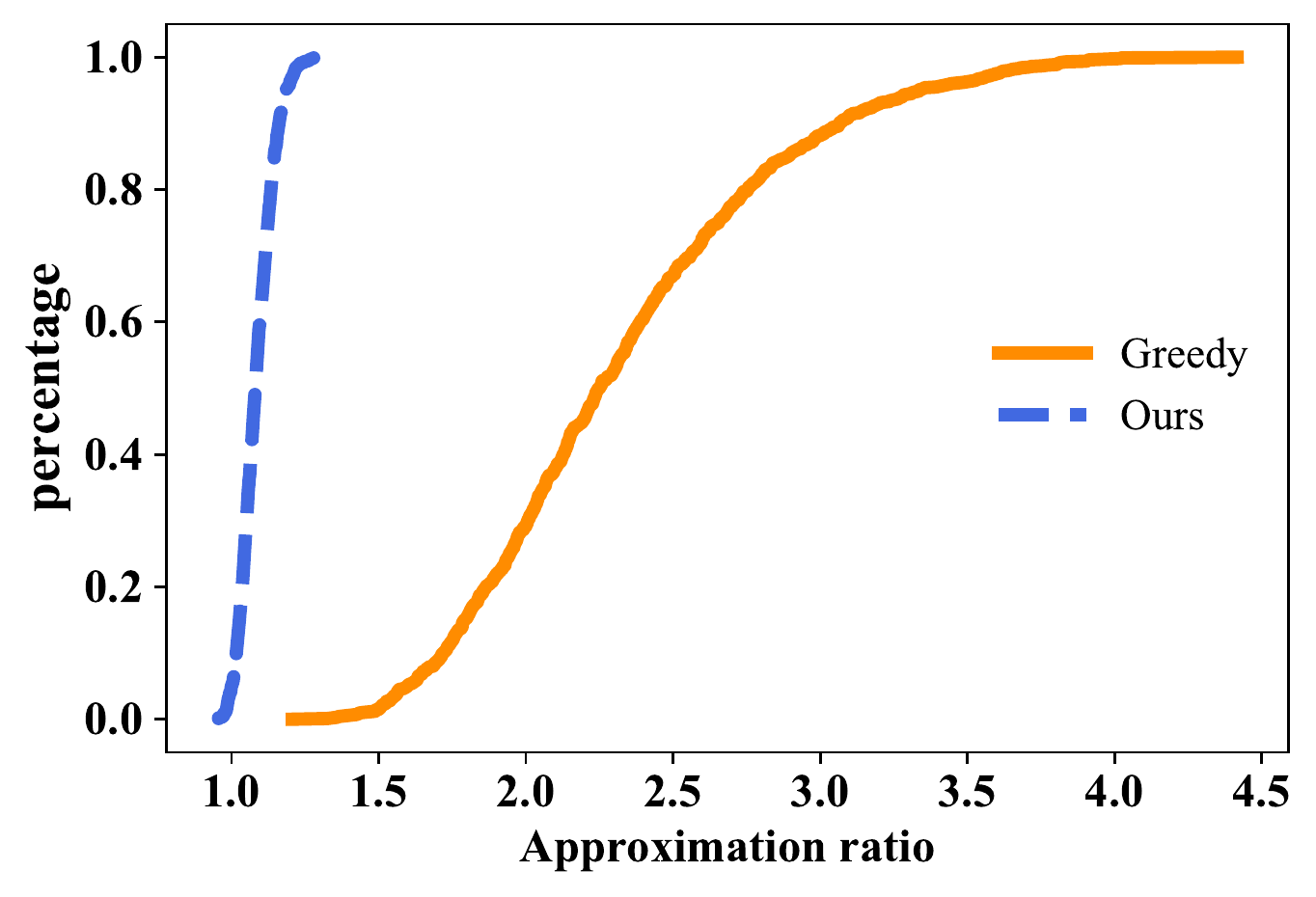}
        \caption{Cumulative distribution function of the approximation ratios of allocations returned by the two algorithms.}\label{fig:chores_cdf}
	\end{figure}

	\section{Conclusion and Open Problems}
	
	In this paper, we study the problem of fairly allocating indivisible online items to a group of agents with general additive valuation functions.
	For the allocation of goods, we show that no algorithm can guarantee any non-zero competitive ratio for $n\geq 3$ agents and propose an optimal $0.5$-competitive algorithm for two agents.
	For the allocation of chores, we propose a $(2-1/n)$-competitive algorithm for $n\geq 3$ agents, a $\sqrt{2}$-competitive algorithm for two agents, and show that no algorithm can do better than $15/11$-competitive for two agents.
	We also study the monotone instances for both goods and chores and improve the competitive ratios to $0.5$ and $5/3$ for goods and chores, respectively.
    We further consider the small items instances such that all value/cost are no larger than $\alpha$ where $(1-\alpha)$-competitive algorithm and $(1+\alpha)$-competitive algorithm exist for goods and chores respectively.
    Furthermore, we improve the competitive ratio to $\sqrt{\alpha^2-4\alpha+5}+\alpha-1$ when allocating small chores to two agents.
	
	There are many open problems regarding the online approximation of MMS allocations.
	First, while we show that no competitive algorithm exists for the online allocation of goods and competitive algorithms exist for monotone instances, we are curious about a less restrictive condition under which competitive algorithm exists.
	Second, for the allocation of chores, the lower bound $1.585$ for general number of agents follows from the identical valuation case, and it remains unknown whether the general additive valuation case is strictly harder.
	It is also interesting to investigate the optimal competitive ratio (which is in $[1.585,2)$) for general number of agents and that for the case of two agents (which is in $[1.364,1.414]$).
	Finally, while we show that normalization of the valuation functions is necessary to achieve a bounded ratio for the allocation of goods, it remains unknown whether constant competitive ratios can be achieved for chores if the cost functions are not normalized.

\bibliographystyle{abbrv}
\bibliography{online-mms}

\newpage
\appendix

    \section{Justifications on the Normalization Assumption}
    \label{sec:justification}
    
    In this section, we justify the normalization assumption of our model.
    We show that without the normalization assumption, even for the case of $n=2$,
    \begin{itemize}
        \item no algorithm can do strictly better than $0$-competitive for the allocation of goods;
        \item no algorithm can do strictly better than $2$-competitive for the allocation of chores.
    \end{itemize}
    
    We first show the hard instance for the allocation of goods.
    Recall from \cref{theorem:goods-n-agents} that with the normalization assumption, no algorithm can do strictly better than $0$-competitive when $n\geq 3$.
    Thus it suffices to consider the case when $n=2$.
    
    \begin{theorem}
        For the allocation of goods, without the normalization assumption, no online algorithm has competitive ratio strictly larger than $0$ when $n=2$.
    \end{theorem}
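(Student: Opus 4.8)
The plan is to argue by contradiction: suppose some deterministic online algorithm is $\gamma$-competitive for a fixed $\gamma\in(0,1]$, and construct, adaptively, an instance on which it fails. The guiding principle is the same as in \cref{thm:0.5for-2agents}, but the absence of normalization removes the only reason that proof stopped at $0.5$. There, the ``completing'' item could raise each agent's total only up to $v_i(M)=2$, which capped how badly a neglected agent could be hurt and made the key invariant (\cref{claim:property-2-agents-goods}) available only against $(0.5+\delta)$-competitive algorithms. Without normalization the adversary may append a completing item of \emph{arbitrary} magnitude, so the corresponding invariant becomes available against \emph{any} $\gamma$-competitive algorithm, and the damage can be driven all the way to ratio $0$.

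Concretely, I would first establish the analogue of \cref{claim:property-2-agents-goods}: at any moment, a $\gamma$-competitive algorithm must keep the allocation strongly imbalanced, i.e. there is an agent $i$ with $v_i(X_i)$ exceeding $v_i(X_j)$ by a large factor. The proof is the same one-item argument: were the two bundles comparable under both valuations, the adversary would present a final item $e$ with $v_1(e)=v_2(e)=K$ for a huge $K$; since $e$ then dominates, the neglected agent's MMS is $v_i(X_i+X_j)$ while she keeps only $v_i(X_i)$, and comparability forces this ratio below $\gamma$. The point is that $K$ is now chosen freely rather than pinned to $2-v_i(X_1+X_2)$, which is exactly what buys the statement for every $\gamma$ instead of only for $\gamma>1/2$.

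With this invariant in hand, I would reuse the escalating instance of \cref{thm:0.5for-2agents}: items $e_1,e_2,\dots$ of geometrically growing value, tuned so that each $e_i$ is worth (under both valuations) about as much as all earlier items combined, with the value growing slightly faster for agent $2$ than for agent $1$. The invariant forces the algorithm to pile every $e_i$ onto a single agent (say agent $1$, after the first item is placed there w.l.o.g.), since diverting any $e_i$ to agent $2$ would momentarily produce a comparable allocation and expose it to the completing-item punishment above. After sufficiently many rounds, a number depending on $\gamma$, agent $2$ still holds nothing, yet the items already presented split into two halves of substantial value to agent $2$, so $\MMS_2=\Omega(1)$ while $v_2(X_2)=0$; appending one last item of tiny value to agent $1$ and constant value to agent $2$ makes this quantitative, giving $v_2(X_2)<\gamma\cdot\MMS_2$ and the desired contradiction.

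The main obstacle is exactly the step of \emph{forcing} all of the escalating items onto one agent against \emph{every} algorithm, not merely against greedy or balancing heuristics. As the $n=3$ analysis (\cref{theorem:goods-n-agents}) shows, such forcing must be carried out by a careful case tree over the algorithm's irrevocable choices, and the quantitative parameters (the growth ratio of the $e_i$, the number of rounds, and the magnitude $K$ of the completing item) must all be calibrated to the target $\gamma$ so that a single neglected agent is provably starved. Verifying that the invariant genuinely closes off the ``balanced'' branch at every step, rather than only at the end, is where the real work lies.
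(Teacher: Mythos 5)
Your high-level instinct is sound: without normalization the adversary can always append a dominating ``completing'' item, so an adaptive construction kills every $\gamma>0$. But the invariant you build the proof on is inverted, and the inversion breaks your forcing step exactly in the regime you must handle, namely small $\gamma$. The huge-item threat proves only this: the algorithm may never reach a state where \emph{both} agents have $v_i(X_i) < \gamma\cdot v_i(X_1+X_2)$, i.e.\ $v_i(X_i) < \tfrac{\gamma}{1-\gamma}\, v_i(X_j)$ for both $i$. For $\gamma\le 1/2$ this does \emph{not} forbid comparable bundles: if $v_i(X_i)\approx v_i(X_j)$ for both agents, the agent denied the huge item still retains about half of $v_i(X_1+X_2)=\MMS_i$, which is $\ge\gamma\cdot\MMS_i$, and the algorithm survives. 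So your statement that ``comparability forces this ratio below $\gamma$'' is false precisely when $\gamma$ is small, and the escalation you borrow from \cref{thm:0.5for-2agents} --- each item worth roughly the union of its predecessors under both valuations, ``slightly faster'' for agent $2$ --- forces nothing: the algorithm can park such an item on agent $2$ with impunity. To make the punishment fire when an item $e_i$ is diverted to agent $2$, you need $v_1(e_i) > \tfrac{1-\gamma}{\gamma}\cdot v_1(X_1)$ and simultaneously $v_2(e_i) < \tfrac{\gamma}{1-\gamma}\cdot v_2(X_1)$, an asymmetry of order $1/\gamma^2$ between the two valuations per step, and in the direction opposite to the one you wrote (the item must loom large under the valuation of the agent who keeps receiving, and be negligible under the starved agent's valuation).

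Once you calibrate this way the whole construction collapses, and what remains is essentially the paper's proof, which needs only three items and no invariant lemma. Your closing step also overshoots: the moment agent $2$ holds nothing while at least two items of positive $v_2$-value have arrived, $\MMS_2>0$ and her ratio is exactly $0$; there is no need for ``sufficiently many rounds,'' an $\Omega(1)$ bound on $\MMS_2$, or a final item to ``make this quantitative.'' Concretely, the paper takes $v_1(e_1)=v_2(e_1)=1$ (w.l.o.g.\ to agent $1$), then $v_1(e_2)=r$, $v_2(e_2)=1/r$ with $r>1/\gamma$: if $e_2$ also goes to agent $1$ the adversary simply stops and agent $2$ is starved, so $e_2$ must go to agent $2$; then $v_1(e_3)=r$, $v_2(e_3)=1$ finishes, since now $\MMS_1=r$ and $\MMS_2=1$ while each agent holds only a $1/r<\gamma$ fraction of her MMS, and only one item remains for the two of them. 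The genuine gaps in your plan are thus (i) the false ``forced strong imbalance'' invariant, which is what your forcing argument actually rests on, and (ii) failing to notice that starvation alone already yields ratio $0$, which is why the unnormalized bound needs none of the machinery of \cref{thm:0.5for-2agents}.
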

    
    \begin{proof}
    Assume otherwise and there exists a $\gamma$-competitive algorithm for approximating MMS allocation for $n=2$ agents, where $\gamma \in (0,1]$.
    Let $r$ be a sufficiently large integer such that $\gamma > 1/r$.
    We show that in the allocation returned by the algorithm for the following instance, at least one agent $i\in N$ is allocated a bundle $X_i$ with $v_i(X_i) < \gamma\cdot \MMS_i$, which is a contradiction.
    
    Let the first item be $e_1$ with $v_1(e_1) = v_2(e_1) = 1$ and assume w.l.o.g. that agent $1$ receives it.
    Then let the second item $e_2$ with $v_1(e_2) = r, v_2(e_2) = 1/r$.
    Since the algorithm is $\gamma$-competitive, $e_2$ should be assigned to agent $2$ as otherwise for the instance with only two items $\{e_1,e_2\}$, we have $X_2 = \emptyset$, which leads to a $0$-MMS allocation.
    Let $X_i'$ be the bundle agent $i$ holds before item $e_3$ comes.
    
    \begin{table}[htb]
		\centering
			\begin{tabular}{ c|c|c|c } 
				$\qquad$ & $e_1$ & $e_2$ & $e_3$  \\
				\hline
				$\mathbf{1}$ & \boxed{$1$} & $r$ & $r$ \\ 
				\hline
				$\mathbf{2}$ & $1$ & \boxed{$1/r$} & $1$
			\end{tabular}
		\caption{Hard instance for goods without the normalization assumption.}\label{tab:Hardness-justify}
	\end{table}

    For the instance shown in Table~\ref{tab:Hardness-justify}, the last item $e_3$ holds $v_1(e_3) = r, v_2(e_3) = 1$.
    We have 
    \begin{equation*}
        \MMS_1 = r, \MMS_2 = 1,\quad \text{and} \quad
        v_1(X_1') = 1, v_2(X_2') = 1/r,
    \end{equation*}
    where $v_1(X_1') = (1/r)\cdot \MMS_1 < \gamma\cdot \MMS_1, v_2(X_2') = (1/r)\cdot \MMS_2 < \gamma\cdot \MMS_2$.
    Since there must exist one agent in $\{1,2\}$ that does not receive item $e_3$, the final allocation is not $\gamma$-MMS, which is a contradiction.
    \end{proof}
    
    Next, we show that without the normalization assumption, no online algorithm can guarantee a competitive ratio strictly smaller than $2$ when $n=2$.
	Note that any allocation is trivially $2$-MMS when $n=2$.
	
    \begin{theorem}
        For the allocation of chores, without the normalization assumption, no online algorithm has competitive ratio strictly smaller than $2$ when $n=2$.
    \end{theorem}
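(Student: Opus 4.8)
The plan is to fix an arbitrary deterministic algorithm, assume for contradiction that it is $\rho$-competitive for some $\rho<2$, and build an adaptive instance on which it is forced to return an allocation with approximation ratio at least $\rho$. As in the proof of the corresponding statement for goods, the adversary reveals items one at a time and, because the algorithm is deterministic, may both tailor each new item to the algorithm's past choices and stop the instance at any moment, fixing the revealed set as the final $M$. I would start with a symmetric item $e_1$ with $c_1(e_1)=c_2(e_1)=1$ and, by relabelling the agents, assume it is assigned to agent $1$.

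The engine of the lower bound is a \emph{doubling threat}: whenever the current bundles are such that appending a single item would give some agent $i$ a bundle of cost close to $2\cdot\MMS_i$, the algorithm is forbidden from reaching that configuration, since the adversary could stop there. Concretely, if agent $1$ ever holds two items of equal cost $w$ under $c_1$ while these remain the only heavy items, stopping yields $\MMS_1\approx w$ but $c_1(X_1)\approx 2w$, hence a ratio approaching $2$. I would use such threats to drive the allocation into a one-sided state, accumulating cost on a single agent while the other is kept almost empty, and finish by appending one last item that doubles the loaded agent's cost relative to its maximin share.

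The crucial difference from the goods case, and the main obstacle, is that for chores the benchmark $\MMS_i$ scales with agent $i$'s own costs: an empty bundle is harmless (indeed optimal) for chores, so one cannot force placements by threatening starvation as in the goods proof, and making an item expensive in order to push it away from an agent simultaneously inflates that agent's $\MMS_i$ and neutralises the threat (a single item is always within a factor $1$ of $\MMS_i$). I expect the extra power needed to reach the ratio $2$ rather than the $\sqrt2$ that a competitive algorithm attains under normalization to come precisely from \emph{dropping normalization}: the adversary sets the two totals $c_1(M)$ and $c_2(M)$ very differently, for example keeping $c_2(M)$ tiny so that $\MMS_2$ is tiny and agent $2$ can absorb almost nothing, while $c_1(M)\approx 2$ so that $\MMS_1\approx 1$. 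Items that are cheap for agent $2$ are then forced onto agent $1$ by the threat that a later, $c_2$-heavy item would blow agent $2$'s minuscule budget, so that $c_1(X_1)$ can be driven up to $c_1(M)\approx 2\cdot\MMS_1$. The delicate point will be verifying that this forcing threat stays valid across all the items needed to push the ratio all the way to $2$: since every single item is individually safe, the contradiction must arise from an accumulated, doubling-type configuration, which requires maintaining a careful adaptive invariant analogous to the uneven-allocation claim (Claim~\ref{claim:property-2-agents-goods}) used in the two-agent goods lower bound.
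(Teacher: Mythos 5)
Your skeleton matches the paper's proof at a high level: an adaptive adversary, an opening symmetric item, a ``doubling'' threat, and the dichotomy between letting cost accumulate on agent $1$ and punishing the algorithm the moment it gives agent $2$ anything. But the concrete engine you propose --- keep $c_2(M)$ tiny and force items onto agent $1$ by threatening to ``blow agent $2$'s minuscule budget'' with a later $c_2$-heavy item --- does not work, for the very reason you flag in your own third paragraph. The MMS approximation ratio is invariant under rescaling $c_2$, so a tiny total confers no vulnerability; there is no absolute budget in this problem, only relative structure. An agent can be threatened only through her own current bundle: a single item never exceeds $\MMS_2$ (since $\MMS_i \geq \max_{e} c_i(e)$), and if agent $2$'s bundle has $c_2$-cost $x$ while the items already sitting with agent $1$ have $c_2$-cost comparable to or larger than $x$, then $\MMS_2$ is governed by those other items and agent $2$ can absorb even a matching item with ratio far below $2$. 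So the forcing threat is vacuous precisely in the state your plan tries to maintain (agent $2$ kept almost empty, all items cheap for her). Concretely, if all items have uniform tiny $c_2$-cost, an algorithm that routes roughly half of the $c_2$-mass to agent $2$ and the rest to agent $1$ keeps both ratios bounded well away from $2$, and your adversary never reaches a contradiction.

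The missing idea is the device that makes the threat live at the unknown, algorithm-chosen moment of first deviation: exponentially growing costs for agent $2$. In the paper's instance (Table~\ref{tab:Hardness-justify-chores1}), $c_1(e_i)=\epsilon$ and $c_2(e_i)=\epsilon^{-(i-1)}$, which yields exactly the two bad branches: (i) if the algorithm gives agent $1$ all of the first $\lceil 1/\epsilon \rceil$ items, her cost reaches $2-\epsilon$ against $\MMS_1=1$; (ii) the moment agent $2$ receives her first item $e_p$, that item dominates everything earlier under $c_2$, so the adversary can immediately send one final item $e_{p+1}$ with $c_1(e_{p+1}) = 1+(p-1)\epsilon \approx c_1(X_1)$ and $c_2(e_{p+1})=c_2(e_p)$ (Table~\ref{tab:Hardness-justify-chores2}). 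This final item is a \emph{simultaneous} double threat: whichever agent takes it ends with ratio at least $2-2\epsilon$. Note that your endgame of ``one last item that doubles the loaded agent'' cannot stand alone --- the algorithm would simply hand that item to the other agent --- so the last item must double both agents at once, which in turn requires agent $2$'s bundle to be nonempty and dominant under her own cost function at that moment; that is exactly what the exponential growth guarantees and what your construction lacks.
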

    \begin{proof}
	Assume otherwise and there exists an online algorithm with a competitive ratio $2-\gamma$ for some constant $\gamma > 0$.
	Let $\epsilon$ be sufficiently small such that $\gamma > 2\epsilon$.
    We first construct the following instance (see Table~\ref{tab:Hardness-justify-chores1}), and show that the algorithm can not allocate all items $e_1, \cdots, e_l$ to agent $1$, for $l = \lceil 1/\epsilon \rceil$.

    \begin{table}[htb]
		\centering
			\begin{tabular}{ c|c|c|c|c|c|c } 
				$\qquad$ & $e_1$ & $e_2$ & $e_3$ & $e_4$ & $\cdots$ & $e_l$\\
				\hline
				$\mathbf{1}$ & $\boxed{1}$ & $\epsilon$ & $\epsilon$ & $\epsilon$ & $\cdots$ & $\epsilon$\\ 
				\hline
				$\mathbf{2}$ & $1$ &  $\epsilon^{-1}$ & $\epsilon^{-2}$ & $\epsilon^{-3}$ &
				$\cdots$ & $\epsilon^{-(l-1)}$

			\end{tabular}
		\caption{Hard instance for chores without the normalization assumption ($n=2$).}\label{tab:Hardness-justify-chores1}
	\end{table}
	
    Specifically, the first item $e_1$ has $c_1(e_1) = c_2(e_1) = 1$, and we can assume w.l.o.g. that agent $1$ receives it.
    For all $2\leq i \leq l$, we have
    \begin{equation*}
        c_1(e_i) = \epsilon, \quad c_2(e_i) = \epsilon^{-(i-1)}.
    \end{equation*}
    For the sake of contradiction, we assume all items $\{e_1, \cdots, e_l\}$ are allocated to agent $1$.
    Then for the instance with exactly $l$ items, we have
    \begin{equation*}
        c_1(X_1) = 1 + (\lceil 1 / \epsilon \rceil-1) \cdot \epsilon \geq 2-\epsilon, \quad
        \MMS_1 = 1,
    \end{equation*}
    which contradicts the assumption that the algorithm has a competitive ratio $2-\gamma$.
    Hence the algorithm must allocate some items in $\{e_2, \cdots, e_l\}$ to agent $2$.
    Let $e_p$ be the first item allocated to agent $2$, where $1 < p \leq l$.
    Then we consider another instance whose first $p$ items are the same as in Table~\ref{tab:Hardness-justify-chores1}, but has different future items.
    Let the next (and last) item be $e_{p+1}$ (see Table~\ref{tab:Hardness-justify-chores2}).
    
    \begin{table}[htbp]
		\centering
			\begin{tabular}{ c|c|c|c|c|c|c}
				$\qquad$ & $e_1$ & $e_2$ & $\cdots$ & $e_{p-1}$ & $e_p$ & $e_{p+1}$ \\
				\hline
				$\mathbf{1}$ & $\boxed{1}$ & $\boxed{\epsilon}$ & $\cdots$ & $\boxed{\epsilon}$ & ${\epsilon}$ & $1+(p-1)\cdot\epsilon$ \\ 
				\hline
				$\mathbf{2}$ & $1$ &  $\epsilon^{-1}$ & $\cdots$ & $\epsilon^{-(p-2)}$ & $\boxed{\epsilon^{-(p-1)}}$ & $\epsilon^{-(p-1)}$ 
			\end{tabular}
		\caption{Hard instance for chores without the normalization assumption ($n=2$).}\label{tab:Hardness-justify-chores2}
    \end{table}
    
    Let $X_i$ be the bundle that agent $i$ holds before item $e_{p+1}$ comes.
    Then we have
    \begin{align*}
        c_1(X_1) &= 1+(p-2)\cdot\epsilon, \quad c_2(X_2)  = \epsilon^{-(p-1)}, \\
        \MMS_1 &= 1+(p-1)\cdot\epsilon, \quad \MMS_2 = \epsilon^{-(p-1)}+\epsilon^{-(p-2)}.
    \end{align*}
    Hence we have
    \begin{align*}
        c_1(X_1+e_{p+1}) & > (2-\epsilon)\cdot\MMS_1, \\ c_2(X_2+e_{p+1}) &= 2\cdot\epsilon^{-(p-1)} = \frac{2}{1+\epsilon}\cdot \MMS_2 > (2-2\epsilon)\cdot\MMS_2.
    \end{align*}
    The last equality holds because $1 > 1-\epsilon^2$.
    Hence no matter which agent in $\{1,2\}$ receives item $e_{p+1}$, the allocation is not $(2-2\epsilon)$-MMS to her.
	\end{proof}

    The main idea of our construction are: 1) The first item $e_1$ assigned to the agent $1$, costs exactly $\MMS_1$ to her at this moment, and each following coming item costs $\epsilon\cdot\MMS_1$ to her;
    2) For agent $2$, the cost of coming item increases exponentially, i.e. $c_2(e_i) = \epsilon^{-(i-1)}$ until she receives her first item.
    For agent $1$, although the items (other than the first one) cost extremely small to her, she cannot receive all of them because otherwise, the allocation becomes $2$-MMS. 
    Hence the algorithm has to allocate at least one item to agent $2$ before the $\lceil 1/\epsilon \rceil$-th item arrives.
    Then consider the first item $e_p$ that agent $2$ receives.
    Then we construct another instance in which the next item $e_{p+1}$ satisfies $c_i(X_i+e_{p+1}) \geq (2-2\epsilon)\cdot\MMS_i$ for both $i\in N$, for which the algorithm can not guarantee a competitive ratio of $2-\gamma$.
    
	Following this idea, we can extend our result to $n\geq 3$ agents straightforwardly.
	Since the proof is almost identical (except that we need to construct more items), we only provide a proof sketch here.
	
    \begin{theorem}
        For the allocation of chores with $n\geq 3$ agents, without the normalization assumption, no online algorithm has competitive ratio strictly smaller than $2$.
    \end{theorem}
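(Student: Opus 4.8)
The plan is to lift the two-agent construction to $n$ agents, keeping agent $1$ as a low-MMS ``accumulator'' and forcing agents $2,\dots,n$ to commit, one at a time, to a single large item that (up to lower-order terms) equals that agent's own MMS; a final item then twins every committed item so that whichever agent receives it is pushed to roughly twice its MMS. Throughout I assume, as before, an algorithm with competitive ratio $2-\gamma$ and derive a contradiction.

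As in the two-agent case, I would start with $e_1$ of cost $1$ to every agent (w.l.o.g.\ given to agent $1$), so that $\MMS_1$ stays pinned near $1$ and agent $1$ can absorb only $O(1/\epsilon)$ further $\epsilon$-cost items before its load reaches $(2-\gamma)\MMS_1$. I would then reveal the items in $n-1$ phases. Entering phase $t$, there are $t-1$ already-saturated agents (plus the accumulator), each holding a bundle of cost $\approx \MMS_i$ at its own scale $\sigma_i$, where the scales are chosen geometrically separated, $\sigma_2 \ll \sigma_3 \ll \cdots$. In phase $t$ I would feed an exponentially growing ladder whose cost is $\epsilon$ to agent $1$, is calibrated (geometrically) so that it only fills, but never overflows, the bounded ``room'' $(2-\gamma)\MMS_i-c_i(X_i)$ of each already-saturated agent, and grows large for every not-yet-committed agent. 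Because agent $1$ and the saturated agents have only bounded total room, once all that room is exhausted the next ladder item cannot be placed anywhere cheaply, so the algorithm is forced to put a large item on a fresh agent --- exactly the two-agent overflow argument --- committing it at the current top of the ladder.

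The role of the exponential growth is to keep every MMS under control: within each agent's own ladder the costs it sees form a geometric sequence dominated by its largest term, so its MMS stays $\approx$ the single item it is ultimately forced to take, and (choosing the inter-phase scale gaps large enough) items from later phases contribute only lower-order cost to an agent that committed earlier. Hence each agent $i$ ends holding a bundle of cost $\approx\MMS_i$ whose value equals, up to $(1+o(1))$ factors, the cost the final item will charge it. After the $(n-1)$-th phase I would reveal a single last item $e^*$ with $c_i(e^*)$ equal to agent $i$'s current top-scale item for each $i$ (and $\approx\MMS_1$ for agent $1$); since every agent is already loaded to $\approx\MMS_i$ and $e^*$ is a twin of its top item (so it does not inflate $\MMS_i$), whichever agent receives $e^*$ ends with load $\approx 2\MMS_i > (2-\gamma)\MMS_i$, contradicting competitiveness.

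The main obstacle is the simultaneous bookkeeping of all $n$ MMS values during the room-filling: I must choose the scale gaps and the per-agent geometric cost profiles so that, in every phase, (i) the ladder genuinely overflows --- i.e.\ no already-saturated agent can absorb the decisive large item without exceeding $(2-\gamma)\MMS_i$ --- while (ii) none of those same ladder items inflates any agent's MMS enough to reopen room for the final item. These two requirements pull in opposite directions (an item cheap enough not to inflate $\MMS_i$ is also cheap enough to be dumped on agent $i$), and reconciling them is precisely what forces the ``more items'' and the geometric scale separation; the per-phase overflow analysis itself is identical to the two-agent proof.
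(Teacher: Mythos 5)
Your proposal follows essentially the same route as the paper's own proof (which, note, the paper itself only gives as a sketch): pin each uncommitted agent's cost sequence to an exponentially growing ladder so that the first item she receives dominates and equals her $\MMS$, use the bounded ``room'' $(2-\gamma)\MMS_i - c_i(X_i)$ of committed agents to force every agent to eventually receive a large item, and finish with one last item that roughly doubles whoever takes it. Two points of comparison are worth making. First, the paper's final item is chosen with $c_i(e^*)\approx c_i(X_i)$, i.e.\ it copies each agent's \emph{entire current bundle}, not just her top-scale item as in your twin construction. This matters because your justification rests on the invariant ``every agent is already loaded to $\approx\MMS_i$,'' which the construction cannot promise: an algorithm that stuffs committed agents may legitimately load them up to $(2-\gamma)\MMS_i$, and then a twin of only the top item gives load $\approx(3-\gamma)\MMS_i$ rather than $\approx 2\,c_i(X_i)$; the bundle-copying choice is self-adjusting and makes the endgame argument go through uniformly over all stuffing levels. (Your twin can be salvaged, but only after the second point below is carried out quantitatively.) Second, the tension you flag at the end --- items must be expensive enough to committed agents to force overflow, yet cheap enough not to inflate their $\MMS$ values and reopen room --- is real, but it is not resolved by geometric scale separation between phases; scale separation only makes pre-commitment items negligible. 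What resolves it is the same capacity counting that forces commitment: each committed agent can absorb at most $(1-\gamma)/\epsilon$ small items before violating her current ratio, so at most $(n-1)(1-\gamma)/\epsilon$ small items are ever placed, and hence the total cost of post-commitment items in any agent $i$'s valuation is at most $(n-1)(1-\gamma)L_i$, which is strictly below the $(n-1)L_i$ threshold at which her $n$-bundle partition value would rise above her large item $L_i$. That $O(n\gamma)$ margin is exactly what keeps every agent's $\MMS$ pinned and makes the final trap close; without stating it, the inductive ``identical to the two-agent case'' claim does not stand on its own.
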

	\begin{proof}[Proof Sketch]
	We construct a collection of instances that maintain the following properties: 1) The first item that each agent $i$ receives costs exactly $\MMS_i$ to her, and each of the following items costs $\epsilon\cdot\MMS_i$ to her; 2) For those agents who have not received any item, the cost of coming items increase exponentially.
	We call that the first item each agent receives a large item to her, and after that each item is small.
%
	Note that any agent can receive at most one large item and $\lceil 1/\epsilon \rceil-2$ small items,
	as otherwise the items she received cost more than $(2-2\epsilon)\cdot\MMS_i$ to her.
	Hence $n-1$ agents receive at most $(n-1)\cdot (\lceil 1/\epsilon \rceil - 1)$ items, and the algorithm should allocate each agent at least one item before the $\lceil (n-1)/\epsilon \rceil$-th item arrives.
%
	Consider the time when each agent has received at least one item, i.e., $X_i\neq \emptyset$ for all $i\in N$.
	Then we can construct the last item (similar to $e_{p+1}$ in Table~\ref{tab:Hardness-justify-chores2}) such that no matter which agent receives it, the allocation is not $(2-2\epsilon)$-MMS to her, e.g., by making $c_i(e_{p+1})\approx c_i(X_i)$ for all $i\in N$.
	\end{proof}

    \section{Missing Proofs of Theorem \ref{theorem:goods-n-agents}} \label{sec:hardness-goods->=4}

    In this section we complete the proof of \cref{theorem:goods-n-agents} when there are $n\geq 4$ agents.
    Similar to our result for $n=3$, we construct a collection of instances showing that no online algorithm can guarantee a competitive ratio strictly larger than 0 on these instances.
    
\begin{lemma}{\label{lemma:goods-n-agents-appendix}}
    No online algorithm has competitive ratio strictly larger than 0 for approximating MMS allocations for goods, for any $n\geq 4$.
\end{lemma}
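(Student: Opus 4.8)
The plan is to mirror the adaptive adversary construction used for $n=3$ in \cref{theorem:goods-n-agents}, lifting the escalation over three agents to one over $n$ agents. I would assume for contradiction a $\gamma$-competitive algorithm with $\gamma\in(0,1]$, fix a large integer $r>1/\gamma$ and a small $\epsilon>0$ with $r^{n}\epsilon$ sufficiently small (so that $r^{n}\epsilon<\gamma$), and build the instance one item at a time, choosing each item's values only after observing where the algorithm placed the previous items. Since the algorithm is deterministic, this adaptive construction is legitimate and produces a single online instance tailored to the algorithm's behavior.

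The central invariant I would maintain is: after feeding a prefix of items, every agent $i$ that has received something holds a bundle $X_i'$ with $v_i(X_i') < (1/r)\cdot\MMS_i$, where $\MMS_i$ is measured on whichever terminal instance the adversary is prepared to commit to. The forcing mechanism is identical to the three-agent case: whenever the algorithm places the current item on an agent in a way that would let the adversary append a single ``closing'' item and strand a second agent with an empty or tiny bundle, that placement violates $\gamma$-competitiveness; hence the algorithm is compelled to route the item onto a previously untouched (or otherwise designated) agent. Each item I introduce is worth $\epsilon$ to most agents but carries an escalating value $r^{k}\epsilon$ to a single target agent, exactly as in the tables for $n=3$, so that the marginal value to the target dominates the combined value of everything it has received so far.

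Iterating this forcing over roughly $n-1$ rounds drives the algorithm into a configuration in which all $n$ agents simultaneously hold bundles of value $<(1/r)\cdot\MMS_i$, while only $n-1$ items remain to be distributed. A pigeonhole argument then yields the contradiction: at least one agent receives no further item and is therefore left with $v_i(X_i)<(1/r)\cdot\MMS_i<\gamma\cdot\MMS_i$, contradicting $\gamma$-competitiveness. This is the direct analogue of observing, in the $n=3$ instances, that only $\{e_4,e_5\}$ or $\{e_5,e_6\}$ remain for three needy agents.

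The main obstacle is the bookkeeping rather than the idea: I must choose the escalation exponents and the values of the final closing items so that (i) the valuations stay normalized, $v_i(M)=n$ for every agent; (ii) the MMS value of each agent in every terminal instance is exactly what the forcing step requires, so that the current holding is genuinely below $(1/r)\cdot\MMS_i$; and (iii) each forbidden intermediate placement really does admit a closing item that strands some agent. Verifying these conditions simultaneously across all $n$ agents, and across the branching that arises whenever the algorithm has a genuine choice of target, is the delicate part, though each individual check is a routine computation analogous to those underlying Tables~\ref{tab:Case1}--\ref{tab:Case2b}.
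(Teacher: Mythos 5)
Your overall strategy is exactly the paper's: an adaptive adversary (legitimate since the algorithm is deterministic), values escalating by powers of $r$, placements forced by closing-item threats, and a final pigeonhole contradiction, which the paper isolates as Observation~\ref{observation:items-less-than-agents} ($k$ agents with $v_i(X_i')<(1/r)\cdot\MMS_i$ while at most $k-1$ items remain). The paper also begins as you do, by forcing the first $n-1$ items onto $n-1$ distinct agents. So the skeleton is right.

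The genuine gap is your target configuration and the ``central invariant'' behind it, which the construction you describe cannot reach, so the bookkeeping you defer is not routine. If only $n-1$ items remain, then in any $n$-partition of the full item set some bundle contains no remaining item, hence $\MMS_i \leq v_i(\text{arrived items})$ for every agent $i$; thus agent $i$ can be needy only if she holds less than a $1/r$ fraction of the \emph{arrived} mass under her own valuation. But after the forced prefix in which agent $i$ receives $e_i$, the most recently fed agents fail this badly: agent $n-1$ holds $\epsilon$ out of an arrived mass of $(n-1)\epsilon$ (every arrived item is worth only $\epsilon$ to her), a $1/(n-1)$ fraction, which is far above $1/r$. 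So ``all $n$ agents needy with $n-1$ items left'' is unreachable by replaying the $n=3$ escalation across $n$ agents; making each agent needy would require, per agent, a later item she values at some $r^k\epsilon$ that is forced away from her, and whenever the algorithm accepts such a targeted item that agent permanently leaves the needy set, so your condition (iii), with its single closing item, no longer produces a contradiction. The paper sidesteps all of this with a leaner endgame: it never makes the ``middle'' agents needy (in Table~\ref{tab:General-Case1} their MMS is only about $2\epsilon$ against holdings of $\epsilon$), and instead replays the three-agent endgame inside the $n$-agent instance, producing just \emph{three} needy agents (agents $1,2,n$ in Table~\ref{tab:General-Case1}, agents $1,2,3$ in Table~\ref{tab:General-Case2b}) against \emph{two} closing items, which already triggers Observation~\ref{observation:items-less-than-agents}. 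Your proof becomes correct if you replace the all-agents-needy goal by this three-needy-agents-versus-two-items goal; as written, the plan aims at a configuration that the forcing mechanism cannot deliver.
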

\begin{proof}
    For the sake of contradiction, suppose there exists a $\gamma$-competitive algorithm for approximating MMS allocation for $n\geq 4$ agents, where $\gamma\in (0,1]$.
    Let $r > 1/\gamma$ be a sufficiently large integer and $\epsilon>0$ be sufficiently small such that $r^3\epsilon < \gamma$.
    We construct a collection of instances and show that for the allocation returned by the algorithm for at least one of these instances, at least one agent $i\in N$ is allocated a bundle $X_i$ with $v_i(X_i) < (1/r)\cdot \MMS_i$, which is a contradiction.
	Recall that for deterministic allocation algorithms, we can construct an instance adaptively depending on how the previous items are allocated.
	
	Before we construct the instances, we first show an observation that if at some time $t$ there are $k$ agents with $v_i(X_i') < (1/r)\cdot \MMS_i$ (where $X_i'$ is the bundle agent $i$ holds at time $t$) while there are less than $k$ items to be arrived, then the allocation returned by the algorithm will not be $\gamma$-MMS.
	
	\begin{observation} \label{observation:items-less-than-agents}
	    If at some time $t$ during the allocation, there are $k$ agents $K$ such that for all $i\in K$ we have $v_i(X_i') < (1/r)\cdot \MMS_i$, and there are at most $k-1$ items to be arrived, then the final allocation returned by the algorithm is not $\gamma$-MMS.
	\end{observation}
	\begin{proof}
	    Since there are at most $k-1$ items that arrive after time $t$, some agent $i\in K$ will receive no item after time $t$, i.e. $X_i = X_i'$.
	    Hence in the final allocation we have $v_i(X_i) < 1/r\cdot \MMS_i$ for some $i\in K$ and the allocation returned by the algorithm is not $\gamma$-MMS.
	\end{proof}
	
	Based on this observation we argue that any algorithm with a competitive ratio greater than $0$ should assign the first $n-1$ items to $n-1$ different agents (unless some agent has value $0$ on some item).
	
	\begin{claim}
	    Assume that for all agent $i\in N$ and item $e\in M$ we have $v_i(e) > 0$, and $m\geq n$.
	    Then any algorithm with competitive ratio greater than $0$ should assign the first $n-1$ items $\{e_1, \cdots, e_{n-1}\}$ to $n-1$ different agents.
	\end{claim}
    \begin{proof}
        Assume otherwise and some items $e_i, e_j$ ($i\neq j$) are assigned to the same agent.
        We can assume w.l.o.g. that agent $n-1$ and $n$ receive nothing at the time when item $e_n$ arrives, i.e. $X_{n-1}' = X_n' = \emptyset$.
        We show that the algorithm fails in the instances with $n$ items.
        Let the last item be $e_n$ such that $v_i(e_n) = v_i(M) - \sum_{j=1}^{n-1} v_i(e_j) > 0$ for all $i\in N$.
        Then for both agents $n-1$ and $n$ we have $v_{n-1}(X_{n-1}') = v_n(X_n') = 0$, while $\MMS_{n-1} > 0, \MMS_n > 0$.
        Since there is only one item $e_n$ to be allocated, from Observation~\ref{observation:items-less-than-agents}, the allocation returned by the algorithm is not $\gamma$-MMS, which is a contradiction.
    \end{proof}
    
    Next, we construct the instances for which no online algorithm can be $\gamma$-competitive.
	For each item $e_j$, $1 \leq j \leq n-1$, we let (see Table~\ref{tab:General-Case0}) $v_i(e_{j}) = r^3\epsilon$ for all $i < j$ and $v_i(e_{j}) = \epsilon$ for all $i \geq j$.
	
	Recall that the first $n-1$ items must be assigned to $n-1$ different agents.
	Since when item $i$ arrives, the valuation functions of agents $\{i, i+1,\ldots, n\}$ are identical, we assume w.l.o.g. that agent $i$ receives item $e_i$ for all $i \in \{1, 2, \cdots, n-1\}$.
	Then we consider the next item $e_n$ with
	\begin{equation*}
        v_i(e_{n}) = 
        \begin{cases}
	        r\epsilon, \quad\ \text{ if } i=1 \\
	        r^3\epsilon, \quad \text{if } 2 \leq i \leq n-1 \\
	        \epsilon, \qquad \text{if } i=n.
        \end{cases}
	\end{equation*}
	
	We first show that any $\gamma$-competitive algorithm must assign $e_n$ to an agent in $\{1,n\}$.
	Assume otherwise, i.e., $e_n\in X_i$ for some $i\notin\{1,n\}$.
	Then we consider the instance with $n+1$ items (as shown in Table~\ref{tab:General-Case0}), where the last item $e_{n+1}$ has
	\begin{equation*}
         v_i(e_{n+1}) = 
        \begin{cases}
	       3-(n-2)r^3\epsilon-r\epsilon-\epsilon, \quad \text{ if } i=1 \\
	       3-(n-i)r^3\epsilon-i\epsilon, \qquad\  \quad \text{if } 2 \leq i \leq n \\
	       3-(n+1)\epsilon, \qquad \qquad \qquad \text{if } i = n.\\
        \end{cases}
	\end{equation*}

	\begin{table}[htb]
       \centering
		\begin{tabular}{ c|c|c|c|c|c|c|c } 
			$\qquad$ & $e_1$ & $e_2$ & $e_3$ & $\cdots$ & $e_{n-1}$ & $e_n$ & $e_{n+1}$\\
			\hline
			$\mathbf{1}$ & $\boxed{\epsilon}$ & $r^3\epsilon$ & $r^3\epsilon$ & $\cdots$ & $r^3\epsilon$ & $r\epsilon$ &$3 - (n-2)r^3\epsilon - r\epsilon-\epsilon$\\ 
			\hline
			$\mathbf{2}$ & $\epsilon$ & $\boxed{\epsilon}$ & $r^3\epsilon$ & $\cdots$ & $r^3\epsilon$ & $r^3\epsilon$ & $3-(n-2)r^3\epsilon-2\epsilon$\\
			\hline
			$\mathbf{3}$ & $\epsilon$ & $\epsilon$ & $\boxed{\epsilon}$ & $\cdots$ & $r^3\epsilon$ & $r^3\epsilon$ &$3-(n-3)r^3\epsilon-3\epsilon$\\
			\hline
			$\cdots$ & $\cdots$ & $\cdots$ & $\cdots$ & $\cdots$ & $\cdots$  & $\cdots$ & $\cdots$\\
			\hline
			$\mathbf{n-1}$ & $\epsilon$ & $\epsilon$ & $\epsilon$ & $\cdots$ & $\boxed{\epsilon}$ & $r^3\epsilon$ & $3-r^3\epsilon-(n-1)\epsilon$\\
			\hline
			$\mathbf{n}$ & $\epsilon$ & $\epsilon$ & $\epsilon$ & $\cdots$ & $\epsilon$ & $\epsilon$ & $3-n\epsilon$\\
		\end{tabular}
    	\caption{Instance showing why item $e_n$ must be assigned to one of the agents in $\{1,2\}$.}\label{tab:General-Case0}
	\end{table}
	
	Observe that we have $X_1' = \{e_1\}, X_n' = \emptyset$ and $\MMS_1 = \epsilon+r\epsilon, \MMS_n > 0$ while there is only one item $e_{n+1}$ that is not allocated.
	From Observation~\ref{observation:items-less-than-agents}, we have a contradiction.
	Hence every $\gamma$-competitive algorithm must allocate item $e_{n}$ to either agent $1$ or $n$.
	Depending on which agent receives item $e_{n}$, we construct two different instances.
	We argue that for both instances, the allocation returned by the algorithm is not $\gamma$-MMS.
	
	For the case when agent $1$ receives item $e_n$, we construct the instance with $n+2$ items as in Table~\ref{tab:General-Case1}.
	
	\begin{table}[htb]
       \centering
		\begin{tabular}{ c|c|c|c|c|c|c|c|c } 
			$\qquad$ & $e_1$ & $e_2$ & $e_3$ & $\cdots$ & $e_{n-1}$ & $e_n$ & $e_{n+1}$ & $e_{n+2}$\\
			\hline
			$\mathbf{1}$ & $\boxed{\epsilon}$ & $r^3\epsilon$ & $r^3\epsilon$ & $\cdots$ & $r^3\epsilon$ & $\boxed{r\epsilon}$ & $r^3\epsilon$ &$3 - (n-1)r^3\epsilon - r\epsilon-\epsilon$\\ 
			\hline
			$\mathbf{2}$ & $\epsilon$ & $\boxed{\epsilon}$ & $r^3\epsilon$ & $\cdots$ & $r^3\epsilon$ & $r^3\epsilon$ & $r^3\epsilon$ & $3-(n-1)r^3\epsilon-2\epsilon$\\
			\hline
			$\mathbf{3}$ & $\epsilon$ & $\epsilon$ & $\boxed{\epsilon}$ & $\cdots$ & $r^3\epsilon$ & $r^3\epsilon$ & $r^3\epsilon$ &$3-(n-2)r^3\epsilon-3\epsilon$\\
			\hline
			$\cdots$ & $\cdots$ & $\cdots$ & $\cdots$ & $\cdots$ & $\cdots$  & $\cdots$ & $\cdots$ & $\cdots$\\
			\hline
			$\mathbf{n-1}$ & $\epsilon$ & $\epsilon$ & $\epsilon$ & $\cdots$ & $\boxed{\epsilon}$ & $r^3\epsilon$ & $r^3\epsilon$ & $3-2r^3\epsilon-(n-1)\epsilon$\\
			\hline
			$\mathbf{n}$ & $\epsilon$ & $\epsilon$ & $\epsilon$ & $\cdots$ & $\epsilon$ & $\epsilon$ &$\epsilon$ & $3-(n+1)\epsilon$\\
		\end{tabular}
    	\caption{Instance with $n+2$ items when item $e_n$ is assigned to agent $1$.}\label{tab:General-Case1}
	\end{table}
	Note that for any $i\in \{1,2,n\}$, we have $v_i(X_i') < (1/r)\cdot \MMS_i$, where $X_i'$ is the bundle agent $i$ holds before $e_{n+1}$ arrives, because
	\begin{align*}
	    \MMS_1 = \MMS_2 = r^3\epsilon, \quad \MMS_n = \epsilon, \\
	    v_1(X_1') = r\epsilon+\epsilon, \quad v_2(X_2') = \epsilon, \quad v_n(X_n') = 0.
	\end{align*}
	Since there are only two items $\{e_{n+1}, e_{n+2}\}$ to be allocated, from Observation~\ref{observation:items-less-than-agents}, the allocation returned by the algorithm is not $\gamma$-MMS.
	
	Next we consider the case when agent $n$ receives item $e_n$ and let the next item be $e_{n+1}$ with
	\begin{equation*}
	    v_i(e_{n+1}) = 
        \begin{cases}
	        \epsilon, \qquad \text{if } i \in \{1,n\} \\
	        r\epsilon, \quad\ \text{ if } i=2 \\
	        r^3\epsilon, \quad \text{if } 3 \leq i \leq n-1.
        \end{cases}
	\end{equation*}
	We argue that item $e_{n+1}$ must be assigned to agent $1$ or $2$.
	Assume otherwise, i.e., item $e_{n+1}$ is assigned to some agent $j\notin\{1,2\}$.
	For any $i\in N$, let $X_i'$ be the bundle agent $i$ holds before item $e_{n+2}$ comes.
	For the instance with $n+2$ items shown in Table~\ref{tab:General-Case2a}, we have 
	\begin{equation*}
	    \MMS_1 = \MMS_2 = r\epsilon +2\epsilon, \quad
	    v_1(X_1') = v_2(X_2') = \epsilon,
	\end{equation*}
	where $v_1(X_1') < (1/r)\cdot \MMS_1$ and $v_2(X_2') < (1/r)\cdot \MMS_2$.
	Since there must exist at least one agent in $\{1,2\}$ that does not receive item $e_{n+2}$, we have $X_1 = X_1'$ or $X_2 = X_2'$ in the final allocation.
	Therefore the allocation is not $\gamma$-MMS.
	
	\begin{table}[htb]
       \centering
		\begin{tabular}{ c|c|c|c|c|c|c|c|c } 
			$\qquad$ & $e_1$ & $e_2$ & $e_3$ & $\cdots$ & $e_{n-1}$ & $e_n$ & $e_{n+1}$ & $e_{n+2}$\\
			\hline
			$\mathbf{1}$ & $\boxed{\epsilon}$ & $r^3\epsilon$ & $r^3\epsilon$ & $\cdots$ & $r^3\epsilon$ & $r\epsilon$ & $\epsilon$ & $3 - (n-2)r^3\epsilon - r\epsilon-2\epsilon$\\ 
			\hline
			$\mathbf{2}$ & $\epsilon$ & $\boxed{\epsilon}$ & $r^3\epsilon$ & $\cdots$ & $r^3\epsilon$ & $r^3\epsilon$ & $r\epsilon$ & $3-(n-2)r^3\epsilon-r\epsilon-2\epsilon$\\
			\hline
			$\mathbf{3}$ & $\epsilon$ & $\epsilon$ & $\boxed{\epsilon}$ & $\cdots$ & $r^3\epsilon$ & $r^3\epsilon$ & $r^3\epsilon$ &$3-(n-2)r^3\epsilon-3\epsilon$\\
			\hline
			$\cdots$ & $\cdots$ & $\cdots$ & $\cdots$ & $\cdots$ & $\cdots$  & $\cdots$ & $\cdots$ & $\cdots$\\
			\hline
			$\mathbf{n-1}$ & $\epsilon$ & $\epsilon$ & $\epsilon$ & $\cdots$ & $\boxed{\epsilon}$ & $r^3\epsilon$ & $r^3\epsilon$ & $3-2r^3\epsilon-(n-1)\epsilon$\\
			\hline
			$\mathbf{n}$ & $\epsilon$ & $\epsilon$ & $\epsilon$ & $\cdots$ & $\epsilon$ & $\boxed{\epsilon}$ &$\epsilon$ & $3-(n+1)\epsilon$\\
		\end{tabular}
    	\caption{Instance showing why item $e_{n+1}$ must be assigned to one of the agents in $\{1,2\}$.}\label{tab:General-Case2a}
    \end{table}
    
	Hence the algorithm must allocate item $e_{n+1}$ to agent $1$ or $2$, for which case we construct the instance with $n+3$ items shown in Table~\ref{tab:General-Case2b}.
	Note that for this instance we have
	\begin{equation*}
	    \MMS_1 = \MMS_2 = \MMS_3 = r^3\epsilon.
	\end{equation*}
	
	\begin{table}[htb]
       \centering
		\begin{tabular}{ c|c|c|c|c|c|c|c|c|c } 
			$\qquad$ & $e_1$ & $e_2$ & $e_3$ & $\cdots$ & $e_{n-1}$ & $e_n$ & $e_{n+1}$ & $e_{n+2}$ & $e_{n+3}$\\
			\hline
			$\mathbf{1}$ & $\boxed{\epsilon}$ & $r^3\epsilon$ & $r^3\epsilon$ & $\cdots$ & $r^3\epsilon$ & $r\epsilon$ & $\epsilon$ & $r^3\epsilon$ & $3 - (n-1)r^3\epsilon - r\epsilon-2\epsilon$\\ 
			\hline
			$\mathbf{2}$ & $\epsilon$ & $\boxed{\epsilon}$ & $r^3\epsilon$ & $\cdots$ & $r^3\epsilon$ & $r^3\epsilon$ & $r\epsilon$ & $r^3\epsilon$ & $3-(n-1)r^3\epsilon-r\epsilon-2\epsilon$\\
			\hline
			$\mathbf{3}$ & $\epsilon$ & $\epsilon$ & $\boxed{\epsilon}$ & $\cdots$ & $r^3\epsilon$ & $r^3\epsilon$ & $r^3\epsilon$ & $r^3\epsilon$ &$3-(n-1)r^3\epsilon-3\epsilon$\\
			\hline
			$\cdots$ & $\cdots$ & $\cdots$ & $\cdots$ & $\cdots$ & $\cdots$  & $\cdots$ & $\cdots$ & $\cdots$ & $\cdots$\\
			\hline
			$\mathbf{n-1}$ & $\epsilon$ & $\epsilon$ & $\epsilon$ & $\cdots$ & $\boxed{\epsilon}$ & $r^3\epsilon$ & $r^3\epsilon$ & $r^3\epsilon$ & $3-3r^3\epsilon-(n-1)\epsilon$\\
			\hline
			$\mathbf{n}$ & $\epsilon$ & $\epsilon$ & $\epsilon$ & $\cdots$ & $\epsilon$ & $\boxed{\epsilon}$ &$\epsilon$ & $\epsilon$ & $3-(n+2)\epsilon$\\
		\end{tabular}
    	\caption{Assume that item $e_{n+1}$ is allocated to one of the agents in $\{1,2\}$.}\label{tab:General-Case2b}
    \end{table}

	On the other hand, (no matter which agent receives item $e_{n+1}$) we have
	\begin{equation*}
	    v_1(X_1') \leq 2\epsilon, \quad 
	    v_2(X_2') \leq r\epsilon + \epsilon, \quad
	    v_3(X_3') = \epsilon.
	\end{equation*}
	Since only two items $e_{n+2}, e_{n+3}$ are not allocated, by Observation~\ref{observation:items-less-than-agents},
	the allocation is not $\gamma$-MMS.
    \end{proof}

\end{document}